\let\orgdescriptionlabel\descriptionlabel
\renewcommand*{\descriptionlabel}[1]{%
  \let\orglabel\label
 \let\label\@gobble
  \phantomsection
  \edef\@currentlabel{#1}%
 \let\label\orglabel
  \orgdescriptionlabel{#1}%
 }
\newcommand{\potentielL}[2]
{\draw [domain = 0 : {#2},samples =200 ] plot({\x+#1} ,{#2*#2- #2*#2*cos(2*\x*pi/#2 r)});
}
\DeclareMathOperator*{\argmin}{arg\,min}
\begin{document}

\numberwithin{equation}{section}

\theoremstyle{plain}
\newtheorem{theorem}{Theorem}
\newtheorem{lemma}[theorem]{Lemma}
\newtheorem{proposition}[theorem]{Proposition}
\newtheorem{corollary}[theorem]{Corollary}
\newtheorem{example}[theorem]{Example}

\theoremstyle{definition}
\newtheorem{definition}[theorem]{Definition}
\newtheorem{notation}[theorem]{Notation}
\newtheorem{assumption}[theorem]{Assumptions}
\newtheorem{hypotheses}[theorem]{Hypotheses}
\newtheorem{remark}[theorem]{Remark}

\newcommand{\osc}{\text{\rm var}}
\newcommand{\supp}{\text{\rm supp}}
\newcommand{\card}{\text{\rm card}}
\newcommand{\dist}{\text{\rm dist}}
\newcommand{\GS}{\Omega_{\textrm{\rm GS}}}

\title[Classification of discrete weak KAM solutions]{Classification of discrete weak KAM solutions on linearly repetitive quasi-periodic sets}

\author{Eduardo Garibaldi} 
\address{IMECC, Universidade Estadual de Campinas, 13083-859, Cam\-pi\-nas, Bra\-sil}
\email{garibaldi@ime.unicamp.br}

\author{Samuel Petite} 
\address{LAMFA,  CNRS, UMR 7352, Universit\'e de Picardie Jules Verne, 80 000 Amiens, France}
\email{samuel.petite@u-picardie.fr}

\author{Philippe Thieullen} 
\address{Institut de Math\'ematiques de Bordeaux, CNRS, UMR 5251, Universit\'e de Bordeaux, F-33405 Ta\-len\-ce, France}
\email{Philippe.Thieullen@u-bordeaux.fr}

\thanks{The authors acknowledge the financial support of FAPESP grant 2019/10485-8 and  ANR project IZES ANR-22-CE40-0011.}

\subjclass[2020]{Primary: 82D25, 82M30; Secondary: 37J51,  52C23}
\keywords{discrete weak KAM solution, linear repetitivity, calibrated configuration, ground state, minimizing configuration, Frenkel-Kontorova model} 
\maketitle

\begin{abstract}
A discrete weak KAM solution is a potential function that highlights the ground state configurations at zero temperature of an infinite chain of atoms interacting with a periodic or quasi-periodic substrate. It is well known that weak KAM solutions exist for periodic substrates as in the Frenkel-Kontorova model. Weak solutions may not exist in the almost periodic setting as in the theory of stationary ergodic Hamilton-Jacobi equations (where they are called correctors).  
For linearly repetitive quasi-periodic substrates, we show that equivariant interactions that fulfill a twist condition and a non-degenerate property always admit sublinear weak KAM solutions.
We moreover classify all possible types of weak KAM solutions and calibrated configurations according to an intrinsic prefered order. The notion of prefered order is new even in the classical periodic case.

\end{abstract}

\section{Introduction}
We consider a generalized model  of Frenkel-Kontorova type on the real line. The model describes the states at equilibrium of   chains of atoms  interacting with their nearest neighbors  and with an underlying one-dimensional substrate. The interaction between two successive atoms at the positions $(x_n,x_{n+1})$ has the general form $E(x_n,x_{n+1})$ for some continuous function $E : \mathbb{R} \times \mathbb{R} \to \mathbb{R}$,  called subsequently {\it interaction model}. 

The seminal Frenkel-Kontorova model \cite{FrenkelKontorova1938} was proposed to represent the dislocation in a crystal.
This simple model has appeared to be universal to describe several physical problems \cite{AubryLeDaeron1983,BraunKivshar2004}. 
Its standard version may be given as
\begin{gather}
E(x,y) = \frac{1}{2\tau} (y-x)^2 -\lambda(y-x) +\tau K V(x), \label{Equation:FrenkelKontorovaModel}
\end{gather}
where the constants $ \lambda, K \in \mathbb{R}$, $\tau>0$, are the parameters of the model. 
Here $E$ has the physical dimension of an action assuming the mass is equal to 1. 
The constant $\tau$ plays the role of a discretized time, 
$ \tau \lambda $ is a inter-distance between two successive atoms 
(a positive constant $\lambda>0$ forces the chain $(x_n)_{n\in\mathbb{Z}}$ to be increasing),   
$ K$ is a dimensionless constant that measures the strength of the interaction between one atom of the chain and the substrate,  
and $V$ is a periodic potential of unit size describing a periodic external environment.
Our main motivation is to understand interaction models of Frenkel-Kontorova type without assuming $ V $ periodic.

A central part of the theory consists in studying the set of configurations at the ground state, called hereinafter {\it Ma\~n\'e calibrated} configurations, that is  
the set of positions of the atoms in a chain $(x_n)_{n\in\mathbb{Z}}$  that minimize, in a sense to be defined,  the total action 
\[
\argmin_{(x_n)_{n \in \mathbb{Z}}} \ \sum_{n\in\mathbb{Z}} E(x_n,x_{n+1}).
\]
A key tool for this study is the notion of weak KAM solution at an appropriate action level. 
Let us recall these concepts.
The {\it ground action} or atomic mean action is a particular choice of action level defined as
\begin{gather}
\bar E := \lim_{n\to+\infty} \inf_{x_0, \ldots,x_n \in \mathbb{R}} \ \frac{1}{n} \sum_{k=0}^{n-1} E(x_k,x_{k+1}). \label{Equation:GroundActionLevel}
\end{gather}
It is finite by the choice of the interaction model we are going to make.
A continuous function $u:\mathbb{R} \to \mathbb{R}$ is called  \textit{weak KAM solution} if it satisfies
\begin{gather}
\begin{cases}
\forall \,x,y \in \mathbb{R}, \quad u(y)-u(x) \leq E(x,y)-\bar E, \\
\forall \, y, \,\,\, \exists \, x \quad \text{ s.t. } \quad u(y)-u(x) = E(x,y)-\bar E.
\end{cases} \label{Equation:DefinitionWeakKAMsolution}
\end{gather}

The relevance of these solutions has been highlighted by Fathi in~\cite{Fathi1997a,Fathi1997b,Fathi2010} in the context of the Hamilton-Jacobi equations on compact manifolds, 
and by Contreras, Iturriaga, Paternain, Paternain \cite{ContrerasIturriagaPaternain1998}, Con\-tre\-ras~\cite{Contreras2001}, Fathi, Maderna~\cite{FathiMaderna2007} for non compact manifolds. In~\cite{ArnaudZavidovique2023}, Arnaud and Zavidovique describe precisely the pseudograph set of $\frac{du}{dx}$ for exact symplectic twists maps of the annulus, where $E(x,y)$ is understood as a global generating function.
These works are influenced by the viewpoint proposed by Mather \cite{Mather1991, Mather1993} for studying minimizing orbits of Lagrangian systems. 

A discretization of weak KAM theory applied to optimal transportation also enables to relate deep results of existence of optimal transport maps \cite{FathiFigalli,BernardBuffoni}.
More recently, the discrete analogue of the Hamilton-Jacobi equations has been studied by many authors~-- 
see, for instance, \cite{Gomes2005, GaribaldiThieullen2011, Zavidovique2012_01}. 
In particular, it is known there exists a weak KAM solution $u$ for periodic interaction models, that is, for models satisfying 
\[\forall x,y \in \mathbb{R}, \quad E(x+1, y+1) = E(x,y), 
\]
as is the case of the standard Frenkel-Kontorova model 
(for details see~\cite{ChouGriffiths1986, GaribaldiThieullen2011}). Moreover, such a solution $u$ may be chosen periodic and hence bounded. 

Regarding the atomic interpretation,
an interest of weak KAM solution lies in the concept of  
 $u$-{\it calibrated} configuration $(x_n)_{n\in\mathbb{Z}}$, that is: 
 \begin{align}\label{eq:uCalibratedConf}
\forall \,m,n \in \mathbb{Z}, \ m<n, \qquad  \sum_{k=m}^{n-1} \left( E(x_k,x_{k+1}) - \bar E \right) = u(x_{n})-u(x_m).
\end{align}
A weak KAM solution $u$ plays the role of a fictitious potential needed to keep a finite block of atoms of a configuration in equilibrium by applying opposing forces at both ends, simulating the remaining action of the chain outside the block \cite{ChouGriffiths1986}.
However the effect of $u$ as a fictitious potential does not occur everywhere in space. A $u$-calibrated configuration enables to materialize these positions. 

A weaker notion of calibration may be defined which we call Ma\~n\'e calibration.
More precisely, the configuration $(x_n)_n$ is said to be \textit{Ma\~n\'e calibrated} if it satisfies $\forall \,m,n \in\mathbb{Z}$,  $n\geq1$,
\begin{align}\label{eq:ManeCalibratedConf}
 \sum_{k=m}^{m+n-1} \left(E(x_k,x_{k+1}) - \bar E \right) = S(x_m, x_{m+n}),
\end{align}
 where $S$ denotes the  \textit{Ma\~n\'e potential},
 \[
S(x,y) = \inf_{n\geq1} \inf_{x= x_0,\ldots, x_n=y} \sum_{k=0}^{n-1} \big( E(x_k,x_{k+1}) - \bar E \big).
 \]
In  words, the Ma\~n\'e potential between two sites measures the minimal reduced action necessary to go from one site to another. A Ma\~n\'e calibrated configuration is an infinite chain such that each finite sub-chain realizes the smallest reduced action between its two endpoints. 

We focus on the ground action $\bar E$, defined in~\eqref{Equation:GroundActionLevel}. Other calibration levels may be chosen as in~\cite{Zavidovique2012_01}, where weak KAM solutions are constructed for some non-explicit constants. In order to  stay in the class of weak KAM solutions that are sublinear at infinity, $\bar E$  is the only possible level (for details see appendix~\ref{nivel unico}). The interaction model $E$ is supposed to satisfy usual properties: $C^2$, twist, and superlinear as described more precisely in hypotheses~\ref{assumption:interactionModel}. 
Actually, our main results link sublinear growths of weak KAM solutions to a non-degeneracy condition:  $ \inf_{x \in \mathbb R} E(x,x) > \bar E $.  The complementary  case $\inf_{x \in \mathbb R} E(x,x) = \bar E$ is treated in section~\ref{periodic and linear}. Previous works assume the substrate to be periodic. The purpose of this article is to extend the theory of calibrated configurations and weak KAM solutions to quasi-periodic substrates. Here the interaction model $ E $ is supposed to be {\it equivariant with respect to a linearly repetitive quasi-periodic set} (see definitions~\ref{def:faussedefLR} and~\ref{invariancia de motivos}), a property which naturally appears when considering the classical examples of quasicrystals.
We classify thus all weak KAM solutions in three types depending of their growths at infinity: type I, the sublinear case, type II, the linear case,  and type III, a mix of these two growths.

\begin{theorem}\label{theo:mainThm0}
Any superlinear and weakly twist interaction $E$ fulfilling  $ \inf_{x} E (x, x)~>~\bar E $ always admits a Lipschitz weak KAM solution. 

If $ E $ is in addition pattern equivariant with respect to a linearly repetitive quasi-periodic set, then there is a selected half-space (either the positive or the negative reals), depending only on $E$, 
such that any weak KAM solution $ u $ belongs to one of  three non-empty disjoint classes,
namely, 
\begin{itemize}
\item type I:
$ u $ has a sublinear growth on the real line,
\item type II:
$ u $ has a linear growth on the real line, 
\item  type III:
$ u $ has a mixed linear/sublinear growth: $u$  grows sublinearly  on the  selected half-space  and  grows  linearly on the complementary  half-space.   
\end{itemize}
 Weak KAM solutions of type III are the only ones for which there is no bi-infinite calibrated configurations as in~\eqref{eq:uCalibratedConf}.
 
Besides, any two weak KAM solutions of the same type lie at uniform distance from each other.
\end{theorem}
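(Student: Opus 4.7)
My plan is to address the theorem in three stages: existence of a Lipschitz weak KAM solution, the trichotomy by growth type, and the uniform comparison of same-type solutions.

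For existence, I would start from the Lax-type inf-convolution operator
\[
(Tu)(y) := \inf_{x \in \mathbb{R}} \bigl[\, u(x) + E(x,y) - \bar E \,\bigr],
\]
whose fixed points are exactly the weak KAM solutions. Superlinearity guarantees the infimum is attained, and the non-degeneracy hypothesis $\inf_x E(x,x) > \bar E$ provides coercivity preventing optimal segments from remaining close to the diagonal for long, which combined with weak twist yields a uniform local Lipschitz estimate on $Tu$ independent of $u$. I would then take finite-horizon minimizers
\[
u_n(y) := \inf_{\substack{x_0,\ldots,x_n \in \mathbb{R} \\ x_n = y}} \sum_{k=0}^{n-1} \bigl( E(x_k,x_{k+1}) - \bar E \bigr),
\]
normalize by subtracting $u_n(0)$, and extract a locally uniformly convergent subsequence via Arzelà--Ascoli; the limit is Lipschitz and satisfies \eqref{Equation:DefinitionWeakKAMsolution}, the supersolution part being attained thanks to the same coercivity.

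For the classification, I would exploit pattern equivariance together with linear repetitivity to obtain a relatively dense set of approximate translational symmetries of $E$; these transfer to a quasi-invariance of any weak KAM solution $u$, so that $u(y+\tau)-u(y)$ behaves, up to bounded error, like an affine function of $\tau$ along such $\tau$. This is what produces well-defined asymptotic slopes $\alpha^\pm(u) := \lim_{y\to\pm\infty} u(y)/y$ on each half-line. The main technical step, and the one I expect to be the principal obstacle, is to prove that each $\alpha^\pm$ can take only two values, either $0$ or a common non-zero slope $\alpha_\star$ tied to $\bar E$ and the minimizing rotation number, and to rule out one of the four combinations $(\alpha^-,\alpha^+) \in \{0,\alpha_\star\}^2$: the three surviving options correspond exactly to types I, II, III. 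The intrinsic selected half-space I would identify through the asymmetry already visible in the $-\lambda(y-x)$ term of \eqref{Equation:FrenkelKontorovaModel}: coercivity and twist force calibrated configurations to drift preferentially toward one specific infinity, leaving the opposite side as the one on which linear growth is compatible with type III.

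The bi-infinite calibration claim should then follow structurally. Using coercivity and twist, I would construct backward and forward calibrated rays from an arbitrary base point; in types I and II these rays can be spliced into bi-infinite calibrated chains, whereas in type III the linear side blocks extension because a calibrated ray entering it must drift indefinitely with no admissible predecessor on the other side. Finally, for the uniform distance claim, given $u_1$ and $u_2$ of the same type, I would evaluate both along a calibrated configuration $(x_n)$ for $u_1$: the equality \eqref{eq:uCalibratedConf} for $u_1$ combined with the subsolution inequality in \eqref{Equation:DefinitionWeakKAMsolution} for $u_2$ yields that $u_2 - u_1$ is non-increasing along $(x_n)$, and reversing the roles (possible except on the linear side of a type III solution, where one still has a calibrated ray on the sublinear side) makes $u_2-u_1$ essentially constant along calibrated rays. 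Matching asymptotic slopes on the same sides forces this constant to remain globally bounded, and the weak KAM equations propagate the bound to all of $\mathbb{R}$.
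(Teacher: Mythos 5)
Your existence argument has a genuine gap at precisely the point this paper identifies as the main obstruction. Taking finite-horizon value functions $u_n(y)=\inf_{x_n=y}\sum_{k=0}^{n-1}\bigl(E(x_k,x_{k+1})-\bar E\bigr)$, normalizing at $0$ and extracting a locally uniform limit by Arzel\`a--Ascoli only yields the subsolution half of \eqref{Equation:DefinitionWeakKAMsolution}: the fixed-point property $T[u]=u+\bar E$ does not pass to the limit, because the subsequence along which $u_{n_k}-u_{n_k}(0)$ converges need not be compatible with the recursion $u_{n+1}=T[u_n]$, the normalizing constants $u_n(0)$ may drift, and in this non-compact, non-periodic setting there is no compactness modulo constants to repair this (this is exactly why correctors can fail to exist in the stationary ergodic Hamilton--Jacobi theory). ``Coercivity'' bounds where minimizers sit, but it does not make the limit a fixed point. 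The paper's route is different in an essential way: it first builds a bi-infinite Ma\~n\'e calibrated configuration with bounded jumps (this is where non-degeneracy and the weak twist condition are really used, via the monotonicity and jump estimates of proposition~\ref{Lemma:WeaklyTwistAdditionalProperties} and lemma~\ref{concordancia fundamentais e calibradas}), and then runs a localized Lax--Oleinik operator on exhausting intervals with boundary data given by the Ma\~n\'e potential anchored to that configuration, obtaining fixed points by Schauder--Tychonoff; the calibration of the prescribed configuration is what guarantees the second condition in \eqref{Equation:DefinitionWeakKAMsolution} for the limit.

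The classification part also rests on claims you have not justified and which are stronger than what is true. A weak KAM solution is not pattern equivariant, so the ``quasi-invariance'' of $u$ under approximate translations, and with it the existence of well-defined limits $\alpha^\pm(u)=\lim_{y\to\pm\infty}u(y)/y$ taking only two values, is unsupported (the paper only proves one-sided linear bounds with some $\gamma>0$ for types II and III, not exact slopes). Likewise, the selected half-space cannot be read off a $-\lambda(y-x)$ term, which is absent from the general model: in the paper it is defined by the common orientation of all sufficiently long fundamental configurations, and the whole trichotomy is driven by proposition~\ref{Proposition:FundamentalConfiguration} --- the Ma\~n\'e potential grows sublinearly in the preferred direction and at least linearly against it --- whose proof is where linear repetitivity is genuinely used (the covering of the line by translated copies of a long fundamental pattern and the bootstrap sending $\alpha_*$ to $0$). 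Without that dichotomy you have no mechanism to exclude the forbidden asymptotic behaviors, to prove lemma~\ref{naoaninhados}-type non-coexistence of oppositely ordered calibrated rays, or to show type III solutions admit no bi-infinite calibrated configuration; the construction of a type III solution as $\min\{v_I,v_{II}\}$ is also missing from your sketch. Only your last step (propagating an inequality between two solutions along calibrated configurations and using the uniform Lipschitz bound) is essentially the paper's argument for proposition~\ref{distancia entre solucoes}, although the conclusion comes from propagation from a bounded window in both directions rather than from matching asymptotic slopes.
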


Our framework includes significant  and classical quasicrystals such as Fibonacci, substitutive ones or  typical cut-and-project quasicrystals~\cite{Alisteetal}. 
One-dimensional quasicrystals are essentially discrete sets for which any finite pattern repeats  in  space in a syndetic manner and with some pattern-dependent density. 
For models on quasicrystals, the atoms  of the chain are supposed to interact with an underlying aperiodic substrate, a structure having no translational symmetry but exhibiting a long-range order~\cite{Shechtmannetal}. 
This context is analogous but more rigid than the almost-periodic scenery (e.g. a Frenkel-Kontorova model with a sum of incommensurable periodic functions as  potential  $V$). A major difference between the two models is that, in our setting, the interaction reaches its minimum. This avoids pathological cases.

Actually this quasi-periodic context fall into the framework of  a topological stationary setting (see \cite{LionsSouganidis2003,GambaudoGuiraudPetite2006,GaribaldiPetiteThieullen2017}). 
Initial studies  (see, for instance, \cite{GambaudoGuiraudPetite2006,Trevino, JianxingXifeng}) showed that some classical properties of the periodic case still hold when models on quasicrystals are taken into account, in particular the links between the minimizing configurations and their rotation numbers. Notably, unbounded  Ma\~n\'e calibrated configurations do exist as shown in \cite{GaribaldiPetiteThieullen2017}. However all these previous works did not answer to the existence problem of a weak  KAM solution.

Our proof techniques remain mainly in the territory of classical analysis.  Partial results where obtained in  \cite{GaribaldiPetiteThieullen2017} using ergodic tools and Mather measures. The present work is actually independent  from \cite{GaribaldiPetiteThieullen2017}; it gives a simpler proof of the existence of calibrated configuration, and above all, it constructs and classifies all weak KAM solutions. 

In the Hamilton-Jacobi framework, a way to obtain weak KAM solutions on a non-compact space  pass through the use of Busemann functions associated with an appropriate potential as in  \cite{Contreras2001}. However,  this method adapted to the present context  does not  guarantee the existence of a discrete  weak KAM solution, since the second condition  in \eqref{Equation:DefinitionWeakKAMsolution} is not \textit{a priori} satisfied. We overcome this problem by studying a Lax-Oleinik operator on a suitable space of functions associated to a bi-infinite Ma\~n\'e-calibrated configuration. The properties of such configurations leads to the classification. A precise description of these relations and the classification is given in theorem~\ref{theo:mainThm},  providing qualitative criteria to distinguish to which case a weak KAM solution belongs.

In the next section, we detail our assumptions and more rigorously formulate our central results. Note that theorem~\ref{theo:mainThm0} also provides a classification on the asymptotic behavior of weak KAM solutions  in the periodic case, as summarized in theorem \ref{theo:PeriodicCase}. At our knowledge, this is a new result in this classical setting.

\section{Main results}

We consider a general interaction model $E(x,y)$ (in which each variable describes a position on the real line).

\begin{hypotheses} \label{assumption:interactionModel}
An interaction model is a $C^0$ function $E:\mathbb{R} \times \mathbb{R} \to \mathbb{R}$ that fulfills the following three conditions
\begin{enumerate}
\item[(H1)] $E(x,y)$ is \textit{locally uniformly bounded} in the sense that 
\[
\inf_{x,y \in \mathbb{R}} E(x,y) >-\infty, \qquad \sup_{x \in \mathbb R} E(x,x) <+\infty,
\]
\item[(H2)] $E(x,y)$ is \textit{locally uniformly Lipschitz} 
for every $R>0$, there exists a constant $C_\text{Lip}^E(R) >0$ such that, for every $x,y,z \in\mathbb{R}$,
\begin{align*}
&\text{if  $|y-x| \leq R, \ |z-x| \leq R$   then   $|E(x,z)-E(x,y)| \leq C_\text{Lip}^E(R) |z-y|$,} \\
&\text{if  $|z-x| \leq R, \ |z-y| \leq R$   then  $|E(y,z) - E(x,z)| \leq C_\text{Lip}^E(R) |y-x|$,}
\end{align*}
\item[(H3)] $E(x,y)$ is \textit{uniformly superlinear}:
\[
\lim_{R \to +\infty} \ \inf_{|y-x| \geq R} \frac{E(x,y)}{|y-x|} = +\infty.
\]
\end{enumerate}
An interaction model $ E(x, y) $ is said to be \textit{weakly twist} if it is a $ C^2 $ function such that
\begin{enumerate} 
\item[(H4)]
$ \forall \, x \in \mathbb R $, $ {\displaystyle \frac{\partial^2 E}{\partial x \partial y} (x, \cdot) < 0} $  a.e., and
$ \forall \, y \in \mathbb R $, $ {\displaystyle \frac{\partial^2 E}{\partial x \partial y} (\cdot, y) < 0} $ a.e.
\end{enumerate}
\end{hypotheses}

Note that the hypothesis (H4) is slightly more general than the usual twist condition stated for any $x,y$ and not only for almost every $x,y$. This gives us the advantage to treat, for instance, a model of the form $E(x,y) = W(y-x) - \lambda(y-x) +V(x)$ with $W(s)= s^4/4$. 

Here we are interested in models that take into account the interaction between 
the atoms of the chain and the underlying environment that will be modeled by a quasi-periodic substrate $ \omega $.
In particular, we focus on \textit{pattern equivariant} interactions, a notion that not only captures the dependence with respect to the quasi-periodic environment, but also introduces the concept of short-range interaction in this context.

By a one-dimensional quasi-periodic set  
we mean a discrete set $\omega \subset \mathbb{R}$
which has  \textit{finite local complexity} and  is  \textit{repetitive}.  
To introduce these notions, we will need the one of \textit{pattern}, namely, a set  $\texttt{P}$ of the form $\omega \cap I$ for some bounded open interval $I$. Two patterns $\texttt{P}$ and $\hat{\texttt{P}}$  are  \textit{equivalent} whenever one is the translated of the other one, that is: $\texttt{P} + t = \hat{\texttt{P}}$ for some $t\in \mathbb{R}$. 
A discrete set $\omega \subset \mathbb{R}$ is said to be \textit{quasi-periodic} when the following properties are satisfied:
 \begin{description}
    \item[finite local complexity\label{item:FLC}] the  set $\omega$ possesses only finitely many  equivalence classes of patterns of cardinality $2$;
    
    \item[repetitivity\label{item:Repet}] for any $R>0$, there is a constant $M(R)$ such that for any open interval $I$ of length  at least $M(R)$, $\omega \cap I$ contains a representative from each class of patterns of diameter less than~$R$.
 \end{description}

In particular, we observe that the finite local complexity  condition implies the quasi-periodic set is (uniformly) discrete. 
Moreover, $ \omega $, as a set, is unbounded from above and below by the  repetitivity condition. By this condition, each type of  pattern occurs infinitely  many times along the real line with uniformly bounded gaps between the occurences. 
A very representative class of quasi-periodic sets is formed by quasicrystals. Such  sets  are quasi-periodic ones with an additional density property on the occurences of the patterns (see appendix \ref{Nao degenerescencia}). 
The repetitivity condition  can be interpreted as a weak homogeneity property  in a   topological sense. For a  dynamical explanation, see~\cite{GambaudoGuiraudPetite2006}. Of course periodic lattices are quasi-periodic sets, but there also exist aperiodic examples ({\it i.e.} that are invariant under no translation). The simplest ones are constructed by iteration of a procedure so called {substitution}, or in a geometrical way by a cut and project scheme (see \cite{Alisteetal,Queffelec1987}). The Fibonacci quasicrystal is a famous one that can be obtained by both methods. 
Note that in the substitutive case, the parameter $M(R)$ of the repetitivity condition can be taken with a growth at most linear in $R$. Such quasi-periodic sets are called \textit{linearly repetitive}, in the following sense.
\begin{definition}\label{def:faussedefLR}
We say that a quasi-periodic  set $ \omega $ is \textit{linearly repetitive}  if the repetitivity parameter $ M(R) $ has at most  linear growth  as a function of the upper bound $ R $ for pattern diameters.
\end{definition}

Most of quasicrystals obtained by cut and project are linearly repetitive. 
In the geometrical, combinatorial and dynamical  senses,  they are the simplest examples of aperiodic  quasicrystals. 
We refer to~\cite{Alisteetal} for a survey of their properties. 

A quasi-periodic set $ \omega $ on the real line models the underlying substrate to be considered.
We now describe the kind of interaction energy $ E $ between the chain and the substrate we are interested in.
We say that an interaction potential $ V $ is pattern equivariant with respect to $ \omega $ if two potentials are the same,
$ V(x) = V(y) $, at two distinct positions $ x \neq y $ provided the relative structures $ \omega - x $ and $ \omega - y $ coincide locally.
We generalize that idea to interaction energies in the following definition.

\begin{definition}\label{invariancia de motivos}
We say that an interaction  $ E(x, y) $ is \textit{pattern equivariant} with respect to the quasi-periodic set $ \omega $ if
there exists $ \varsigma_0 > 0 $ such that for patterns $ \texttt{P} $ (of diameter greater than $ 2\varsigma_0 $), whenever $  \texttt{P} +  t  $  
is again a pattern of $ \omega $, 
$$ E(x, y) = E(x + t, y + t) \qquad \forall \, x, y \in \big[\min  \texttt{P} + \varsigma_0, \, \max  \texttt{P} - \varsigma_0 \big]. $$
\end{definition}
We refer the reader to~\cite{GambaudoGuiraudPetite2006, GaribaldiPetiteThieullen2017}  and to appendix~\ref{Nao degenerescencia} for examples of pattern equivariant interactions.
We precise our core results (theorem~\ref{theo:mainThm0}) in the following statement.
For a weakly twist interaction model that is pattern equivariant with respect to a linearly repetitive quasi-periodic set, 
we not only show the existence but we completely classify all possible types of weak KAM solutions. 

\begin{theorem}\label{theo:mainThm}
Let $ E $ be an interaction fulfilling the assumptions (H1-4) of hypotheses~\ref{assumption:interactionModel}.
Suppose that $ \inf_{x} E (x, x) > \bar E $. Then there exist positive constants $K$ and $ r < R$ such that the following holds.
\begin{enumerate}
\item There exist Ma\~n\'e calibrated configurations. All Ma\~n\'e calibrated configurations are strictly monotone and satisfy
\begin{gather*}
\forall\, k \in \mathbb{Z}, \ r \leq | x_{k+1} - x_k | \leq R.
\end{gather*}
\item There exist weak KAM solutions. Every weak  KAM solution $u$ is Lipschitz with $\text{\rm Lip}(u) \leq K$ and satisfies
\begin{gather*}
\forall \,  y \in \mathbb{R}, \quad \argmin \{  u(\cdot) + E(\cdot,y)\} \subset  [y-R, y+R]. 
\end{gather*} 
\end{enumerate}
Assume moreover that $ E $ is pattern equivariant with respect 
to a linearly repetitive quasi-periodic set. Then there exists $\gamma>0$ such that the following holds.
\begin{enumerate}
\addtocounter{enumi}{2}
\item There exists a preferred ordering of $\mathbb{R}$  
($\epsilon=1$ for the standard ordering, $\epsilon=-1$ for the reversed one) such that
every weak KAM solution $u$ belongs to one of the following three types:
\begin{itemize}
\item type I: every $u$-calibrated configuration $(x_n)_{n\in\mathbb{Z}}$ is such that  $(\epsilon x_n)_{n\in\mathbb{Z}}$ is increasing, and
$$ \lim_{x\to \pm\infty} \frac{u(x)}{x} = 0; $$
\item type II:  every $u$-calibrated configuration $(x_n)_{n\in\mathbb{Z}}$ is such that  $(\epsilon x_n)_{n\in\mathbb{Z}}$ is decreasing, and
$$ 
\limsup_{x\to + \infty} \frac{u(\epsilon x)}{|x|} \leq -\gamma, \quad \liminf_{x\to-\infty} \frac{u(\epsilon x)}{|x|} \geq \gamma; $$
\item type III:  there is no bi-infinite $u$-calibrated configuration  and 
$$ \limsup_{x\to + \infty} \frac{u(\epsilon x)}{|x|} \leq -\gamma, \quad  \lim_{x\to-\infty} \frac{u(\epsilon x)}{|x|} = 0. $$
\end{itemize}
\item There exist  weak KAM solutions $u$ of the three types previously described. In type I and II, there exist $u$-calibrated configurations. In type III, there is no $u$-calibrated configurations.
\item Any two weak KAM solutions $ u $ and $ v $ of the same type lie at uniform distance from each other: $\sup_x |u(x) - v(x) | < +\infty $.
\end{enumerate}
\end{theorem}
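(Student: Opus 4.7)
The plan is to proceed in three stages: establish the uniform bounds and existence of Ma\~n\'e calibrated configurations, construct weak KAM solutions via a renormalized Lax--Oleinik iteration, and then analyze backward calibrated orbits to produce the three-type classification.

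For (i)--(ii), finiteness of $\bar E$ follows from (H1). The lower bound $r$ on jumps of any near-minimizing segment comes from combining the non-degeneracy $\inf_x E(x,x)>\bar E$ with the local Lipschitz property (H2): if $|x_{k+1}-x_k|$ were arbitrarily small, continuity would force $E(x_k,x_{k+1})$ close to $E(x_k,x_k)$, strictly exceeding the per-step average $\bar E$. The upper bound $R$ is a direct consequence of superlinearity (H3). A finite-segment minimization followed by a diagonal extraction then produces a bi-infinite Ma\~n\'e calibrated configuration. For any weak KAM solution $u$, the argmin of $u(\cdot)+E(\cdot,y)$ must lie in $[y-R,y+R]$ by the same superlinearity estimate, and (H2) immediately yields Lipschitz continuity with constant $K=C_\text{Lip}^E(R)$.

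For the construction of weak KAM solutions, I would study the Lax--Oleinik operator $Tu(y)=\inf_x[u(x)+E(x,y)-\bar E]$ on a function space renormalized along a fixed bi-infinite Ma\~n\'e calibrated backbone $(\xi_n)$. Setting $u_n=T^nu_0-(T^nu_0)(\xi_0)$, the calibrated property pins down the values at the $\xi_k$, while pattern equivariance together with linear repetitivity propagates these controls uniformly on compact sets. The uniform $K$-Lipschitz bound from the previous step gives equicontinuity, and Arzel\`a--Ascoli produces a subsequential limit, which is a weak KAM solution. Varying the seed $u_0$ or the reference backbone is what eventually produces solutions of different types. For any weak KAM solution $u$, iterating the selection of an $x$ attaining equality in \eqref{Equation:DefinitionWeakKAMsolution} yields a backward $u$-calibrated orbit through every $y$; the weak twist (H4) forces strict monotonicity, so its orientation $\eta\in\{\pm 1\}$ is well defined, and a standard non-crossing argument for calibrated sequences ensures that $\eta$ depends only on $u$, not on the starting point $y$.

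For (iii)--(v), the preferred order $\epsilon$ is the orientation of those $u$-calibrated orbits that admit a bi-infinite extension within the Ma\~n\'e calibrated family; its dependence on $E$ and $\omega$ alone rests again on non-crossing. The three types arise by asking whether the backward $u$-calibrated orbits go in the preferred direction (type I), against it (type II), or cannot be extended bi-infinitely (type III). The growth dichotomy comes from evaluating $u$ along such orbits via $u(x_n)-u(x_0)=\sum_{k=0}^{n-1}(E(x_k,x_{k+1})-\bar E)$: Ma\~n\'e calibrated orbits give vanishing Birkhoff-type averages on both sides thanks to pattern equivariance and linear repetitivity, hence sublinearity of $u$; anti-preferred orbits instead accumulate a strictly positive reduced-action gap $\gamma>0$, whose positivity uses linear repetitivity to rule out arbitrarily cheap reverse configurations. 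The comparison statement (v) then follows because pattern equivariance and linear repetitivity propagate local closeness to uniform closeness on sublinear sides, while on the linearly decaying side both solutions share the same leading rate $-\gamma$. The hardest step I anticipate is twofold: engineering the correct renormalization in the Lax--Oleinik construction (a priori, the iterates may drift uniformly to $\pm\infty$, and controlling this drift is where the linearity of the repetitivity function $M(R)$ is essential), and proving strict positivity of $\gamma$, which requires a quantitative non-crossing estimate combined with linear repetitivity to lower-bound the cost gap between direct and reverse calibrated configurations.
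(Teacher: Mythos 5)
Your overall architecture (calibrated backbone, Lax--Oleinik, classification by orientation and by growth of the Ma\~n\'e potential) parallels the paper, but several steps as written do not close. The most serious is the construction of weak KAM solutions: taking $u_n = T^n u_0 - (T^n u_0)(\xi_0)$ and extracting a limit by Arzel\`a--Ascoli does not show that the limit is a weak KAM solution --- a subsequential limit of normalized iterates need not be a fixed point of $T$ at the level $\bar E$, and in particular the second (equality) condition in \eqref{Equation:DefinitionWeakKAMsolution} is exactly what may fail; this is the obstruction the paper flags explicitly and circumvents by a different device, namely localized operators $T_N$ on exhausting intervals $[x_{i_N}, x_{j_N}]$ with the \emph{prescribed} boundary datum $S(x_{i_N-2},\cdot)$, invariance of a compact convex class $\mathcal H_N$ pinned to the backbone, Schauder--Tychonoff, and only then a diagonal limit (theorem~\ref{theorem:atMostLinearWeakKAMsolution}). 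Note also that this existence step needs neither pattern equivariance nor linear repetitivity, contrary to where your sketch invokes them. A second genuine error: your claim that the orientation $\eta$ of backward $u$-calibrated orbits ``depends only on $u$, not on the starting point $y$'' is false precisely for type III solutions, where backward orbits from far on one side are increasing and from far on the other side are decreasing (this is the content of lemma~\ref{naoaninhados} and proposition~\ref{Proposition:WeakKAMsolutionClassification}); relatedly, ``varying the seed or the backbone'' can only produce solutions calibrating a bi-infinite configuration, i.e.\ types I and II, so your plan contains no mechanism for item (iv)'s type III solutions --- the paper obtains them as $\min\{v_I, v_{II}\}$ of suitably normalized type I and type II solutions.

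Several further steps are asserted rather than proved, and the heuristics offered are not sufficient. The lower bound $r$ on jumps does not follow from one-step continuity (a calibrated pair may well have $E(x_k,x_{k+1})-\bar E>0$); one needs the two-step identity $S(x_k,x_{k+1}) = S(x_k,x_{k+2})-S(x_{k+1},x_{k+2}) \le E(x_k,x_{k+2})-E(x_{k+1},x_{k+2}) \le C_\text{Lip}^E(2R)\,|x_{k+1}-x_k|$ against the non-degeneracy threshold $\eta_0$, as in proposition~\ref{Lemma:WeaklyTwistAdditionalProperties}. The growth dichotomy of the Ma\~n\'e potential --- sublinear in the preferred direction, $\ge \gamma|y-x|-\delta$ against it --- is the technical heart and is only asserted in your sketch; in the paper the sublinearity is where linear repetitivity really enters (the bootstrap on $\alpha_*$ using translated copies of long fundamental configurations with gaps controlled by $M(\ell)\le A\ell+B$, proposition~\ref{Proposition:FundamentalConfiguration}), while the positivity of $\gamma$ comes from non-degeneracy plus the Aubry crossing lemma applied to a loop closed by an increasing calibrated segment, not from repetitivity ``ruling out cheap reverse configurations.'' Finally, for item (v), two solutions sharing ``the same leading rate $-\gamma$'' can still differ by an unbounded sublinear amount; the paper instead propagates the inequality $u(x_k)\ge v(x_k)+c \Rightarrow u(x_{k+1})\ge v(x_{k+1})+c$ along a calibrated configuration using the sub-action property, then fills the gaps with the uniform Lipschitz bound (proposition~\ref{distancia entre solucoes}).
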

To illustrate theorem~\ref{theo:mainThm}, the figures~\ref{Fig:possible_1}, \ref{Fig:possible_2}, \ref{Fig:possible_3}, present different possible asymptotic behaviors of weak KAM solutions. 

\begin{figure}[h]
\begin{tikzpicture}[scale=0.7]
{\draw [domain = -5 : 0,samples =100,dashed  ] plot({\x} ,{0.5*ln(-\x+1) } );
}
{\draw [domain = -5 : 0,samples =100,dashed  ] plot({\x} ,{-0.5*ln(-\x+1) } );
}
{\draw [domain = -5 : 0, samples = 200, red  ] plot( {\x} ,{0.5*sin(\x*\x*\x* pi * 3)* ln(-\x+1) } );
}

{\draw [domain = 0 : 5,samples =100,dashed  ] plot({\x} ,{0.5*ln(\x+1) } );
}
{\draw [domain = 0 : 5,samples =100,dashed  ] plot({\x} ,{-0.5*ln(\x+1) } );
}
{\draw [domain = 0 : 5, samples = 200, red  ] plot( {\x} ,{0.5*sin(\x*\x*\x* pi * 3)* ln(\x+1) } );
}
\draw[line width=1pt,->](-5,0)--(5,0);
\draw[ line width=0.5pt , ->](0,-0.5)--(0,1); 
\end{tikzpicture}
\caption{Type I: A sublinear growth at $\pm \infty$.}  \label{Fig:possible_1}
\end{figure}

\begin{figure}[h]
\begin{tikzpicture}[scale=0.4, xshift = -2cm]

\begin{scope}[xshift=-7cm]
{\draw [domain = -4.3 : 0,samples =200, red ] plot({-\x} ,{2.7/5*\x +0.75*cos(\x*\x*\x* pi*3 )-0.75});
}
{\draw [domain = 0 : 4.3,samples =200, red ] plot({-\x} ,{5/5*\x +0.75*cos(\x*\x*\x* pi*3 )-0.75});
}

\draw[line width=1pt,->](-4,0)--(4,0);
\draw[ line width=0.5pt , ->](0,-0.5)--(0,2.5); 
\draw[ line width=0.5pt ,dashed](0,0)--(-4,2);
\draw[ line width=0.5pt ,dashed](0,0)--(4,-2);
\end{scope} 
 
\node (explication2.1) at (-5,4) {$\epsilon =1$};
\node (explication2.2) at (6,4) {$\epsilon =-1$}; 

\begin{scope}[xshift=8cm]
{\draw [domain = -4.3 : 0,samples =200, red ] plot({-\x} ,{-2.7/5*\x- 0.75*cos(\x*\x*\x* pi*3 )+0.7});
}
{\draw [domain = 0 : 4.3,samples =200, red ] plot({-\x} ,{-5/5*\x -0.75*cos(\x*\x*\x* pi*3 )+0.7});
}
\draw[line width=1pt,->](-5,0)--(5,0);
\draw[ line width=0.5pt , ->](0,-0.5)--(0,2.5); 
\draw[ line width=0.5pt ,dashed](0,0)--(-5,-2.5);
\draw[ line width=0.5pt ,dashed](0,0)--(5,2.5);
\end{scope}


\end{tikzpicture}
\caption{Type II: On the left hand side, a growth at $\pm\infty$  at least negative. On the right hand side, a growth at $\pm\infty$  at least positive.}  \label{Fig:possible_2}
\end{figure}

\begin{figure}[h]
\begin{tikzpicture}[scale=0.6,xshift= -2cm]]
\begin{scope}[xshift= -2cm]
{\draw [domain = -4 : 0,samples =200, red ] plot({-\x} ,{0.4*\x +0.75*cos(\x*\x*\x* pi*3 )-0.7});
}
{\draw [domain = 0 : 4,samples =100,dashed  ] plot({-\x} ,{0.5*ln(\x+1) } );
}
{\draw [domain = 0 : 4,samples =100,dashed  ] plot({-\x} ,{-0.5*ln(\x+1) } );
}
{\draw [domain = 0 : 4, samples = 100, red  ] plot( {-\x} ,{0.5*sin(\x*\x*\x* pi * 3)* ln(\x+1) } );
}
\draw[line width=1pt,->](-4,0)--(4,0);
\draw[ line width=0.5pt , ->](0,-0.5)--(0,1.5); 
\draw[ line width=0.5pt ,dashed](0,0)--(4,-1.52);
\end{scope}

\node (explication2.1) at (0,2) {$\epsilon =1$};
\node (explication2.2) at (6,2) {$\epsilon =-1$}; 

\begin{scope}[xshift= 8cm]
{\draw [domain = -4 : 0,samples =200, red ] plot({\x} ,{0.4*\x +0.75*cos(\x*\x*\x* pi*3 )-0.7});
}
{\draw [domain = 0 : 4,samples =100,dashed  ] plot({\x} ,{0.5*ln(\x+1) } );
}
{\draw [domain = 0 : 4,samples =100,dashed  ] plot({\x} ,{-0.5*ln(\x+1) } );
}
{\draw [domain = 0 : 4, samples = 200, red  ] plot( {\x} ,{0.5*sin(\x*\x*\x* pi * 3)* ln(\x+1) } );
}
\draw[line width=1pt,->](-4,0)--(4,0);
\draw[ line width=0.5pt , ->](0,-0.5)--(0,1.5); 
\draw[ line width=0.5pt ,dashed](0,0)--(-4,-1.52);
\end{scope}
\end{tikzpicture}

\caption{On the left hand side,  a sublinear growth at $- \infty$ and a growth at $+\infty$ at least negative. On the right hand side, a sublinear growth at $+ \infty$ and a growth at $-\infty$ at least positive.} \label{Fig:possible_3}

\end{figure}

At the opposite, the figures~\ref{Fig:impossible_1}, \ref{Fig:impossible_2}, \ref{Fig:impossible_3},  present impossible asymptotic behaviors for weak KAM solutions.

\begin{figure}[h]
\begin{tikzpicture}[scale=0.5]
{\draw [domain = -5 : 0,samples =200, red ] plot({\x} ,{-0.4*\x -0.75*cos(\x*\x*\x* pi*3 )+0.7});
}
{\draw [domain = 0 : 5,samples =100,dashed  ] plot({\x} ,{0.5*ln(\x+1) } );
}
{\draw [domain = 0 : 5,samples =100,dashed  ] plot({\x} ,{-0.5*ln(\x+1) } );
}
{\draw [domain = 0 : 5, samples = 200, red  ] plot( {\x} ,{0.5*sin(\x*\x*\x* pi * 3)* ln(\x+1) } );
}
\draw[line width=1pt,->](-5,0)--(5,0);
\draw[ line width=0.5pt , ->](0,-0.5)--(0,2.5); 
\draw[ line width=0.5pt ,dashed](0,0)--(-5,1.9);
\end{tikzpicture}

\caption{At least (negative) linear growth at $-\infty$ and sublinear growth a $+\infty$.  The symmetric case with respect to the vertical axe is also impossible.} \label{Fig:impossible_1}
\end{figure}

\begin{figure}[h]
\begin{tikzpicture}[scale=0.4]
{\draw [domain = 0 : 5,samples =100,dashed  ] plot({\x} , {0.35*\x} );
}
{\draw [domain = 0 : 5,samples =100,dashed  ] plot({\x} ,{-0.51*\x } );
}
{\draw [domain = -1 : 0,samples =50, red ] plot({\x} ,{-0.4*\x -0.75*cos(\x*\x*\x* pi*3 )+0.7});}
{\draw [domain = 0 : 5, samples = 200, red  ] plot( {\x-0.2} ,{-0.08*\x+ 0.5*sin(\x*\x*\x* pi * 3)* (0.8* \x )  ) } );
}
\draw[line width=1pt,->](-0.7,0)--(5.2,0);
\draw[ line width=0.5pt , ->](0,-0.5)--(0,1.8); 
\end{tikzpicture}

\caption{The  limits inferior and superior of linear growth at $+\infty$ (or at $-\infty$)  have different signs.} \label{Fig:impossible_2}
\end{figure}

\begin{figure}[h]

\begin{tikzpicture}[scale=0.5]
{\draw [domain = -5 : 0,samples =200, red ] plot({\x} ,{-0.4*\x -0.75*cos(\x*\x*\x* pi*3 )+0.7});
}
{\draw [domain = -5 : 0,samples =200, red ] plot({-\x} ,{-0.4*\x -0.75*cos(\x*\x*\x* pi*3 )+0.7});
}

\draw[line width=1pt,->](-5,0)--(5,0);
\draw[ line width=0.5pt , ->](0,-0.5)--(0,2.5); 
\draw[ line width=0.5pt ,dashed](0,0)--(-5,1.9);
\draw[ line width=0.5pt ,dashed](0,0)--(5,1.9);
\end{tikzpicture}

\caption{At least negative (resp. positive)  linear growth at $-\infty$ (resp. $+\infty$). The symmetric case with respect to the horizontal axe is also impossible.} \label{Fig:impossible_3}
\end{figure}

We provide families of interactions fulfilling the hypotheses of this theorem in appendix \ref{Nao degenerescencia}.
Assumption $ \inf_x  E (x, x)~>~\bar E $ is a non-degeneracy hypothesis. This condition roughly indicates that 
a chain formed by atoms very close to each other cannot be strongly  minimizing,
so that Ma\~n\'e calibrated configurations have an intrinsic interspacing.

The above theorem has implications for the periodic case. As already mentioned, the existence of periodic (and therefore bounded) weak KAM solutions is well established. To the best of our knowledge, however, weak KAM solutions with linear or mixed behavior in the periodic context are not reported in the literature. 

\begin{theorem}\label{theo:PeriodicCase}
For a weakly twist and periodic interaction model $ E $ such that  $ \inf_{x} E (x, x) > \bar E $,
there are exactly three types of weak KAM solutions: those that are bounded,   those that have linear growth along the real line, and  those that are simultaneously   bounded
on a selected half-line and have linear growth along the complementary half-line.

Moreover, two weak KAM solutions belonging to the same class remain at a uniform distance from each other along the real line.
\end{theorem}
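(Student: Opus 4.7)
My plan is to deduce Theorem~\ref{theo:PeriodicCase} directly from Theorem~\ref{theo:mainThm}. A periodic lattice $\alpha\mathbb{Z}$ is a linearly repetitive quasi-periodic set (finite local complexity is immediate and $M(R) = R + \alpha$ is linear), and a periodic interaction satisfying $E(x+\alpha, y+\alpha) = E(x, y)$ is automatically pattern equivariant with respect to $\alpha\mathbb{Z}$ in the sense of Definition~\ref{invariancia de motivos}. Under $\inf_x E(x,x) > \bar E$, Theorem~\ref{theo:mainThm} therefore supplies the three non-empty classes I, II, III, and its assertion (v) is precisely the uniform-distance statement of Theorem~\ref{theo:PeriodicCase}. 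It only remains to improve the \emph{sublinear} conclusions of parts (iii)--(iv) to \emph{boundedness} on the appropriate half-lines.

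The key additional ingredient is the classical existence of an $\alpha$-periodic weak KAM solution $u_p$, recalled in the introduction from~\cite{ChouGriffiths1986, GaribaldiThieullen2011}. Being continuous and periodic, $u_p$ is bounded and hence of type I, so assertion (v) applied to any type I weak KAM solution $u$ yields $\sup_{\mathbb{R}}|u - u_p| < +\infty$ and $u$ is bounded. For type II, the $\limsup$/$\liminf$ estimates of part (iii) combined with the global Lipschitz bound of part (ii) produce linear growth along the full real line with no further effort.

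The delicate case is type III, where one must show boundedness on the sublinear half-line. I would compare $u$ with $u_p$ along calibrated configurations. Writing the weak KAM inequality for $u$ at a backward $u_p$-calibrating pair $(x_{-1}, y)$ gives $u(y) - u(x_{-1}) \le E(x_{-1}, y) - \bar E = u_p(y) - u_p(x_{-1})$, that is $(u - u_p)(y) \le (u - u_p)(x_{-1})$. Iterating along the backward $u_p$-calibrated configuration $(x_{-n})_{n \ge 0}$---which is bi-infinite by assertion (iv) applied to the type I solution $u_p$, strictly monotone with step sizes in $[r, R]$ by part (i), and escapes monotonically to the sublinear infinity since the preferred direction $\epsilon$ is shared by all weak KAM solutions---yields $u(x_{-n}) \ge u(y) - 2\|u_p\|_\infty$. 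Combined with the Lipschitz constant $K$ of part (ii), this transfers into the uniform lower bound $u(\cdot) \ge u(y) - 2\|u_p\|_\infty - KR$ on the entire sublinear half-line. The symmetric comparison, now using a backward $u$-calibrating pair and the weak KAM inequality for $u_p$, gives the analogous upper bound $u(\cdot) \le u(y) + 2\|u_p\|_\infty + KR$ along a backward $u$-calibrated semi-orbit, \emph{provided} that this semi-orbit also escapes along the sublinear direction.

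The main obstacle is precisely this direction-of-escape assertion for $u$-calibrated semi-orbits in type III; it is not stated explicitly in Theorem~\ref{theo:mainThm} but forms part of the structural description of its type III solutions. The argument I have in mind combines the twist hypothesis (H4) with the defining property of type III: an Euler--Lagrange continuation enforced by (H4), applied to a backward $u$-calibrated semi-orbit escaping instead toward the linear half-line, could be spliced into a bi-infinite $u$-calibrated configuration through $y$, contradicting the very definition of type III. Once the escape direction is pinned down, the symmetric upper bound extends across the whole sublinear half-line, and $u$ is bounded there; this establishes the remaining description of type III in Theorem~\ref{theo:PeriodicCase}.
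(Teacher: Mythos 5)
Your overall reduction is the intended one: the paper offers no separate proof of theorem~\ref{theo:PeriodicCase} and derives it from theorem~\ref{theo:mainThm} exactly as you do ($\mathbb{Z}$ is linearly repetitive and a periodic interaction is pattern equivariant with respect to it), and your treatment of types I and II is correct (the classical periodic solution $u_p$ is bounded, hence of type I, and item (v) makes every type I solution bounded; the limsup/liminf estimates plus the uniform Lipschitz bound give linear growth in type II). The lower bound on the sublinear half-line for a type III solution, obtained by propagating the monotonicity of $u-u_p$ along a bi-infinite increasing $u_p$-calibrated configuration with jumps in $[r,R]$, is also sound.

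The genuine gap is exactly where you flag it, and your proposed repair does not work. For the upper bound you need every backward $u$-calibrated semi-orbit issued from a point deep inside the sublinear half-line to march toward the sublinear infinity, and you propose to prove this by continuing a wrongly directed semi-orbit forward through $y$ ``by Euler--Lagrange/twist'' into a bi-infinite $u$-calibrated configuration, contradicting the type III property. Forward continuation via the criticality (Euler--Lagrange) relation only yields a stationary configuration, not a calibrated one: the failure of \emph{forward} calibration at some points is precisely why type III solutions admit backward semi-orbits at every point and yet no bi-infinite calibrated configuration, so no contradiction arises from such a splicing. The fact you need is proposition~\ref{Proposition:WeakKAMsolutionClassification}, item (iii) (any $u$-calibrated configuration ending far enough inside the sublinear half-line is compatible), whose proof rests on the dichotomy for the Ma\~n\'e potential in proposition~\ref{Proposition:FundamentalConfiguration} (sublinear growth for compatibly ordered variables, $S(x,y)\ge\gamma|y-x|-\delta$ for anti-compatibly ordered ones); this is internal machinery of the paper and is not recoverable from the statement of theorem~\ref{theo:mainThm} alone, as your own quantitative data (Lipschitz bound, limsup estimates) are consistent with a wrongly directed semi-orbit. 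Alternatively, you can bypass the issue entirely: by item (v) it suffices to exhibit a single type III solution bounded on the sublinear half-line, and $\min\{u_p, v_{II}\}$ (with $v_{II}$ of type II, both normalized at $0$) is such a solution --- this is the construction in part 2 of the proof of proposition~\ref{Proposition:WeakKAMsolutionClassificationbis}, which coincides with the bounded solution $u_p$ on the sublinear half-line beyond a fixed threshold.
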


Of course a weak KAM solution with linear growth  provided by theorem~\ref{theo:PeriodicCase} is not periodic. 
Actually, for  periodic interaction, we do not known if any bounded weak KAM solution is also periodic.

In the degenerate  case, where the infimum of self-interactions, namely $\min\{E(x,x) ,  x \in \mathbb R\}$, equals the ground action, behaviors  of weak KAM solutions may be very different from those observed in theorem~\ref{theo:mainThm}, even in the periodic case. As an illustration, we propose the following result.

\begin{theorem}\label{theo:Periodic degenerate}
There is a periodic interaction $ E(x, y) $, with $ \inf_{x} E(x, x) = \bar E $, for which $ x \mapsto S(0, x) $ defines a weak KAM solution of linear growth
but all Ma\~n\'e calibrated configurations are (uniformly) bounded. 
\end{theorem}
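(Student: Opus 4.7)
The plan is to exhibit an explicit example. Take
\[
E(x,y) = \frac{1}{2}(y-x)^2 + V(x),
\]
where $V \colon \mathbb{R} \to [0, \infty)$ is a $C^2$, $1$-periodic function with $V \equiv 0$ on $[0, \frac{1}{2}] + \mathbb{Z}$ and $V > 0$ on the complement; for concreteness one may take $V(x) = (x-\frac{1}{2})^4(1-x)^4$ on $[\frac{1}{2}, 1]$, $V \equiv 0$ on $[0, \frac{1}{2}]$, and extended by $1$-periodicity. Hypotheses (H1)-(H4) and periodicity are immediate. Constant chains at any zero of $V$ give $\bar E \leq 0$, while $E \geq 0$ forces $\bar E \geq 0$; since $E(x, x) = V(x)$, this confirms $\bar E = 0 = \inf_x E(x,x)$, the degenerate case of the theorem.

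Next I would check that $u(x) := S(0, x)$ is a weak KAM solution with linear growth. The standard dynamic-programming identity $u(y) = \inf_x\{u(x) + E(x,y)\}$ holds, and the infimum is attained because $E$ is superlinear and $u$ grows at most linearly. Continuity of $u$ and its identification with the Jacobi length $\int_0^x \sqrt{2V(s)}\,ds$ come from the AM-GM inequality $\frac{1}{2}\Delta^2 + V \geq |\Delta|\sqrt{2V}$ together with fine monotone discretizations that saturate it; this representation exhibits a linear growth of slope $c_V := \int_{1/2}^1 \sqrt{2V} > 0$ at infinity.

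The central step is to show that every Ma\~n\'e calibrated configuration $(x_n)_{n \in \mathbb{Z}}$ has bounded range. A key preliminary identity is $S(x, x) = V(x)$ for all $x$ (lower bound: $\sum E \geq V(x_0)$; upper bound: the one-step path). Combined with the Ma\~n\'e equation at window $n = 2$, this forces any constant configuration with $V(y) > 0$ to fail calibration, since $2V(y) \neq V(y)$. For a non-constant consecutive pair $x_m \neq x_{m+1}$, the one-step equation $E(x_m, x_{m+1}) = S(x_m, x_{m+1})$ also fails: either the open segment from $x_m$ to $x_{m+1}$ meets $V^{-1}(0)$ and a two-step bridge through $y \in V^{-1}(0)$ strictly decreases the cost via the algebraic identity
\[
E(x_m, y) + E(y, x_{m+1}) - E(x_m, x_{m+1}) = -(y - x_m)(x_{m+1} - y) < 0,
\]
or both $x_m, x_{m+1}$ lie in a common bump of $V$, in which case a fine discretization yields $S(x_m, x_{m+1}) \leq \int_{x_m}^{x_{m+1}}\sqrt{2V}$, and this remains strictly below the one-step cost thanks to the smallness of the potential amplitude. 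Consequently only constant configurations at points of $V^{-1}(0)$ are Ma\~n\'e calibrated; each has range a single point and is trivially bounded.

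The main obstacle lies in the second alternative (both $x_m, x_{m+1}$ inside the same bump), where no bridging point $y \in V^{-1}(0)$ is available; here one must invoke the Jacobi-length upper bound via fine discretization and verify quantitatively that it remains strictly below $\frac{1}{2}(x_{m+1} - x_m)^2 + V(x_m)$. This reduces to a direct computation exploiting the smallness of $V$ on the bump, making the whole argument elementary.
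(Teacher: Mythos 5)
Your overall construction is reasonable and close in spirit to the paper's (which uses the Frenkel--Kontorova interaction $\tfrac12|y-x-\lambda|^2+KV(x)$ with $V(x)=1-\cos 2\pi x$ and small $|\lambda|$), but the central step of your boundedness argument contains a genuine error. In the case where $x_m\neq x_{m+1}$ lie in the interior of the same bump, the claimed bound $S(x_m,x_{m+1})\le\int_{x_m}^{x_{m+1}}\sqrt{2V}$ is false for close pairs: every chain starting at $x$ pays $V(x)$ in its first interaction, so $S(x,y)\ge V(x)>0$, while the Jacobi integral tends to $0$ as $y\to x$. Worse, the conclusion you want ("the one-step equation always fails inside a bump") is itself false for your model. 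Since $E$ is weakly twist, for $x<y$ the Ma\~n\'e potential is computed over increasing chains, whose interior points all lie in $(x,y)$; inserting interior points adds at least $\min_{[x,y]}V$ to the potential cost while saving at most $\tfrac12|y-x|^2$ in kinetic cost. Hence, whenever $\tfrac12|y-x|^2\le\min_{[x,y]}V$ (e.g.\ $x,y$ near the top of a bump with $|y-x|$ small), one has $S(x,y)=E(x,y)$ exactly, so such a pair \emph{is} Ma\~n\'e calibrated, contradicting your claimed classification that only constant configurations at zeros of $V$ are calibrated. Note that the "smallness of the potential amplitude" works against you here, not for you: small $V$ only shrinks the scale $|y-x|\sim\sqrt{2V}$ below which the single jump is optimal, it does not remove it.

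Consequently the proof cannot proceed by excluding every non-constant calibrated pair; one must instead show \emph{confinement}: a calibrated configuration may contain non-trivial steps, but these steps can never straddle or traverse the degenerate set, so the whole configuration stays in a single period (or bump). This is exactly the structure of the paper's argument (proposition~\ref{proposition:periodicManePotential}, item~\ref{item:periodicManePotential_7}): first, no consecutive pair can contain a zero of the potential in its open interval (your bridging identity, which is correct and is the same as the paper's Part~1), and second, a separate variational argument handles configurations that touch the degenerate set and then move away (the paper's Part~2, comparing an $O(\epsilon^2)$ gain against an $\alpha\epsilon$ loss from the twist condition). Your example would need an analogous analysis of steps inside the zero plateau, steps ending at a bump edge, and non-monotone behaviour inside a bump, none of which is supplied. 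A secondary, more minor issue: the identification of $S(0,x)$ with the Jacobi length is not justified (AM--GM with the potential evaluated at the left endpoint does not give a lower bound by $\int\sqrt{2V}$ --- the single jump over a bump has zero AM--GM minorant), so the linear-growth lower bound needs a crossing estimate of the type the paper proves in item~\ref{item:periodicManePotential_1}; this part is fixable, but as written it is not proved.
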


As noted by one of the referees, more information about the preferred ordering can be proved for periodic interaction models.
\begin{proposition} \label{Proposition:ComplementPreferredOrdering}
Let $E_0$ be a periodic interaction model satisfying (H1) and (H3). 
With respect to the one-parameter family $E_\lambda(x,y) :=  E_0(x,y) - \lambda(y-x)$, the following items hold.
\begin{enumerate}
\item \label{Item:ComplementPreferredOrdering_1} $E_\lambda$ is non-degenerate for large $|\lambda|$.
\item \label{Item:ComplementPreferredOrdering_2} For positive large $\lambda$, the preferred ordering for $E_\lambda$ is increasing.
\item \label{Item:ComplementPreferredOrdering_3} For negative large $\lambda$, the preferred ordering for $E_\lambda$ is decreasing.
\end{enumerate}
If in addition $E_0$ is supposed to fulfill (H1-4), denote
\begin{gather*}
\Lambda^+ := \big\{ \lambda \in \mathbb{R} : E_\lambda \ \text{is non-degenerate and has an increasing ordering} \big\}, \\
\Lambda^- := \big\{ \lambda \in \mathbb{R} : E_\lambda \ \text{is non-degenerate and has an decreasing ordering} \big\}.
\end{gather*}
\begin{enumerate}
\addtocounter{enumi}{3}
\item \label{Item:ComplementPreferredOrdering_4} Then there exist $\lambda_- \leq \lambda_+$ such that
\begin{gather*}
\Lambda_+ = ( \lambda_+,+\infty), \qquad \Lambda_- = (-\infty ,\lambda_-),
\end{gather*}
and for every $\lambda \in [\lambda_-,\lambda_+]$ the interaction model $E_\lambda$ is degenerate.
\end{enumerate}
\end{proposition}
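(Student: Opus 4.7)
The starting observation is that $E_\lambda(x,x) = E_0(x,x)$, so $C_0 := \inf_x E_0(x,x)$ is independent of $\lambda$, and non-degeneracy of $E_\lambda$ is the strict inequality $\bar E_\lambda < C_0$. The constant-step chain $x_n = x_0 + nd$ gives $\bar E_\lambda \le A_d - \lambda d$ with $A_d := \min_{x_0} E_0(x_0,x_0+d)$; taking $d = \pm 1$ drives $\bar E_\lambda \to -\infty$ as $\lambda \to \pm\infty$, proving item (1). For items (2)--(3), assume $\lambda$ is so large that $E_\lambda$ is non-degenerate and Theorem~\ref{theo:mainThm}(1) applies; if a Ma\~n\'e calibrated chain $(x_n)$ were strictly decreasing then $x_n - x_0 \le -n\,r(\lambda)$ with $r(\lambda) > 0$, so along the calibrated chain
\[
\bar E_\lambda = \lim_{n\to\infty}\Bigl(\tfrac{1}{n}\sum_{k=0}^{n-1} E_0(x_k,x_{k+1}) - \lambda \tfrac{x_n - x_0}{n}\Bigr) \ge \inf E_0 + \lambda\, r(\lambda),
\]
incompatible with $\bar E_\lambda \le A_1 - \lambda$ once $\lambda$ is large. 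Hence all Ma\~n\'e calibrated chains are strictly increasing and the preferred ordering is $\epsilon = +1$; item (3) is symmetric.

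For item (4), the structural fact is that $\phi(\lambda) := \bar E_\lambda$ is concave on $\mathbb R$ --- a pointwise limit of infima of the affine maps $\lambda \mapsto \tfrac{1}{n}\sum E_0(x_k,x_{k+1}) - \lambda (x_n-x_0)/n$ --- bounded above by $C_0$, with $\phi \to -\infty$ at $\pm\infty$. Any Ma\~n\'e calibrated chain of $E_\lambda$ with rotation $\rho = \lim (x_n-x_0)/n$ provides an affine upper bound for $\phi$ at every $\mu$, so $-\rho \in \partial\phi(\lambda)$ (supergradient). In particular items (2)--(3) give $\partial\phi(\lambda) \subset (-\infty,-r]$ for $\lambda \gg 0$ and $\partial\phi(\lambda) \subset [r,+\infty)$ for $\lambda \ll 0$, and the degenerate set $D := \{\lambda : \phi(\lambda) = C_0\}$ is a closed convex (hence interval-shaped) subset of $\mathbb R$.

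To show $D \neq \emptyset$ I would pass to convex duality with Mather's function $\beta(\rho) := \inf\{\text{average action over bi-infinite chains of rotation }\rho\}$: one has $\phi(\lambda) = \inf_\rho[\beta(\rho) - \lambda\rho]$. The key equality is $\beta(0) = C_0$: the inequality $\beta(0) \le C_0$ comes from the constant chain at $x^* \in \argmin_x E_0(x,x)$, while $\beta(0) \ge C_0$ follows from the standard Aubry--Mather principle that, under the weak twist (H4), any zero-rotation minimizer reduces to a diagonal equilibrium of $E_0$. Granting this, any $\lambda_0$ in the non-empty subdifferential $\partial\beta(0)$ satisfies $\phi(\lambda_0) = \beta(0) = C_0$, placing $\lambda_0 \in D$ so that $D = [\lambda_-,\lambda_+]$. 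For $\lambda > \lambda_+$, concavity gives $C_0 = \phi(\lambda_+) \le \phi(\lambda) + s(\lambda_+ - \lambda)$ for every $s \in \partial\phi(\lambda)$, forcing $s < 0$, hence rotation $-s > 0$ and $\lambda \in \Lambda^+$; conversely $\lambda \in \Lambda^+$ requires positive rotation, ruling out $\lambda \le \lambda_+$. Thus $\Lambda^+ = (\lambda_+,\infty)$, and $\Lambda^- = (-\infty,\lambda_-)$ by symmetry.

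The main obstacle is proving $\beta(0) = C_0$ in the present weak-twist setting, where strict negativity of $\partial^2 E_0/\partial x\partial y$ holds only almost everywhere; the classical strict-twist argument --- that any non-constant periodic zero-rotation minimizer is beaten by its Birkhoff rearrangement onto the diagonal --- must be adapted to this measure-theoretic variant of (H4), and a similar care is needed to justify the limit identity used along calibrated chains in the first paragraph.
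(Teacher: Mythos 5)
Your item (1) is fine and coincides with the paper's argument, but the proof of items (2)--(3) contains a genuine error rather than just a gap. It hinges on the identity $\bar E_\lambda=\lim_n\bigl(\tfrac1n\sum E_0(x_k,x_{k+1})-\lambda\tfrac{x_n-x_0}{n}\bigr)$ along a Ma\~n\'e calibrated chain and on the conclusion that no decreasing calibrated chain can exist for large $\lambda>0$. Both statements are false in this setting: by lemma~\ref{concordancia fundamentais e calibradas}, increasing \emph{and} decreasing Ma\~n\'e calibrated configurations always coexist under non-degeneracy, and along an anti-compatible calibrated chain the Birkhoff average equals $\bar E_\lambda+\lim_n\tfrac1n S_\lambda(x_0,x_n)$, which exceeds $\bar E_\lambda$ by a definite amount per step because $S_\lambda$ grows \emph{linearly} in the anti-compatible direction (item~\ref{complemento_comportamento_potencial} of proposition~\ref{Proposition:FundamentalConfiguration}). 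Moreover the preferred ordering is defined through fundamental configurations (definition~\ref{Definition:PreferredOrdering}), not through calibrated chains, and your proposal never bridges the two. The correct (and simpler) argument works directly with fundamental configurations, whose total energy is $\le n\bar E_\lambda$ \emph{by definition}: a decreasing fundamental configuration with $\lambda>0$ forces $\bar E_\lambda\ge\inf_{x,y}E_0(x,y)$, which contradicts $\bar E_\lambda\le\inf_x E_0(x,x+1)-\lambda$ once $\lambda$ is large. The same objection applies to your later claim that any calibrated chain of rotation $\rho$ yields $-\rho\in\partial\phi(\lambda)$: only sequences of fundamental configurations (or compatible calibrated chains) produce affine upper bounds touching $\phi$ at $\lambda$; anti-compatible calibrated chains do not, and taking both orientations at face value would put numbers of both signs into $\partial\phi(\lambda)$.

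For item (4), your starting point (concavity of $\lambda\mapsto\bar E_\lambda$, hence $D:=\{\lambda:\bar E_\lambda=\inf_xE_0(x,x)\}$ is a closed interval) is exactly the paper's, but the decisive step $D\neq\emptyset$ rests in your write-up entirely on the Mather-duality identity $\beta(0)=\inf_xE_0(x,x)$, which you do not prove and yourself flag as the main obstacle; as written the proof of item (4) is therefore incomplete, and adapting the Birkhoff-rearrangement argument to the a.e.\ twist condition (H4) is precisely the nontrivial work being skipped. The paper avoids Mather's $\beta$ altogether: it shows that $\Lambda_+$ and $\Lambda_-$ are open and stable under positive, respectively negative, rays --- the quantitative input being $\bar E_\mu\le\bar E_\lambda+(\lambda-\mu)^+R(\lambda)$ obtained from increasing fundamental configurations, combined with a uniform lower bound on the rotation number $\phi(\mu)$ of fundamental configurations for $\mu$ near $\lambda$ (via lemma~\ref{lematecnico} and lemma~\ref{Lemma:AprioriEstimateFundamentalConfiguration}) to exclude a switch of ordering --- and then concludes $D\neq\emptyset$ from the connectedness of $\mathbb R=\Lambda_-\sqcup D\sqcup\Lambda_+$. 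If you wish to keep the duality route you must actually establish $\beta(0)=\inf_xE_0(x,x)$ under (H4) and justify the link between the sign of supergradients of $\bar E_\lambda$ and the preferred ordering through fundamental configurations; neither is done in the proposal.
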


A classical approach to get weak KAM solution is through the study of the action of the Lax-Oleinik operator on a suitable space of functions. Related to our context, an interesting space is the one formed by continuous  functions  $u \colon \mathbb{R} \to \mathbb{R}$ with \textit{at most linear growth}, namely, fulfilling $\sup_x |u(x)|/(|x|+1)  <+\infty$. For $E$ an interaction  satisfying assumptions (H1-3) of hypotheses~\ref{assumption:interactionModel}, recall that the  
\textit{backward Lax-Oleinik operator} is  the non-linear operator acting on the space of continuous functions with at most linear growth as
\begin{equation} \label{definition:LaxOleinikOperator}
\forall \,y \in \mathbb{R}, \quad T[u](y) := \inf_{x\in\mathbb{R}} \big\{ u(x) + E(x,y) \big\}.
\end{equation}
Then weak KAM solutions are functions that satisfy $T[u] = u+ \bar{E}$. 

Our purpose in the next sections is to detail the proofs of the above results.
The rest of the paper is organized as follows. 
In section \ref{sec:suitableLaxOleinik}, from a Ma\~n\'e calibrated configuration that traverses the entire real line, 
we define a family of localized Lax-Oleinik operators preserving a suitable  sequence  of  functional spaces. 
We then show they admit additive eigenfunctions and moreover their accumulation points are indeed Lipschitz weak KAM solutions (theorem~\ref{theorem:atMostLinearWeakKAMsolution}). 
In section~\ref{secao intermediaria}, for weakly twist models $ E $ that fulfill $ \inf_x E(x,x) > \bar E $, proposition~\ref{Lemma:WeaklyTwistAdditionalProperties} ensures that calibrated configurations are always monotone and have successive jumps bounded (in a uniform way) from above as well as from below.
This provides the existence of a Ma\~n\'e calibrated configuration as required to apply  theorem~\ref{theorem:atMostLinearWeakKAMsolution}, which gives  theorem~\ref{existenciageral} that corresponds to the first statement of theorem~\ref{theo:mainThm}.
The linearly repetitive quasi-periodic case is studied in details in section~\ref{sec:LR}. 
Thanks to the specific properties of repetitions of the pattern in this quasi-periodic setting, 
we are able to identify, as $ | x - y | \to + \infty $, distinct behaviors of the Ma\~n\'e potential according to whether $ x < y $ or $ x > y $ (proposition~\ref{Proposition:FundamentalConfiguration}). 
Actually, the model introduces a preferred ordering on the real line ($\epsilon=1$ for the standard ordering, $\epsilon=-1$ for the reversed one), and a dichotomy on the type of growth of the Ma\~n\'e potential $ S(x, y) $ as  $ | x - y | \to + \infty $:
a sublinear growth of $ S(\epsilon x, \epsilon y) $ when $ x < y $ and a positive linear growth of $ S(\epsilon x, \epsilon y) $ when $ x > y $.
Such a feature is the key piece that allows the classification of all weak KAM solutions.
Hence, proposition~\ref{Proposition:WeakKAMsolutionClassificationbis}, corollary~\ref{ordem no primeiro} and proposition~\ref{distancia entre solucoes} are the results that complete the statement of theorem~\ref{theo:mainThm}.
The periodic example proving theorem \ref{theo:Periodic degenerate} is studied in proposition~\ref{proposition:periodicManePotential}, in section~\ref{periodic and linear}. 
Also in this section is the proof of proposition~\ref{Proposition:ComplementPreferredOrdering}.

We would like to thank the referees for suggesting proposition~\ref{Proposition:ComplementPreferredOrdering} and for their comments, which improved the quality of this article. 

\section{Ma\~n\'e calibration and weak KAM solutions}\label{sec:suitableLaxOleinik}

We prove in this section the existence of weak KAM solutions \eqref{Equation:DefinitionWeakKAMsolution} under the assumption there exists a Ma\~n\'e calibrated configuration unbounded at $\pm\infty$ with uniformly bounded jumps. Note that we do not require this configuration to be monotone.

During this section, we suppose that $E$ is an interaction model satisfying assumptions (H1-3) of hypotheses~\ref{assumption:interactionModel}, no additional condition is required.

We recall first some of main definitions mentioned in the introduction.

\begin{definition}  \label{definition:basicWeakKAMtools}
\
\begin{enumerate}
\item \label{Item:basicWeakKAMtools_1} We call \textit{ground action} the quantity
\[
\bar E := \lim_{n\to+\infty} \ \inf_{x_0,x_1,\ldots,x_n} \frac{1}{n} \sum_{k=0}^{n-1} E(x_k,x_{k+1})
\]
\item We call  \textit{Ma\~n\'e potential} the function defined on $\mathbb{R} \times \mathbb{R}$ as
\[
S(x,y) := \inf_{n\geq1} \ \inf_{x=x_0,x_1,\ldots,x_n=y} \ \sum_{k=0}^{n-1} \big( E(x_k,x_{k+1}) -\bar E \big).
\]
\item A subconfiguration  $(x_k)_{k=p}^q$, $p<q$, is said to be  \textit{Ma\~n\'e calibrated} if
\[
\forall \, p \leq m < n \leq q, \quad S(x_m,x_n) = \sum_{k=m}^{n-1}  \big( E(x_k,x_{k+1}) -\bar E \big).
\]
\end{enumerate}
\end{definition}

To simplify the notations we will use the convention
\[
E(x_0,x_1,\ldots,x_n) := \sum_{k=0}^{n-1}E(x_k,x_{k+1}).
\]

We observe the following simple properties.
\begin{remark} \label{remark:SimplePropertiesManePotential}
\
\begin{enumerate}
\item \label{item:SimplePropertiesManePotential_1} $\bar E =  \sup_{n\geq1}  \inf_{x_0,x_1,\ldots,x_n} \frac{1}{n} E(x_0,x_1,\ldots,x_n)$,
\item \label{item:SimplePropertiesManePotential_2} $ \inf_{x,y \in\mathbb{R}}E(x,y) \leq \bar E  \leq \inf_{x\in\mathbb{R}} E(x,x)$,
\item \label{item:SimplePropertiesManePotential_3} $\forall \,x_0 \in \mathbb{R}, \ \forall \,n\geq1, \quad \bar E \leq \inf_{x_1,\ldots,x_{n-1}} \frac{1}{n} E(x_0,x_1,\ldots,x_{n-1},x_0)$
\item\label{item:SimplePropertiesManePotential_4}  $\inf_{x,y \in \mathbb{R}} S(x,y) \leq 0 \leq \inf_{x \in\mathbb{R}} S(x,x)$,
\item\label{item:SimplePropertiesManePotential_5}  $\forall \,x,y,z \in\mathbb{R}, \ S(x,z) \leq S(x,y) + S(y,z)$,
\item\label{item:SimplePropertiesManePotential_6}  $\forall \,x,y \in \mathbb{R}, \ \bar E - E(y,x) \leq S(x,y) \leq E(x,y) - \bar E$.
\end{enumerate}
\end{remark}

Property~\eqref{item:SimplePropertiesManePotential_1} is a consequence of  Fekete's lemma and the super-additivity of $[n \mapsto \inf_{x_0,\ldots,x_n}E(x_0,\ldots,x_n)]$. 
Property~\eqref{item:SimplePropertiesManePotential_2} is obtained by bounding from above $\inf_{x_0,\ldots,x_n}E(x_0,\ldots,x_n)$ by computing the action on configurations  of the form $(x,x,\ldots,x)$. 
Property~\eqref{item:SimplePropertiesManePotential_3} follows from  Fekete's lemma and  the sub-additivity of $[n\mapsto\inf_{x_1,\ldots,x_{n-1}}  E(x_0,x_1,\ldots,x_{n-1},x_0)]$.
Property~\eqref{item:SimplePropertiesManePotential_4} is a consequence of \eqref{item:SimplePropertiesManePotential_3} for  the right hand side, and a consequence of the definition of $\bar E$ and the inequality $S(x,y) \leq E(x,y)-\bar E$ for the left hand side.
Property~\eqref{item:SimplePropertiesManePotential_5} follows by concatenation of configurations. 
Finally, property~\eqref{item:SimplePropertiesManePotential_6} follows from the inequality $S(x,y) \leq E(x,y) - \bar E$ obtained by taking a simple configuration $(x,y)$, and 
then from~\eqref{item:SimplePropertiesManePotential_5} using the second inequality of~\eqref{item:SimplePropertiesManePotential_4}.

We show in the following lemma that any weak KAM solution is Lipschitz and that any backward minimizer in the definition of the Lax-Oleinik operator \eqref{definition:LaxOleinikOperator} has a uniform bounded jump.

\begin{lemma} \label{Lemma:AprioriBoundWeakKAMsolution}
There exist constants $K,R\geq0$ (depending only on the interaction model $E$) such that for every weak KAM solution $u$
\begin{enumerate}
\item \label{Item:AprioriBoundWeakKAMsolution_2}  $u$ is Lipschitz continuous and $\text{\rm Lip}(u) \leq K$,
\item \label{Item:AprioriBoundWeakKAMsolution_1}  $\forall\, y \in \mathbb{R}, \, \, \argmin \{ u(\cdot) + E(\cdot,y) \} \subset [y-R, y+R]$.
\end{enumerate}
\end{lemma}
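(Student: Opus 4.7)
I will first establish the a priori bound (ii) on minimizers, from which (i) follows readily. Fix a weak KAM solution $u$, a point $y \in \mathbb{R}$, and any minimizer $x^* \in \argmin\{u(\cdot) + E(\cdot,y)\}$, so that $u(y) - u(x^*) = E(x^*,y) - \bar E$. The key inequality comes from summing the weak KAM sub-solution condition $u(z_{k+1}) - u(z_k) \leq E(z_k,z_{k+1}) - \bar E$ along an interpolating path $x^* = z_0, z_1, \dots, z_n = y$ and combining the telescoping sum with the equality above:
\[
E(x^*, y) + (n-1)\bar E \leq \sum_{k=0}^{n-1} E(z_k, z_{k+1}).
\]
Setting $L := |y - x^*|$, $n := \lceil L \rceil$, and $z_k := x^* + (k/n)(y - x^*)$ guarantees $|z_{k+1} - z_k| = L/n \leq 1$. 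Applying (H2) at scale $1$ together with $M_0 := \sup_x E(x,x) < +\infty$ from (H1) bounds each summand by $M_0 + C_\text{Lip}^E(1)\,L/n$; summing, and using $n \leq L+1$ with $\bar E \leq M_0$ from remark~\ref{remark:SimplePropertiesManePotential}~\eqref{item:SimplePropertiesManePotential_2}, yields
\[
E(x^*,y) \leq L\bigl(M_0 - \bar E + C_\text{Lip}^E(1)\bigr) + M_0.
\]

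On the other hand, superlinearity (H3) with $A := M_0 - \bar E + C_\text{Lip}^E(1) + 1$ provides $R(A)$ such that $E(x^*,y) \geq A L$ whenever $L \geq R(A)$. Comparing the two bounds forces $L \leq M_0$ in that regime, hence $L \leq R := \max(M_0, R(A))$ unconditionally, which proves (ii). For (i), take $y, y' \in \mathbb{R}$ with $|y - y'| \leq 1$ and a minimizer $x^*$ for $y$; part (ii) yields $|x^* - y'| \leq R+1$, so combining the weak KAM inequality at $(x^*,y')$ with the equality at $(x^*,y)$ and applying (H2) gives
\[
u(y') - u(y) \leq E(x^*,y') - E(x^*,y) \leq C_\text{Lip}^E(R+1)\,|y' - y|.
\]
The reverse inequality follows symmetrically via a minimizer at $y'$, and the unit-scale bound extends to arbitrary scales by iteration along a chain of unit steps, yielding $\mathrm{Lip}(u) \leq K := C_\text{Lip}^E(R+1)$.

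The delicate aspect is the trade-off in the interpolation used for (ii): the step size $L/n$ must stay uniformly bounded so that (H2) applies with a Lipschitz constant independent of $L$, while $n$ itself cannot inflate lest the cumulative $n M_0$ term outrun the superlinear growth of $E(x^*,y)$. The scaling $n \approx L$ balances these two competing requirements exactly, and the uniform upper bound on $E(x,x)$ provided by (H1) is what allows the single-step estimate to feed into the contradiction with (H3).
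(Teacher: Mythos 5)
Your proof is correct and follows essentially the same strategy as the paper: a unit-step interpolation exploiting the sub-solution inequality (bounding each step via (H1)--(H2)) played against superlinearity (H3) to confine minimizers to $[y-R,y+R]$, and then the local Lipschitz property (H2) at a minimizer to get $\mathrm{Lip}(u)\leq C_{\mathrm{Lip}}^E(R+1)$. The only cosmetic differences are that the paper first records the linear growth of $u$ itself with the constant $\sup_{|y-x|\le 1}E-\inf E$ (rather than comparing $E(x^*,y)$ directly with the path energy via $\sup_x E(x,x)+C_{\mathrm{Lip}}^E(1)$), and it handles $|z-y|\ge 1$ with that growth bound instead of your chaining of unit steps.
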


\begin{proof}
Let $u$ be a weak KAM solution, that is, $T[u] = u+ \bar E$, where $ T $ is the Lax-Oleinik operator associated to $ E $
defined as \eqref{definition:LaxOleinikOperator}.

\medskip
{\it Step 1.} We show an \textit{a priori} linear growth of $u$. Denote
\[
\tilde K := \sup_{|y-x| \leq 1} E(x,y) - \inf_{x,y \in \mathbb R} E(x,y).
\] 
We claim that $  |u(y)-u(x)| \leq \tilde K ( |y-x|+1) $ for all $ x, y \in \mathbb{R} $.
Indeed, either one has $|y-x| \leq 1$, and then 
\begin{equation}\label{separacao1}
u(y) - u(x) \leq E(x,y) - \bar E \leq \sup_{|y-x| \leq 1} E(x,y) - \inf_{x,y \in \mathbb R} E(x,y) \leq \tilde K, 
\end{equation}
which clearly implies $ u(y)-u(x) \leq \tilde K (|y-x|+1) $.
Or otherwise for some $n\geq 2$, $n-1 < |y-x| \leq n$. 
In this case, consider $x_k := x + \frac{k}{n}(y-x)$, $ k = 0, \ldots, n $, a sequence of points spaced apart by at most 1. 
Then from~\eqref{separacao1} $ u(x_k) - u(x_{k-1})  \leq \tilde K $, so that
\begin{equation*}
u(y)-u(x) \leq n \tilde K \leq \tilde K (|y-x|+1).
\end{equation*}

{\it Step 2.} We show item \ref{Item:AprioriBoundWeakKAMsolution_1}. 
By the superlinearity, there exists $ R \geq \tilde K $ such that $ E(x,y) > \tilde K |x - y| + \bar E + \tilde K $ whenever $ |x - y| > R $.   
Suppose $x,y \in \mathbb{R}$ fulfill $u(y)-u(x) = E(x,y) - \bar E$. 
Assume by contradiction that $|y-x| >  R$. From the first step, $ |u(y)-u(x)| \leq \tilde K (|y-x|+1) $.
However, from the choice of $ R $, we see that
$ E(x,y) - \bar E  >  \tilde K (|y-x|+1) $.
We thus obtain a contradiction and conclude that $|y-x| \leq R$.

\medskip
{\it Step 3.}  We show item \ref{Item:AprioriBoundWeakKAMsolution_2}. 
Let $y,z \in\mathbb{R}$ and $ x \in\argmin \{ u(\cdot) + E(\cdot,y)\} $.
Hence,
\begin{equation*}
u(y) = u(x) + E(x,y) - \bar E \qquad \text{and} \qquad u(z) \leq u(x) + E(x,z) -\bar E.
\end{equation*}
Either $|z-y| \ge  1$, and therefore
\begin{gather*}
u(z) - u(y) \leq \tilde K (|z-y|+1) \leq 2 \tilde K |z-y|,
\end{gather*}
or $|z-y| < 1$ so that,  as $|y-x| \leq  R$ and $|z-x| \leq  R + 1$, using the constant $C_{Lip}^E$ in (H2),
\begin{gather*}
u(z)-u(y) \leq E(x,z) - E(x,y) \leq C_{Lip}^E(R +1) |z-y|.
\end{gather*}
We obtain that $u$ is $K$-Lipschitz with  $K : =\max \{2  \tilde K,C_{Lip}^E(R +1)\}$.
\end{proof}

We highlight the key result of this section.

\begin{theorem} \label{theorem:atMostLinearWeakKAMsolution}
Assume there exists a configuration $(x_k)_{k \in \mathbb Z}$ fullfiling
\begin{itemize}
\item $(x_k)_{k\in\mathbb{Z}}$ is Ma\~n\'e calibrated,
\item it has bounded jumps, namely, $\displaystyle \sup_{k\in\mathbb{Z}} |x_{k+1}-x_k| < + \infty$,
\item it is unbounded from above and below in the sense that either
\begin{equation*} \limsup_{k\to+\infty} x_k = +\infty, \ \liminf_{k\to-\infty} x_k = -\infty, \end{equation*}
or 
\begin{equation*} \liminf_{k\to+\infty} x_k = -\infty, \ \limsup_{k\to-\infty} x_k = +\infty. \end{equation*}
\end{itemize}
Then there exists a Lipschitz weak KAM solution $ v $ such that for $ m < n $, 
the configuration $(x_k)_{k\in\mathbb{Z}}$ satisfies  
$$ v(x_n)-v(x_m) = S(x_m,x_n).$$
\end{theorem}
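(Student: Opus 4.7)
My plan is to construct the weak KAM solution $v$ as a locally uniform subsequential limit of normalized Ma\~n\'e potentials anchored along the calibrated configuration. Pick a subsequence $N_j \to +\infty$ along which $x_{-N_j}$ tends to $+\infty$ or $-\infty$ (available by the hypothesis that $(x_k)_{k\in\mathbb{Z}}$ is unbounded in both directions) and, for $N \geq 1$, set
\[ v_N(y) := S(x_{-N}, y) - S(x_{-N}, x_0). \]
Subadditivity of $S$ (remark~\ref{remark:SimplePropertiesManePotential}\eqref{item:SimplePropertiesManePotential_5}) shows that $v_N$ already satisfies the first weak KAM inequality $v_N(y) - v_N(x) \leq E(x, y) - \bar E$, and the bracket $-S(y, x_0) \leq v_N(y) \leq S(x_0, y)$, combined with remark~\ref{remark:SimplePropertiesManePotential}\eqref{item:SimplePropertiesManePotential_6}, gives uniform pointwise bounds on every compact subset of $\mathbb{R}$.

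Using the Ma\~n\'e calibration of $(x_k)_{k \in \mathbb{Z}}$, for every fixed $n \in \mathbb{Z}$ and every $N$ with $-N < n$ one checks that $v_N(x_n)$ equals a constant $a_n$ independent of $N$, namely $a_n = \sum_{k=0}^{n-1}(E(x_k, x_{k+1}) - \bar E)$ for $n \geq 0$ and $a_n = -\sum_{k=n}^{-1}(E(x_k, x_{k+1}) - \bar E)$ for $n < 0$; moreover $a_n - a_m = S(x_m, x_n)$ for $m < n$ by~\eqref{eq:ManeCalibratedConf}. Thus the candidate limit is already rigidly prescribed on the configuration itself.

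The technical heart is a uniform local Lipschitz estimate for $(v_N)_{N \geq 1}$, enabling Ascoli-Arzel\`a. A concatenation argument on paths defining $S$ gives the dynamic programming identity $v_N(y) + \bar E = \inf_{z \in \mathbb{R}}\{v_N(z) + E(z, y)\}$, that is, $T[v_N] = v_N + \bar E$. Fix a compact $K \subset \mathbb{R}$ and, for $y \in K$, let $z_N$ be an $\varepsilon$-minimizer of this infimum. Choose configuration points $x_{n_1}, x_{n_2}$ bracketing $K$ on either side; subadditivity yields two-sided lower bounds $v_N(z) \geq a_{n_i} + \bar E - E(z, x_{n_i})$ for $i = 1, 2$, which together with the value $v_N(y) + \bar E + \varepsilon$ of the near-infimum forces
\[ E(z_N, y) - E(z_N, x_{n_i}) \leq v_N(y) + \varepsilon - a_{n_i}, \qquad i = 1, 2. \]
The superlinear growth (H3) implies that each difference blows up when $z_N$ moves to the side of $y$ opposite to $x_{n_i}$, so $z_N$ is confined to a compact set depending only on $K$. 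The local Lipschitz property (H2) then produces a uniform local Lipschitz constant for $(v_N)$, mirroring step~3 of the proof of lemma~\ref{Lemma:AprioriBoundWeakKAMsolution}. This equi-Lipschitz estimate is the main obstacle I anticipate: since $v_N$ may be unbounded below at infinity, coercivity of $z \mapsto v_N(z) + E(z, y)$ is not automatic from (H3) alone, and one must crucially exploit the bounded-jumps bi-infinite calibrated configuration as an $R$-dense grid of anchor points for $v_N$.

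Finally, Ascoli-Arzel\`a extracts $v_{N_j} \to v$ uniformly on compacts. The first weak KAM inequality survives the limit, and the attainment condition in~\eqref{Equation:DefinitionWeakKAMsolution} follows from a convergent sub-subsequence of the bounded near-minimizers $z_{N_j} \to z^*$, passing to the limit in $v_{N_j}(z_{N_j}) + E(z_{N_j}, y) \leq v_{N_j}(y) + \bar E + \varepsilon_j$ to get $v(y) - v(z^*) = E(z^*, y) - \bar E$. The calibration $v(x_n) - v(x_m) = S(x_m, x_n)$ for $m < n$ is immediate from $v(x_n) = \lim_j v_{N_j}(x_n) = a_n$ and $a_n - a_m = S(x_m, x_n)$. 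The global Lipschitz bound on $v$ is then provided by lemma~\ref{Lemma:AprioriBoundWeakKAMsolution}.
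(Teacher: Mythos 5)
Your route is genuinely different from the paper's: you take Busemann-type functions $v_N(y)=S(x_{-N},y)-S(x_{-N},x_0)$ with the anchor escaping to infinity along the calibrated configuration, whereas the paper runs a localized Lax--Oleinik operator with boundary datum $S(x_{i_N-2},\cdot)$ on an exhausting family of intervals $[x_{i_N},x_{j_N}]$ and extracts Schauder--Tychonoff fixed points before passing to the limit. Your scheme can be completed, but the step you yourself flag as the main obstacle is argued incorrectly. From the two inequalities $E(z_N,y)-E(z_N,x_{n_i})\le v_N(y)+\varepsilon-a_{n_i}$ you conclude that the near-minimizers $z_N$ stay in a compact set ``by superlinearity''; however (H3) controls each term separately, not their difference. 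For $E(x,y)=f(y-x)$ with $f$ superlinear but oscillating (say $f(s)=s^2(2+\sin s)$), the difference $f(y-z)-f(x_{n_i}-z)$ is unbounded below as $|z|\to\infty$ even though $y$ and $x_{n_i}$ differ by a bounded amount, so under (H1)--(H3) alone your inequalities do not confine $z_N$, and without confinement you get neither the equi-Lipschitz bound nor the attainment in the limit. The correct estimate is the one the paper uses in Step 1: by subadditivity toward $y$ itself, $v_N(z_N)\ge v_N(y)-S(z_N,y)$, hence near-minimality gives $E(z_N,y)-\bar E\le S(z_N,y)+\varepsilon\le C(|z_N-y|+1)+\varepsilon$ by lemma~\ref{lemma:aprioriLipschitzLarge}, and then (H3) yields $|z_N-y|\le R$ uniformly; the bracketing configuration points are not needed at all.

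A second, related gap: the identity $T[v_N]=v_N+\bar E$ is asserted globally, but it can fail near the anchor (for instance at $y=x_{-N}$ when $S(x_{-N},x_{-N})>0$), because your concatenation argument peels off the penultimate point of a near-optimal path and therefore breaks down when the only near-optimal path is the single step $(x_{-N},y)$ --- this is exactly the defect of the Busemann construction that the introduction warns about. What you actually need, and what is true, is the identity at each fixed $y$ for $N_j$ large: once $|x_{-N_j}-y|$ is large enough that $E(x_{-N_j},y)-\bar E>C(|x_{-N_j}-y|+1)\ge S(x_{-N_j},y)$ (lemma~\ref{lemma:aprioriLipschitzLarge} plus (H3)), every near-optimal path from $x_{-N_j}$ to $y$ has at least two steps and the peeling argument applies. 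With these two repairs the rest of your argument --- the rigidity $v_N(x_n)=a_n$ with $a_n-a_m=S(x_m,x_n)$, stability of the sub-solution inequality, passage to the limit in the attainment, and lemma~\ref{Lemma:AprioriBoundWeakKAMsolution} for the global Lipschitz bound --- goes through and gives a proof parallel to, but distinct from, the paper's.
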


We first prove an \textit{a priori} linear growth of the Ma\~n\'e potential. Note that, thanks to hypotheses~(H1) and (H2),
$$ \forall \, R > 0, \qquad \sup_{|y-x| \le R} E(x,y) < +\infty. $$  

\begin{lemma} \label{lemma:aprioriLipschitzLarge}
There exists a constant $C>0$ such that
\[
\forall \,x,y \in \mathbb{R}, \quad |S(x,y)| \leq C(|y-x|+1).
\]
\end{lemma}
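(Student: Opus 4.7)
The plan is to bound $S(x,y)$ above by a linear function of $|y-x|$ using a chain of closely spaced intermediate points, and then deduce the lower bound via the triangle inequality for $S$.

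First, I would extract a uniform bound on $E$ over short jumps. Using (H1) and (H2), for any $x,y$ with $|y-x| \leq 1$,
$$ E(x,y) \leq E(x,x) + C_{\text{Lip}}^E(1)|y-x| \leq \sup_{x \in \mathbb{R}} E(x,x) + C_{\text{Lip}}^E(1) =: M < +\infty. $$
Set $C := \max\{M - \bar E, \, 0\} + 1$, which is finite and positive.

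Next, I would prove the upper bound $S(x,y) \leq C(|y-x|+1)$. If $|y-x| \leq 1$, then using item \eqref{item:SimplePropertiesManePotential_6} of remark~\ref{remark:SimplePropertiesManePotential}, $S(x,y) \leq E(x,y) - \bar E \leq M - \bar E \leq C \leq C(|y-x|+1)$. Otherwise, write $n := \lceil |y-x| \rceil \geq 2$ and set $x_k := x + \frac{k}{n}(y-x)$ for $k = 0, \dots, n$, so consecutive points are at distance $\leq 1$. By definition of the Mañé potential,
$$ S(x,y) \leq \sum_{k=0}^{n-1} \bigl( E(x_k,x_{k+1}) - \bar E \bigr) \leq n(M - \bar E) \leq (|y-x|+1)C. $$

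For the lower bound, I would use item \eqref{item:SimplePropertiesManePotential_4} of remark~\ref{remark:SimplePropertiesManePotential}, which gives $S(y,y) \geq 0$, combined with the subadditivity in \eqref{item:SimplePropertiesManePotential_5}:
$$ 0 \leq S(y,y) \leq S(y,x) + S(x,y), $$
so that $S(x,y) \geq -S(y,x)$. Applying the upper bound already obtained to $S(y,x)$ yields $S(x,y) \geq -C(|y-x|+1)$. Combining both bounds gives $|S(x,y)| \leq C(|y-x|+1)$.

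There is no significant obstacle here: the only subtlety is that one needs the right triangle inequality to flip the direction of the chain (since a chain from $y$ to $x$ does not directly bound $S(x,y)$), but this is immediate from the properties already listed in remark~\ref{remark:SimplePropertiesManePotential}. The argument relies only on (H1), (H2), and the elementary properties of $S$ and $\bar E$, consistently with the stated assumptions at the beginning of the section.
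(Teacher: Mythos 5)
Your proof is correct and follows essentially the same route as the paper: subdivide the segment from $x$ to $y$ into at most $|y-x|+1$ steps of length at most $1$, bound each local contribution using the uniform bound on $E$ over short jumps (from (H1)--(H2)), and conclude. The only cosmetic difference is that you make the lower bound explicit via $0 \leq S(y,y) \leq S(y,x)+S(x,y)$, whereas the paper bounds $|S(t_{k-1},t_k)|$ directly using property~\eqref{item:SimplePropertiesManePotential_6} of remark~\ref{remark:SimplePropertiesManePotential}; both rest on the same elementary facts.
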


\begin{proof}
Define
\[
C := \sup_{|y-x| \leq 1} | E(x,y) - \bar E|.
\]
Choose $n\geq 1$ such that $n-1 \leq |y-x| < n$, and denote $t_k := x+ \frac{k}{n}(y-x)$. Then
\[
|S(x,y)| \leq \sum_{k=1}^n |S(t_{k-1},t_k)| \leq n C \leq C(|y-x|+1). \qedhere
\]
\end{proof}

\begin{proof}[Proof of theorem~\ref{theorem:atMostLinearWeakKAMsolution}]
Suppose that
\[
\limsup_{k\to+\infty} x_k = +\infty \quad \text{and} \quad \liminf_{k\to-\infty} x_k = -\infty.
\]
The other case follows from this one by introducing $ \hat E(x, y) := E(-x, -y) $ and noticing that
$ (\hat x_k)_k := (- x_k)_k $ is calibrated with respect to $ \hat S(x, y) = S(-x, -y) $.

\medskip
\textit{Step 1.} The idea of the proof is to construct (in a uniformly Lipschitz way) approximated weak KAM solution on an exhausting sequence of compact intervals $B_N := [x_{i_N},x_{j_N}]$, $i_N \to-\infty$, $j_N\to+\infty$. The difference between the standard Lax-Oleinik operator and the approximated one is that we impose on the latter a fixed boundary condition on the set of solutions outside $B_N$. 

We define inductively two sequences of indices 
\[
\cdots \leq  i_{2} \leq i_{1} < 0 < j_1 \leq j_2 \leq \cdots
\]
such that for every $N\geq 1$,
\begin{align*}
&x_{i_{N}-1} < x_0-N < x_0+ N < x_{j_N+1} \text{ and}  \\
&\forall \,i_{N} \leq k \leq j_N, \ x_k \in  [x_0-N,x_0+N].
\end{align*}
Let $B_N := [x_{i_{N}},x_{j_N}]$ and $T_N$ be the operator acting on $C^0(B_N)$ by
\[
\forall \,y \in B_N, \quad T_N[u](y) = \min_{x\in\mathbb{R}} \big[ \tilde u(x) + E(x,y) -\bar E \big],
\]
where $\tilde u$ is the extension of $u$ on $\mathbb{R}$ defined as 
\[
\forall \,x \not\in B_N, \quad \tilde u(x) = S(x_{i_N-2},x).
\]
Note that $T_N$ is well defined thanks to the superlinearity of the interaction and the sublinearity of $S$. 
We show there exists a constant $K>0$ such that for every $N$,  $T_N$ preserves the following functional space
\begin{align*}
\mathcal{H}_N := \big\{  u  \in\ & C^0(B_N) :  \forall \,i_N \leq k \leq j_N, \ u(x_k) = S(x_{i_N-2},x_k), \\
&\forall \,x \in B_N, \ u(x) \geq S(x_{i_N-2},x), \\
&\forall \,x,y \in B_N \text{ with } |x-y|<1, \ |u(y)-u(x)| \leq K|y-x|  \big\},
\end{align*}
Note that for  $u \in \mathcal{H}_N$,
\[
\forall \,x \in \mathbb{R}, \ \tilde u(x) \geq S(x_{i_N-2},x), \quad \forall \,k\in\mathbb{Z}, \ \tilde u(x_k) = S(x_{i_N-2},x_k). 
\]
To prove the invariance of $\mathcal{H}_N$ under $T_N$,  observe for every $y \in B_N $ and $x \in \mathbb{R}$,
\[
\tilde u(x) + E(x,y) - \bar E \geq S(x_{i_N-2},x) + S(x,y) \geq S(x_{i_N-2},y),
\]
and for every $i_N \leq  k \leq j_N$, thanks to the calibration of $(x_k)_{k\in\mathbb{Z}}$,
\[
\tilde u(x_{k-1}) + E(x_{k-1},x_k) - \bar E = S(x_{i_N-2},x_{k-1}) + S(x_{k-1},x_k) = S(x_{i_N-2},x_k),
\]
which implies 
\begin{gather*}
T_N[u](y) \geq S(x_{i_N-2},y), \  \forall \,y \in B_N \quad \text{and} \\
T_N[u](x_k) = S(x_{i_N-2},x_k), \  \forall \,i_N \leq k \leq j_N. 
\end{gather*}

Let $y \in B_N$. We prove that  the infimum in the definition of $T_N[u](y)$ is attained at some $x \in \mathbb{R}$ satisfying $|y-x| \leq R$ for some uniform constant $R>0$. 
Define  $\rho := \sup_{k\in\mathbb{Z}}|x_{k+1}-x_k|$. On the one hand, if $x_k \in B_N $ is chosen such that $|y-x_k| \leq \rho$, then
\begin{align*}
T_N[u](y) &\leq \tilde u(x_k) + E(x_k,y) - \bar E = S(x_{i_N-2},x_k) +  E(x_k,y)-\bar E  \\
&\leq S(x_{i_N-2},y) + S(y,x_k) + E(x_k,y)-\bar E \\
&\leq S(x_{i_N-2},y) + 2\sup_{|x-x'|\leq\rho} \big| E(x,x')-\bar E  \big|,
\end{align*}
On the over hand, by the superlinearity of the interaction, one can find $x \in \mathbb{R}$ such that
\[
T_N[u](y) = \tilde u(x) + E(x,y) - \bar E \geq S(x_{i_N-2},x) + E(x,y) - \bar E.
\]
Combining both inequalities, one obtains
\[
E(x,y) - \bar E \leq S(x,y) + D,
\]
with $D := 2 \sup_{|y-x| \leq \rho} | E(x,y)-\bar E |$. Using again the superlinearity of $E$ and the constant $C$ from lemma \ref{lemma:aprioriLipschitzLarge}, one gets for some constant $B>0$,
\begin{gather*}
(C+1)|y-x| - B \leq E(x,y) - \bar E \leq S(x,y) +D \leq C(|y-x|+1) + D, \\
|y-x| \leq R \quad\text{with}\quad R := B+C+D.
\end{gather*}

We prove that $T_N[u]$ is Lipschitz continuous. Consider $y_1,y_2 \in B_N $ with $|y_2-y_1| \leq 1$. Then there exists $x\in\mathbb{R}$, $|y_1-x| \leq R$, such that
\begin{gather*}
T_N[u](y_1) = \tilde u(x) + E(x,y_1) \quad \text{and} \quad  T_N[u](y_2) \leq  \tilde u(x) + E(x,y_2).
\end{gather*}
Using the constant $C_\text{Lip}^E(R)$ as in~(H2) and denoting $K:= C_\text{Lip}^E(R+1)$, one obtains
\[
|T_N[u](y_2) - T_N[u](y_1) | \leq K |y_2 -y_1|.
\]
In conclusion, $\mathcal{H}_N$ is a compact convex subset of $C^0(B_N)$ for the uniform topology. 
The non linear operator $T_N : \mathcal{H}_N \to \mathcal{H}_N$ is $1$-Lipschitz. 
By Schauder-Tychonoff theorem, $T_N$ admits a fixed point $u_N$. 

\textit{Step 2.} Define 
\[
v_N(y) := u_N(y) -u_N(x_0), \quad \forall \,y \in B_N.
\]
Then, for $N$ sufficiently large, $x_{i_N} < x_0-R < x_0 + R < x_{j_N}$,
\begin{itemize}
\item $v_N(x_0)=0$,
\item $\forall \,x,y \in B_N  \text{ with } |x-y|<1, \ |v_N(y) - v_N(x)| \leq K|y-x|$,
\item $\forall \,x ,y\in B_N, \ v_N(y) \leq v_N(x) + E(x,y) - \bar E$,
\item $\forall \,y \in [x_{i_N}+R,x_{j_N}-R], \ \exists \, x \in B_N$  such that $|y-x|\leq R$ and  $ v_N(y) = v_N(x) + E(x,y) - \bar E$,
\item $\forall \,i_N \leq k < l \leq j_N, \ v_N(x_l) - v_N(x_k) = S(x_k,x_l)$.
\end{itemize}
By using a diagonal procedure of extraction, there exists a subsequence of $(v_N)_{N}$ that converges uniformly on any compact interval to a $K$-Lipschitz function $v:\mathbb{R} \to \mathbb{R}$  that is a weak KAM solution calibrating $(x_k)_{k\in\mathbb{Z}}$.
\end{proof}

\section{Non-degenerate and weakly twist models}\label{secao intermediaria}

The main result of this section guarantees that, for weakly twist models (\textit{i.e.}, interactions fulfilling all the assumptions (H1-4) of hypotheses~\ref{assumption:interactionModel}) that satisfy the non-degenerate condition $ \inf_x E(x,x) > \bar E $, there are always weak KAM solutions. 
In order to apply theorem~\ref{theorem:atMostLinearWeakKAMsolution}, we prove in lemma~\ref{concordancia fundamentais e calibradas} the existence of increasing as well as decreasing Ma\~n\'e calibrated configurations with bounded jumps and unbounded from above and below. 
Actually we improve a result obtained in \cite{GaribaldiPetiteThieullen2017} for which the environment is supposed to be a 
quasi-crystal -- in particular, it possesses a uniquely ergodic hull and the interaction is pattern equivariant.
On the contrary lemma~\ref{concordancia fundamentais e calibradas} does not require any particular assumption
on the structure of an underlying substrate.

We first gather results that have been proved in \cite{GaribaldiPetiteThieullen2017}.

\begin{lemma} \label{Lemma:ManePotential}
Let $E$ be a weakly twist interaction. Then
\begin{enumerate}
\item $\forall\, x<y, \ S(x,y) = \inf_{x=x_0 < x_1 < \cdots < x_n=y} \big\{ E(x_0,\ldots,x_n) -n \bar E \big\}$, \label{item 1.Lemma:ManePotential}
\item $\forall\, x > y, \ S(x,y) = \inf_{x=x_0 > x_1 > \cdots > x_n=y} \big\{ E(x_0,\ldots,x_n) -n \bar E \big\}$, 
\item $\forall\, x \in\mathbb{R}, \ S(x,x) = E(x,x) - \bar E$.
\label{item 3.Lemma:ManePotential}
\end{enumerate}
Moreover, if the interaction is pattern equivariant with respect to a quasi-periodic set, then the Ma\~n\'e potential is also pattern equivariant. 
\end{lemma}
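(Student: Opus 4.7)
The plan is to first extract from (H4), via double integration, the strict Aubry-type crossing inequality: for $a<b$ and $c<d$,
\[
E(a,c)+E(b,d) < E(a,d)+E(b,c),
\]
since $[E(a,d)+E(b,c)]-[E(a,c)+E(b,d)]=-\int_c^d\!\int_a^b\partial_x\partial_y E\,dx\,dy$, and (H4) together with Fubini forces $\partial_x\partial_y E<0$ on a full-measure subset of every non-degenerate rectangle. This is the only analytic ingredient needed. I would then address item (i) by a finite surgery argument: the inequality $S(x,y)\leq \inf_{x=x_0<\cdots<x_n=y}[E(x_0,\ldots,x_n)-n\bar E]$ is immediate; for the reverse, given any configuration $(x_0=x,\ldots,x_n=y)$, any strict interior local maximum $x_{k-1}<x_k>x_{k+1}$ is \emph{skipped}, yielding $(x_0,\ldots,x_{k-1},x_{k+1},\ldots,x_n)$ of length $n-1$. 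The resulting change in the reduced action is
\[
\bigl[E(x_{k-1},x_{k+1})-E(x_{k-1},x_k)-E(x_k,x_{k+1})\bigr]+\bar E \;<\; \bar E-E(x_k,x_k)\;\leq\; 0,
\]
the first inequality by the crossing applied to the rectangle $[x_{k-1},x_k]\times[x_{k+1},x_k]$, the second by remark~\ref{remark:SimplePropertiesManePotential}(ii). Strict local minima are handled symmetrically; consecutive equalities $x_k=x_{k+1}$ are removed with the same non-increase estimate. Each operation strictly shortens the configuration, so the procedure terminates in a strictly increasing path from $x$ to $y$ whose reduced action is no larger. Item (ii) follows by applying (i) to $\hat E(x,y):=E(-x,-y)$, which is also weakly twist and has Ma\~n\'e potential $\hat S(x,y)=S(-x,-y)$.

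For item (iii), the same surgery applied to a loop $(x,x_1,\ldots,x_{n-1},x)$ terminates in a monotone configuration with coinciding endpoints, hence constant at $x$; its reduced action equals $n(E(x,x)-\bar E)\geq E(x,x)-\bar E$ by remark~\ref{remark:SimplePropertiesManePotential}(ii), so $S(x,x)\geq E(x,x)-\bar E$. The reverse inequality comes from the one-step loop $(x,x)$. For the pattern equivariance of $S$: if $\texttt{P}$ and $\texttt{P}+t$ are both patterns of $\omega$ and $x,y\in[\min\texttt{P}+\varsigma_0,\max\texttt{P}-\varsigma_0]$, then by (i)--(iii) the infimum defining $S(x,y)$ is realized over monotone configurations contained in $[\min(x,y),\max(x,y)]$, a region on which definition~\ref{invariancia de motivos} gives $E(z,z')=E(z+t,z'+t)$ termwise; translating any such configuration by $t$ delivers $S(x,y)=S(x+t,y+t)$.

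The main obstacle is ensuring a \emph{strict} drop of the reduced action at each surgery step, for this is what guarantees that every near-minimizer of $S(x,y)$ can be improved to a strictly monotone competitor. The strictness comes from the a.e.\ form of (H4), which upgrades the crossing inequality to a strict one on every non-degenerate rectangle, combined with the non-negativity $E(x_k,x_k)-\bar E\geq 0$ of remark~\ref{remark:SimplePropertiesManePotential}(ii); finite termination is then automatic since each surgery decreases the length by one, and the existence of a strict interior extremum in any non-constant, non-monotone sequence (once duplicates have been cleared) is an elementary combinatorial check.
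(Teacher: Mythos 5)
Your argument is correct. Note that the paper itself offers no proof of this lemma beyond the citation ``see proposition 24 in \cite{GaribaldiPetiteThieullen2017}'', so what you have done is reconstruct, in a self-contained way, essentially the argument of that reference: the strict Aubry crossing inequality extracted from (H4) (your double-integration plus Fubini step is sound, since the $C^2$ regularity makes the mixed partial continuous and the bad set a null set of every rectangle), followed by the surgery that deletes strict interior extrema and adjacent duplicates, with the non-increase of the reduced action guaranteed by the crossing inequality together with $\bar E \le \inf_x E(x,x)$ from remark~\ref{remark:SimplePropertiesManePotential}. Two small remarks. First, strictness of the crossing inequality is not actually needed for items (i)--(iii): a weak drop of the reduced action at each surgery step already identifies the infima, so your ``main obstacle'' is a convenience rather than a necessity (the strict version is of course true and used elsewhere in the paper). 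Second, in item (iii) one must be careful where the surgery stops: if you also cleared the final adjacent duplicate of the loop you would collapse to a single point and only conclude $S(x,x)\ge 0$. Your stated terminal state --- a constant configuration at $x$ with at least one step, reached because the endpoints are preserved and a non-constant sequence with equal endpoints always admits a surgery move while its length exceeds two --- is the right stopping point and gives $S(x,x)\ge E(x,x)-\bar E$ as displayed; it would be worth making this stopping rule explicit, since it differs from the ``run to strict monotonicity'' rule used in item (i). The pattern-equivariance step is fine: items (i)--(iii) confine the relevant configurations to $[\min(x,y),\max(x,y)]$, where the equivariance of $E$ applies termwise in both directions of translation.
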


\begin{proof}
See proposition 24 in \cite{GaribaldiPetiteThieullen2017}.
\end{proof}
    
We assume from now on that $\inf_{x \in\mathbb{R}} E(x,x) > \bar E$. 
We choose once for all $\eta_0>0$ such that 
\begin{equation}
\forall\, x, y \in\mathbb{R}, \ |y-x| < \eta_0 \ \Rightarrow \  E(x,y) -\bar E > \eta_0. \label{definicao de eta}
\end{equation}

\begin{lemma}\label{lematecnico}
Assume that $E$ is a weakly twist interaction model satisfying $\inf_{x} E(x,x) > \bar{E}$. 
Then, there exist constants $A_0,B_0 > 0$ such that for any  $(x_0, \ldots, x_n) \in \mathbb{R}^{n+1}$, with $x_0=x$ and $x_n=y$, \begin{equation}
n \leq A_0 |y-x| + B_0\sum_{k=1}^n  \big( E(x_{k-1},x_k) - \bar E \big), \label{Equation:WeaklyTwistAdditionalProperties}
\end{equation}
where  $A_0 = (\bar E - \inf_{x,y \in \mathbb{R}}E(x,y) +\eta_0)/\eta_0^2$ and $B_0 = 1/\eta_0$.
\end{lemma}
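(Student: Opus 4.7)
My plan is to prove a pointwise inequality that controls the contribution of each step to the total length $n$ by a combination of its action excess and its displacement, and then telescope along the full path. The specific constants $A_0 = (\bar E - \inf E + \eta_0)/\eta_0^2$ and $B_0 = 1/\eta_0$ are chosen precisely so that, writing $C := \bar E - \inf_{\mathbb{R}^2} E \geq 0$ (the nonnegativity comes from remark~\ref{remark:SimplePropertiesManePotential}), the balance identity $A_0 \eta_0 = 1 + B_0 C$ is satisfied. This balance is what makes the per-step estimate possible.

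\textbf{Per-step inequality.} First I would prove that for every $u, v \in \mathbb{R}$,
\begin{equation*}
1 \leq B_0 \bigl( E(u,v) - \bar E \bigr) + A_0 |v - u|,
\end{equation*}
by considering two cases. If $|v - u| < \eta_0$, the defining property \eqref{definicao de eta} of $\eta_0$ forces $B_0\bigl(E(u,v) - \bar E\bigr) > B_0 \eta_0 = 1$, so the first term alone exceeds $1$. If $|v - u| \geq \eta_0$, combine the uniform lower bound $E(u,v) - \bar E \geq -C$ (from (H1)) with the estimate $A_0 |v - u| \geq A_0 \eta_0 = 1 + B_0 C$ to conclude that the right-hand side is at least $-B_0 C + 1 + B_0 C = 1$.

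\textbf{Summation and monotone reduction.} Summing the per-step inequality over $k = 1, \ldots, n$ yields
\begin{equation*}
n \leq B_0 \sum_{k=1}^n \bigl( E(x_{k-1}, x_k) - \bar E \bigr) + A_0 \sum_{k=1}^n |x_k - x_{k-1}|.
\end{equation*}
For a monotone configuration one has $\sum_k |x_k - x_{k-1}| = |x_n - x_0| = |y - x|$, and the desired estimate is immediate. For a possibly oscillating configuration, I would invoke the weakly twist hypothesis (H4) via the Aubry-type crossing mechanism that already underlies lemma~\ref{Lemma:ManePotential}: at each turning point of $(x_0, \ldots, x_n)$, the strict twist $\partial_{xy}^2 E < 0$ allows one to uncross and produce a monotone competitor $(x_0^*, \ldots, x_n^*)$ with $x_0^* = x$, $x_n^* = y$, with the same length $n$, and with total action no larger. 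Applying the monotone case of the telescoped inequality to this competitor then transfers the bound back to the original path.

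\textbf{Main obstacle.} The delicate step is precisely this monotone reduction: iteratively uncrossing every turning point of a configuration without changing its length and without increasing its total action, so that in the end $\sum |x_k - x_{k-1}|$ collapses to $|y - x|$. Once this classical piece of twist-map theory is in place, the remainder is the two-case verification of the per-step inequality and the direct telescoping sum above.
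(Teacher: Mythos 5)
Your per-step inequality is correct (the balance $A_0\eta_0 = 1 + B_0(\bar E - \inf E)$ does exactly what you intend), and after summation it handles monotone configurations just as the paper's counting does. The genuine gap is the monotone reduction you invoke: it is \emph{not} true that every configuration $(x_0,\ldots,x_n)$ with endpoints $x,y$ admits a (weakly) monotone competitor with the same endpoints, the \emph{same} number $n$ of steps, and total action no larger. A weakly monotone configuration from a point to itself must be constant, and this already yields a counterexample inside the paper's own class of models: take $E(x,y)=\frac12|y-x-\lambda|^2+K(1-\cos 2\pi x)$ with $\lambda=2$, $K=\frac14$, which is weakly twist and non-degenerate since $K<\lambda^2/8$ (proposition~\ref{Exemplo hipotese} with $\rho=1$). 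The two-step path $(\tfrac12,0,\tfrac12)$ has action $E(\tfrac12,0)+E(0,\tfrac12)=\lambda^2+\tfrac14+2K=4.75$, whereas the only weakly monotone two-step configuration from $\tfrac12$ to $\tfrac12$, namely $(\tfrac12,\tfrac12,\tfrac12)$, has action $\lambda^2+4K=5$. So no uncrossing procedure can produce a monotone path of the same length with action no larger; the Aubry crossing mechanism underlying lemma~\ref{Lemma:ManePotential} (lemma 23 of \cite{GaribaldiPetiteThieullen2017}) does not preserve the number of points: it produces a strictly monotone \emph{sub}configuration, and the discarded points reappear as self-interaction terms $E(x_j,x_j)$, not as steps of a path.

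This is precisely where the non-degeneracy hypothesis has to be used once more, and it is how the paper closes the argument. Instead of a same-length competitor, one uses
\begin{equation*}
E(x_0,\ldots,x_n)-n\bar E \;\ge\; E(x_{i_0},\ldots,x_{i_\sigma})-\sigma\bar E+\sum_{j\notin\{i_0,\ldots,i_\sigma\}}\big(E(x_j,x_j)-\bar E\big),
\end{equation*}
where $(x_{i_0},\ldots,x_{i_\sigma})$ is monotone with the same endpoints but only $\sigma\le n$ steps. Each of the $n-\sigma$ dropped indices contributes more than $\eta_0$ by \eqref{definicao de eta} (equivalently, $B_0\big(E(x_j,x_j)-\bar E\big)>1$, your Case~1 applied to $u=v=x_j$), and the monotone part is counted exactly as in your two-case estimate: steps shorter than $\eta_0$ pay $\eta_0$ in action, steps of length at least $\eta_0$ number at most $|y-x|/\eta_0$. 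If you apply your per-step inequality to the subconfiguration and this self-interaction bound to each dropped index, your proof goes through; as written, with the claimed same-length monotone competitor, it does not.
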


\begin{proof}
If $(x_0, \ldots, x_n)$ is not monotone, then by lemma 23 in \cite{GaribaldiPetiteThieullen2017} there exists a subset of distinct indices $\{ i_0, \ldots, i_\sigma\}$ of $\{0, \ldots, n\}$, with $i_0=0$, $i_\sigma=n$, such that $(x_{i_0}, \ldots, x_{i_\sigma})$ is strictly monotone, and
\begin{equation}\label{desigualdade_twist}
E(x_0, \ldots, x_n) - n \bar E > E(x_{i_0}, \ldots, x_{i_\sigma}) -\sigma \bar E + \sum_{j \not\in \{ i_0, \ldots, i_\sigma\}} \big( E(x_j,x_j) - \bar E \big).
\end{equation}
If $(x_0, \ldots, x_n)$ is monotone, we choose $\sigma=n$ and $i_k = k$ for all $k $. In both cases, we obtain
\begin{gather*}
E(x_0, \ldots, x_n) - n \bar E \geq E(x_{i_0}, \ldots, x_{i_\sigma}) -\sigma \bar E + (n-\sigma) \eta_0.
\end{gather*}
We now consider the set of indices $I \subseteq \{0, \ldots, \sigma-1\}$ such that $k \in I$ if and only if  $|x_{i_k}- x_{i_{k+1}}| \geq \eta_0$. 
If $k \in I$, we use an \textit{a priori} lower bound
\[
E(x_{i_k}, x_{i_{k+1}}) - \bar E \geq  E_{min} - \bar E
\]
where $E_{min} = \inf_{x,y \in\mathbb{R}} E(x,y)$. If $k \not\in I$, the definition of $\eta_0$ gives
\[
E(x_{i_k}, x_{i_{k+1}}) - \bar E \geq \eta_0.
\]
Hence, we have
$$ E(x_{i_0}, \ldots, x_{i_\sigma}) - \sigma \bar E \geq |I| ( E_{min} -\bar E) + (\sigma-|I|) \eta_0. $$
Combining the estimates above, we obtain 
$$
E(x_0, \ldots, x_n) - n \bar E \geq |I| ( E_{min} -\bar E - \eta_0) + n \eta_0. 
$$
By monotonicity of $(x_{i_0}, \ldots , x_{i_\sigma})$, clearly $|I| \leq |x_{i_\sigma} - x_{i_0}|/\eta_0 = |x_{n} - x_{0}|/\eta_0$, so that
\begin{gather*}
n \leq \frac{\bar E - E_{min} +\eta_0}{\eta_0^2} |y-x| + \frac{1}{\eta_0} \sum_{k=1}^n \big( E(x_{k-1},x_k) - \bar E \big).
\end{gather*}
\end{proof}

We show in the following lemma that the infimum in the definition in $S(x,y)$ is actually a minimum, and that the number of points realizing the minimum is bounded from above by a quantity proportional to  $|y-x|$. 
Besides, we prove that the successive jumps of Ma\~n\'e calibrated configurations are uniformly bounded from above and from below.

\begin{proposition} \label{Lemma:WeaklyTwistAdditionalProperties}
Suppose that $E$ is a weakly twist interaction model fulfilling $\inf_{x} E(x,x) > \bar{E}$. 
\begin{enumerate}
\item  \label{Item:WeaklyTwistAdditionalProperties_2} 
For every $x \not= y$, there are an integer $n\geq1$ and a strictly monotone configuration $ (x_0, \ldots, x_n) $, with $ x_0 = x $ and $ x_n = y $, fulfilling  
\begin{equation*}
S(x,y) = \sum_{k=1}^n E(x_{k-1},x_k) - n\bar E.
\end{equation*}

\item \label{Item:WeaklyTwistAdditionalProperties_1}   
There exist constants $ A > 0 $ and $ B \ge 0 $ such that, for every pair of points $ x, y \in \mathbb R $,
if $(x_0 \ldots, x_n) \in \mathbb{R}^{n+1}$ satisfies $x_0=x$, $x_n=y$, and $S(x,y) = \sum_{k=1}^n E(x_{k-1},x_k) - n\bar E$, then
\begin{align*}
& S(x_p, x_q) = \sum_{k=p+1}^q E(x_{k-1}, x_k) - (p-q)\bar E, \quad \forall \, 0 \le p < q \le n, \\
& (x_0, \ldots,x_n) \ \text{is monotone} \quad \text{and}  \quad \ n \leq A |y-x| + B.
\end{align*}
Besides, for $ n \ge 2 $, $ (x_0, \ldots, x_n) $ is strictly monotone.

\item  \label{Item:WeaklyTwistAdditionalProperties_3}  
The Ma\~n\'e potential admits a negative sublinear lower bound in the following sense:
\[
\forall\, \alpha>0, \ \exists\, \beta\geq0, \  \forall\, x, y \in \mathbb{R}, \ S(x,y) \geq -\alpha |y-x| - \beta.
\]
\item  \label{Item:WeaklyTwistAdditionalProperties_4}  
There exist constants $0 < r < R$ such that  every Ma\~n\'e calibrated subconfiguration  $ (x_p, \ldots, x_q) $ with $q-p\geq2$ is strictly monotone and satisfies
\begin{gather*}
\forall\, p \le k < q, \quad \ r \leq  |x_{k+1} - x_k| \leq R.
\end{gather*}
\end{enumerate}
\end{proposition}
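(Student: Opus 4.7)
My plan is to prove the four items in the order (i), (ii), (iv), (iii), since the sublinear lower bound in (iii) rests on the jump bounds from (iv), which in turn exploit the calibration of subconfigurations given by (ii). For (i), thanks to lemma~\ref{Lemma:ManePotential}(i)-(ii), the infimum defining $S(x,y)$ may be restricted to strictly monotone configurations from $x$ to $y$. Along any minimizing sequence the total reduced action is eventually bounded above by $S(x,y)+1 \leq E(x,y)-\bar E+1$, so lemma~\ref{lematecnico} uniformly bounds the number of steps $n$. Passing to a subsequence with constant $n$ and noting that intermediate points lie in the compact interval between $x$ and $y$, continuity of $E$ and compactness produce a weakly monotone minimizer. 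Strict monotonicity is then obtained by deleting any duplicate $x_{k-1}=x_k$: the reduced action drops by $E(x_k,x_k)-\bar E>0$ (by non-degeneracy), contradicting minimality.

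For (ii), let $(x_0,\ldots,x_n)$ realize $S(x,y)$. If it were not monotone, inequality~\eqref{desigualdade_twist} combined with $E(x_j,x_j)-\bar E>0$ would yield a strictly monotone sub-path from $x$ to $y$ of strictly smaller reduced action, a contradiction. The calibration of every subconfiguration $(x_p,\ldots,x_q)$ follows by the standard substitution argument: any strictly better path between $x_p$ and $x_q$ could be spliced in to improve the whole. Strict monotonicity for $n\geq 2$ uses the same deletion argument as in (i). The estimate $n \leq A|y-x|+B$ is obtained by feeding the upper bound $S(x,y) \leq C(|y-x|+1)$ from lemma~\ref{lemma:aprioriLipschitzLarge} into the inequality of lemma~\ref{lematecnico}.

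For (iv), the upper bound $|x_{k+1}-x_k| \leq R$ follows from the pair calibration $E(x_k,x_{k+1})-\bar E = S(x_k,x_{k+1}) \leq C(|x_{k+1}-x_k|+1)$ combined with the superlinearity (H3). For the lower bound, the calibration of a 3-point subconfiguration $(x_{k-1},x_k,x_{k+1})$, together with the shortcut inequality $S(x_{k-1},x_{k+1}) \leq E(x_{k-1},x_{k+1})-\bar E$, produces the Aubry-type relation
\[
E(x_{k-1},x_k) + E(x_k,x_{k+1}) - E(x_{k-1},x_{k+1}) \leq \bar E.
\]
Using (H2) on the region of diameter $\leq 2R$ to control both $|E(x_k,x_{k+1})-E(x_k,x_k)|$ and $|E(x_{k-1},x_k)-E(x_{k-1},x_{k+1})|$ linearly in $|x_{k+1}-x_k|$, and combining with the non-degeneracy $E(x_k,x_k)-\bar E \geq \delta>0$, forces $|x_{k+1}-x_k| \geq r$ for a uniform $r$ (and symmetrically for $|x_k-x_{k-1}|$). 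This lower bound is the main obstacle, as it is the one place where the twist condition (through the calibration inherited from (ii)) and the non-degeneracy must conspire to turn a local Aubry inequality into a quantitative positive spacing between successive atoms.

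Finally for (iii), introduce $\epsilon_n := \bar E - \inf_{y_0,\ldots,y_n} \frac{1}{n}E(y_0,\ldots,y_n)$, which is nonnegative and tends to $0$ by remark~\ref{remark:SimplePropertiesManePotential}(i). For any minimizer of $S(x,y)$,
\[
S(x,y) = \sum_{k=1}^n E(x_{k-1},x_k) - n\bar E \geq -n\epsilon_n.
\]
Given $\alpha>0$, choose $N$ such that $\epsilon_n \leq \alpha/A$ for all $n\geq N$. The upper jump bound from (iv) implies $|y-x| \leq nR$ on the minimizer, so the regime $|y-x| \geq NR$ forces $n \geq N$, hence $S(x,y) \geq -n\epsilon_n \geq -(A|y-x|+B)\alpha/A$ using the bound from (ii). On the complementary compact regime $|y-x| < NR$, property~\eqref{item:SimplePropertiesManePotential_6} gives $S(x,y) \geq \bar E - \sup_{|y-x| \leq NR} E(y,x) =: -\beta_1$. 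Taking $\beta := \max(B\alpha/A,\beta_1)$ concludes.
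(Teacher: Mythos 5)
Your proposal is correct and follows essentially the same route as the paper: restriction to monotone configurations via lemma~\ref{Lemma:ManePotential}, the step-count estimate of lemma~\ref{lematecnico} combined with the a priori bound of lemma~\ref{lemma:aprioriLipschitzLarge}, deletion of duplicate points for strict monotonicity, pair calibration plus superlinearity for the upper jump bound, and Fekete-type convergence of the normalized infima together with $n\leq A|y-x|+B$ for the sublinear lower bound. Your handling of the lower jump bound (Aubry-type inequality measured against $\inf_x E(x,x)-\bar E$) and of item (iii) (splitting into the regimes $|y-x|\geq NR$ and $|y-x|<NR$) are only cosmetic rearrangements of the paper's arguments, which instead use the identity $S(x_k,x_{k+1})=S(x_k,x_{k+2})-S(x_{k+1},x_{k+2})$ compared with the constant $\eta_0$ of~\eqref{definicao de eta}, and a single uniform bound $E(x_0,\ldots,x_n)-n\bar E\geq-\alpha' n-\beta'$, respectively.
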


\begin{proof}
We assume in all items $x \le y$. The other case $x \ge y$ is similar.  
We start by proving item \ref{Item:WeaklyTwistAdditionalProperties_1}.

\medskip
{\it Item \ref{Item:WeaklyTwistAdditionalProperties_1}.} 
Let $(x_0,\ldots, x_n)$ be a  configuration satisfying 
\[
S(x,y) = \sum_{k=1}^n E(x_{k-1},x_k) - n\bar E.
\] 
Obviously for $ 0 \le p < q \le n $, 
\begin{multline*}
S(x_0, x_q) +  \sum_{k=p+1}^q E(x_{k-1}, x_k) - (p-q) \bar E + S(x_q, x_n) \le \\
 \le \sum_{k=1}^n E(x_{k-1},x_k) - n\bar E = S(x_0, x_n)  \le S(x_0, x_q) + S(x_p, x_q) + S(x_q, x_n)
\end{multline*}
implies that $  S(x_p, x_q) =  \sum_{k=p+1}^q E(x_{k-1}, x_k) - (p-q) \bar E $. 
Inequality~\eqref{desigualdade_twist} shows that $(x_0, \ldots, x_n)$ must be monotone, since otherwise one could decrease strictly the Ma\~n\'e potential
\begin{gather*}
S(x,y) = E(x_0, \ldots, x_n) - n \bar E > E(x_{i_0}, \ldots, x_{i_\sigma}) - \sigma \bar E \geq S(x,y).
\end{gather*}
For $n \ge 2 $, the configuration is actually strictly monotone,
since otherwise there would exist $1\le j \le n$ such that $x_{j-1}=x_j$ and we would obtain the same contradiction as above
\begin{align*}
S(x,y) & = \Big[ \sum_{k \neq j} E(x_{k-1},x_k) - (n-1)\bar E \Big] + \big[ E(x_{j-1},x_j) - \bar E \big] \\
& \geq S(x,y) + \inf_{x\in\mathbb{R}} E(x,x) - \bar E.
\end{align*}
The estimate \eqref{Equation:WeaklyTwistAdditionalProperties} thus implies
$ n \leq A_0 |y-x| + B_0 S(x,y) $.
We conclude using the a priori sublinearity estimate for the Ma\~n\'e potential in lemma~\ref{lemma:aprioriLipschitzLarge}, so that
$ n \leq A|y-x| + B $, 
with $A=A_0+B_0 C$ and $B=B_0 C$.

\medskip
{\it Item \ref{Item:WeaklyTwistAdditionalProperties_2}.}
Let us consider a sequence $S_\ell > S(x,y)$ converging to $S(x,y)$. 
Item~\ref{item 1.Lemma:ManePotential} of lemma~\ref{Lemma:ManePotential} shows there exists a strictly increasing configuration $(x^{\ell}_0, \ldots, x^{\ell}_{n_\ell})$, with $ x^{\ell}_0 = x $ and $ x^{\ell}_n = y $, such that 
\[
S_\ell > \sum_{k=1}^{n_\ell} \big( E(x_{k-1},x_k) - n_\ell\bar E \big).
\] 
The estimate~\eqref{Equation:WeaklyTwistAdditionalProperties} implies
$ n_\ell \leq A_0 |y-x| + B_0 S_\ell $.
As $S_\ell \to S(x,y)$, we may assume $n_\ell=n$ is constant. 
We then extract a subsequence of $(x^{\ell}_0, \ldots, x^{\ell}_{n})$ converging to some $(x_0, \ldots, x_{n})$ satisfying
\begin{gather*}
S(x,y) \geq \sum_{k=1}^{n} \big( E(x_{k-1},x_k) - n\bar E \big) \geq S(x,y).
\end{gather*}
The previous item shows that $(x_0, \ldots, x_n)$ is strictly monotone.

\medskip
{\it Item \ref{Item:WeaklyTwistAdditionalProperties_3}.} 
Let $\alpha>0$ and $\alpha'= \alpha / A$, where $ A $ is the constant obtained in the first item.
Thanks to item~\ref{Item:basicWeakKAMtools_1} of definition~\ref{definition:basicWeakKAMtools}, there exists $\beta' \geq0$ such that
\[
\forall \, n \ge 1, \ \ \forall\, (x_0, \ldots, x_n), \quad E(x_0, \ldots, x_n) - n\bar E \geq -\alpha' n -\beta'.
\]
Items~\ref{Item:WeaklyTwistAdditionalProperties_2} and~\ref{Item:WeaklyTwistAdditionalProperties_1} of the present proposition ensure that there is a particular configuration $(x_0, \ldots, x_n)$ such that
\begin{align*}
S(x,y) &= E(x_0, \ldots, x_n) - n \bar E \geq -\alpha' n - \beta' \\
&\geq -\alpha' (A|y-x|+B) -\beta' = -\alpha |y-x| -\beta,
\end{align*}
with $\beta := \alpha' B + \beta'$.

\medskip
{\it Item \ref{Item:WeaklyTwistAdditionalProperties_4}.} Let $(x_p, \ldots, x_q) $ be a Ma\~n\'e calibrated subconfiguration. 
It is strictly monotone for  $ q - p \ge 2 $ as a consequence of item~\ref{Item:WeaklyTwistAdditionalProperties_1}.
From lemma~\ref{lemma:aprioriLipschitzLarge}, we have 
$$ S(x_k, x_{k+1}) \le C ( | x_{k+1} -  x_k | +1 ) $$ 
for some constant $ C $.
From the superlinearity of the interaction, there exists a constant $ B > 0 $ such that 
$$ (C+1)  | x_{k+1} -  x_k |  - B \le E(x_k, x_{k+1}) - \bar E = S(x_k, x_{k+1}). $$ 
Therefore, $ | x_{k+1} -  x_k | \le B + C := R $.
With respect to the lower bound, let first $ \eta_0 > 0 $ be defined as in~\eqref{definicao de eta}.
Note then that for $ p \le k < q - 1 $,
\begin{align}
S(x_k, x_{k+1}) & = S(x_k, x_{k+2}) - S(x_{k+1}, x_{k+2}) \nonumber \\
& \le E(x_k, x_{k+2}) - E(x_{k+1}, x_{k+2}) \nonumber \\
& \le C_\text{Lip}^E (2R) \, | x_{k+1} - x_k |. \label{Lipschitz calibracao}
\end{align}
We claim that $ | x_{k+1} - x_k | > \eta_0 / \big(C_\text{Lip}^E (2R) + 1\big) =: r $. Indeed, otherwise by the very definition of $ \eta_0 $
we would have $ \eta_0 <  E(x_k, x_{k+1}) - \bar E = S(x_k, x_{k+1}) $, but~\eqref{Lipschitz calibracao} shows that 
$ S(x_k, x_{k+1}) \le  C_\text{Lip}^E (2R) \, \eta_0 / \big(C_\text{Lip}^E (2R) + 1\big) < \eta_0 $, and we would reach a contradiction.
The equality $ S(x_{q-1}, x_q) = S(x_{q-2}, x_q) - S(x_{q-2}, x_{q-1}) $ allows to discuss the case of the last index in a similar way.
\end{proof}

The regularity of the Ma\~n\'e potential is an immediate consequence of the previous proposition.

\begin{corollary}\label{potencial Lipschitz}
For a weakly twist interaction $ E $ fulfilling $ \inf_x E(x,x) > \bar E $, the Ma\~n\'e potential is Lipschitz continuous. 
\end{corollary}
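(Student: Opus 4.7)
The plan is to reduce the Lipschitz regularity of $S$ to the local Lipschitz regularity of $E$ granted by (H2), by exploiting the uniform upper bound $R$ on the jumps of calibrated configurations from item~\ref{Item:WeaklyTwistAdditionalProperties_4} of proposition~\ref{Lemma:WeaklyTwistAdditionalProperties}.

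First I would fix $x \in \mathbb{R}$ and establish Lipschitz regularity of $y \mapsto S(x, y)$ with a constant independent of $x$. Given $y$ and $y'$ with $|y' - y| \le 1$, I would select a monotone minimizing configuration $(x_0, \ldots, x_n)$ with $x_0 = x$, $x_n = y$ via item~\ref{Item:WeaklyTwistAdditionalProperties_2}, and uniformly bound the length of its last jump $|y - x_{n-1}|$ by some constant $R^*$ (the key estimate, discussed below). Then I would test $S(x, y')$ against the competitor $(x_0, \ldots, x_{n-1}, y')$, obtaining
\[
S(x, y') - S(x, y) \le E(x_{n-1}, y') - E(x_{n-1}, y) \le C_\text{Lip}^E(R^* + 1)\,|y' - y|
\]
directly from (H2). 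Swapping the roles of $y$ and $y'$ provides the reverse inequality; the same reasoning applied to the first jump yields Lipschitz regularity in the $x$-variable; and arbitrary pairs $(y, y')$ are handled by chaining intermediate points spaced by at most~$1$, combined with a triangle inequality between the two variables.

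The main obstacle is the uniform bound on $|y - x_{n-1}|$. When $n \ge 2$, this is exactly the content of item~\ref{Item:WeaklyTwistAdditionalProperties_4}. The delicate case is $n = 1$, not covered by that item: here optimality gives $S(x, y) = E(x, y) - \bar E$, so combining with the sublinear upper bound $S(x, y) \le C(|y - x| + 1)$ from lemma~\ref{lemma:aprioriLipschitzLarge} forces $E(x, y) - \bar E \le C(|y - x| + 1)$, which by the superlinearity~(H3) confines $|y - x|$ to a universal compact range $R_0$. Taking $R^* := \max(R, R_0)$ closes the argument in all cases, giving global Lipschitz regularity of $S$ on $\mathbb{R} \times \mathbb{R}$ with constant $2 C_\text{Lip}^E(R^* + 1)$.
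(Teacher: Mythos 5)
Your proof is correct and takes essentially the same route as the paper's: both arguments replace the endpoints of a configuration realizing the minimum in $S$ (item~\ref{Item:WeaklyTwistAdditionalProperties_2} of proposition~\ref{Lemma:WeaklyTwistAdditionalProperties}), invoke the uniform jump bound $R$ of item~\ref{Item:WeaklyTwistAdditionalProperties_4} together with (H2), and conclude with the constant $C_\text{Lip}^E(R+1)$ -- the paper perturbing both endpoints at once on unit boxes $I\times J$, you perturbing one variable at a time and then chaining. Your explicit treatment of the $n=1$ case, where the jump bound is recovered from the sublinear upper bound of lemma~\ref{lemma:aprioriLipschitzLarge} and the superlinearity (H3), is a detail the paper leaves implicit and is handled correctly.
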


\begin{proof}
Let $ I, J \subset \mathbb R $ be both open intervals of length 1. It is enough to argue that $ S |_{I \times J} $ is Lipschitz.
Items~\ref{Item:WeaklyTwistAdditionalProperties_2} and~\ref{Item:WeaklyTwistAdditionalProperties_1} of proposition~\ref{Lemma:WeaklyTwistAdditionalProperties} guarantees there exists $ N = N(I,J) > 0 $ such that
$$ S(a, b) = \min_{1\le n \le N} \min_{(a=x_0, \ldots, x_n=b)} \big[ E(x_0, \ldots, x_n) - n \bar E \big], $$
for all $ a \in I $ and $ b \in J $.  Therefore, given $ x, \hat x \in I $ and $ y, \hat y \in J $, items~\ref{Item:WeaklyTwistAdditionalProperties_2} and~\ref{Item:WeaklyTwistAdditionalProperties_4} of proposition~\ref{Lemma:WeaklyTwistAdditionalProperties} provide the estimate
\begin{align*}
| S(x, y) - S(\hat x, \hat y)| \le  \max &  
\Big\{ \max_{\substack{|x-a|\le R \\ |y-b|\le R}} \left|E(x, a) - E(\hat x,a) + E(b, y) - E(b, \hat y)\right|, \\
& \max_{\substack{|\hat x- \hat a|\le R \\ |\hat y- \hat b|\le R}} \big|E(x, \hat a) - E(\hat x, \hat a) + E(\hat b, y) - E(\hat b,\hat y)\big| \Big\}.
\end{align*}
Since $ E $ is locally uniformly Lipschitz, we conclude that
$$ | S(x, y) - S(\hat x, \hat y)| \le C_\text{Lip}^E(R + 1) \, \big(|x -\hat x| + |y - \hat y| \big). $$ 
\end{proof}

The existence of Ma\~n\'e calibrated configurations such as those required among the hypotheses of theorem~\ref{theorem:atMostLinearWeakKAMsolution} was actually proved in~\cite{GaribaldiPetiteThieullen2017} by adopting a viewpoint focused on minimizing Mather measures, a similar strategy to the one inaugurated by Mather~\cite{Mather1991}. We actually do not want to discuss minimizing measures in the present paper and prove the existence of Ma\~n\'e calibrated configurations  in a more direct way.

\begin{lemma}\label{concordancia fundamentais e calibradas}
Assume that $E$ is  a weakly twist interaction model  such that $ \inf_x E(x, x)~>~\bar E $. Then $ E $ admits increasing as well as decreasing
Ma\~n\'e calibrated configurations which have bounded jumps and are unbounded from above and below.
\end{lemma}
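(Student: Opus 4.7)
The plan is to construct the required calibrated configurations as limits of finite minimizers whose endpoints tend to infinity, exploiting the uniform jump bounds already established. For the increasing case, I would fix two sequences $a_n \to -\infty$ and $b_n \to +\infty$, and apply item (i) of proposition~\ref{Lemma:WeaklyTwistAdditionalProperties} to pick, for every $n$, a finite configuration $(z_0^n, z_1^n, \ldots, z_{N_n}^n)$ with $z_0^n = a_n$ and $z_{N_n}^n = b_n$ that realizes $S(a_n, b_n)$. By item (ii) of the same proposition, every sub-configuration of such a minimizer is itself calibrated, and the configuration is strictly monotone (hence strictly increasing since $a_n < b_n$); by item (iv), all of its successive jumps lie in the fixed interval $[r, R]$.

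Next, I would re-center the index set. Since consecutive jumps are at least $r > 0$, for each $n$ there is a unique index $j_n$ such that $z_{j_n}^n$ is the smallest element of the configuration that is $\ge 0$. Set $w_k^n := z_{j_n + k}^n$ for $k$ in the admissible range. Then $w_0^n \in [0, R)$, and the uniform jump bounds force $w_k^n \in [kr, (k+1)R)$ for $k \ge 0$ (with a symmetric bound for $k \le 0$), independently of $n$. Because all jumps are $\le R$ we have $N_n \ge (b_n - a_n)/R \to +\infty$, and because the configuration must reach $a_n$ from a point within $R$ of $0$ we have $j_n \ge -a_n/R - 1 \to +\infty$; similarly $N_n - j_n \to +\infty$. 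Consequently, for every fixed $k \in \mathbb{Z}$, the sequence $(w_k^n)_n$ is eventually defined and lies in a compact interval, so a Cantor diagonal extraction produces a configuration $(w_k)_{k \in \mathbb Z}$. Passing to the limit in the inequalities $r \le w_{k+1}^n - w_k^n \le R$ gives the same bounds for the limit, so $(w_k)_k$ is strictly increasing, has bounded jumps, and is unbounded from above and below.

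It remains to verify Ma\~n\'e calibration of the limit. Fix $p < q$ in $\mathbb Z$; for all sufficiently large $n$ the sub-configuration $(w_p^n, \ldots, w_q^n)$ is defined, and by the first statement of item (ii) of proposition~\ref{Lemma:WeaklyTwistAdditionalProperties} it inherits calibration from the ambient minimizer, giving
\begin{equation*}
S(w_p^n, w_q^n) = \sum_{k=p}^{q-1} \big( E(w_k^n, w_{k+1}^n) - \bar E \big).
\end{equation*}
The right hand side is continuous in the configuration thanks to hypothesis (H2), and the left hand side is continuous by corollary~\ref{potencial Lipschitz}, so passing to the limit yields the same identity for $(w_k)_k$, which is therefore Ma\~n\'e calibrated.

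For the decreasing case, I would run the identical argument with $a_n \to +\infty$ and $b_n \to -\infty$, invoking item (ii) of lemma~\ref{Lemma:ManePotential} in order to restrict to strictly decreasing minimizers; item (iv) of proposition~\ref{Lemma:WeaklyTwistAdditionalProperties} still supplies the jump bounds. The main delicate point of the whole scheme is the bookkeeping of the re-indexing: I must ensure that the diagonal extraction yields a configuration indexed by all of $\mathbb Z$ rather than just a half-line. The two inputs making this work are the uniform upper bound $R$ on jumps, which forces $N_n$, $j_n$, and $N_n - j_n$ all to diverge, and the uniform lower bound $r$, which confines each $w_k^n$ to a compact set independent of $n$; together these are exactly what a Cantor-diagonal argument needs.
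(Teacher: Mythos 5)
Your proposal is correct and follows essentially the same route as the paper: both take finite minimizers realizing $S$ between endpoints diverging to $\mp\infty$ (the paper uses $S(-A,A)$ with $A\to\pm\infty$), use the uniform jump bounds $[r,R]$ and the calibration of sub-configurations from proposition~\ref{Lemma:WeaklyTwistAdditionalProperties}, extract a bi-infinite limit by compactness (the paper re-centers so that $|x_0^A|\le R$ and uses Tychonoff, you use a Cantor diagonal with explicit index bookkeeping), and pass the calibration identity to the limit via continuity of $E$ and of the Ma\~n\'e potential. The differences are only organizational, not mathematical.
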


\begin{proof}
We make use of $ R, r > 0 $, the constants that bound the successive jumps of calibrated configurations according to item~\ref{Item:WeaklyTwistAdditionalProperties_4} of proposition~\ref{Lemma:WeaklyTwistAdditionalProperties}. 
In particular, given $ A \in \mathbb R $ with $ |A| > R $, from proposition~\ref{Lemma:WeaklyTwistAdditionalProperties}
we consider a subconfiguration $ (x_p^A, \ldots, x_q^A) $, with $ p < 0 < q $, such that 
\begin{align*}
& |x_0^A| \le R,  \\
& r \le |x_{k+1}^A - x_k^A| \le R \quad \forall \, k,  \\
& S(-A, A) = E_{q-p}(x_p^A, \ldots, x_q^A) - (q-p) \bar E.
\end{align*}
Note that $ (x_p^A, \ldots, x_q^A) $ is increasing for $ A > 0 $ and decreasing for $ A < 0 $.
By denoting $ x_k^A := x_p^A = -A $ for all $ k \le p $ and $ x_k^A = x_q^A = A $ for all $ k \ge q $, 
we have a con\-fig\-u\-ra\-tion $ (x_k^A)_{k \in \mathbb Z} $ that belongs to the compact set 
$ \prod_{i \in \mathbb Z} \big[ -(|i| +1) R, (|i|+1) R \big] $. 
Hence, as either $ A \to +\infty $ or $ A \to -\infty $,
we are able to obtain an ac\-cu\-mu\-la\-tion point $ (y_k)_{k \in \mathbb Z} $. 
By taking into account a suitable subfamily, we may suppose that $ (y_k) $ is the limit of $ (x_k^A) $.
Obviously $ (y_k) $ fulfills for all $ k \in \mathbb Z $, $ r \le |y_{k+1} - y_k| \le R $. 
It only remains to show that it is a calibrated configuration. However, thanks to the continuity of the Ma\~n\'e potential, from item~\ref{Item:WeaklyTwistAdditionalProperties_1} of proposition~\ref{Lemma:WeaklyTwistAdditionalProperties}, it follows 
for all $ i < j $,
\begin{align*}
S(y_i, y_j) & = \lim_A S(x_i^A, x_j^A) \\ 
& = \lim_A \big[E_{j-i}(x_i^A, \ldots, x_j^A) - (j-i) \bar E\big]\\
& = E_{j-i}(y_i, \ldots, y_j) - (j-i) \bar E.
\end{align*}
\end{proof}

From the previous lemma,  one has the following consequence of theorem~\ref{theorem:atMostLinearWeakKAMsolution}.

\begin{theorem}\label{existenciageral}
Let $ E (x, y) $ be a weakly twist interaction model satisfying  $ \inf_{x} E (x, x) > \bar E $. 
Then there exist a weak KAM solution $u$ and a $u$-calibrated (and thus Ma\~n\'e calibrated) configuration $ (x_k)_{k \in \mathbb Z} $. 
There exist constants $K>0$ and $0 < r < R$ such that  every weak KAM solution $u$ is Lipschitz with $\text{\rm Lip}(u) \leq K$, and every Ma\~n\'e calibrated configuration $(x_k)_{k\in\mathbb{Z}}$ is strictly monotone and satisfies
\begin{gather*}
\forall\, k \in \mathbb{Z} , \ r \leq  |x_{k+1}-x_k| \leq R.
\end{gather*}
\end{theorem}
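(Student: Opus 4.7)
The statement is essentially a consolidation of the results established earlier in the section, so my plan is to assemble them in the correct order rather than introduce new arguments.

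First I would invoke Lemma~\ref{concordancia fundamentais e calibradas} to produce a bi-infinite Ma\~n\'e calibrated configuration $(x_k)_{k\in\mathbb{Z}}$ (either monotone increasing or decreasing) whose consecutive jumps lie in $[r,R]$ and which is therefore unbounded above and below. This gives precisely the object required as the hypothesis of Theorem~\ref{theorem:atMostLinearWeakKAMsolution}.

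Next I would apply Theorem~\ref{theorem:atMostLinearWeakKAMsolution} to this configuration to get a Lipschitz weak KAM solution $u$ satisfying $u(x_n)-u(x_m) = S(x_m,x_n)$ for $m<n$. Because $(x_k)$ is Ma\~n\'e calibrated, we have $S(x_m,x_n) = \sum_{k=m}^{n-1}\big(E(x_k,x_{k+1})-\bar E\big)$, so the chain is in fact $u$-calibrated in the sense of \eqref{eq:uCalibratedConf}. This yields the first assertion (existence of $u$ together with a $u$-calibrated configuration).

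For the uniform Lipschitz bound on \emph{every} weak KAM solution, I would simply quote Lemma~\ref{Lemma:AprioriBoundWeakKAMsolution}, which depends only on $E$ and not on the particular solution, so that a single constant $K$ works. Finally, for the jump estimates, I would recall that any $u$-calibrated (hence Ma\~n\'e calibrated) configuration $(x_k)_{k\in\mathbb{Z}}$ automatically satisfies the uniform bounds $r \le |x_{k+1}-x_k| \le R$ and strict monotonicity by item~\ref{Item:WeaklyTwistAdditionalProperties_4} of Proposition~\ref{Lemma:WeaklyTwistAdditionalProperties}, applied to every finite subconfiguration $(x_p,\ldots,x_q)$ with $q-p\ge 2$.

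There is no real obstacle here, since each ingredient has been already established: the non-degeneracy $\inf_x E(x,x)>\bar E$ is used only through Proposition~\ref{Lemma:WeaklyTwistAdditionalProperties} (for jump control and monotonicity) and Lemma~\ref{concordancia fundamentais e calibradas} (for the existence of the bi-infinite calibrated chain), and Theorem~\ref{theorem:atMostLinearWeakKAMsolution} is stated in sufficient generality that no new hypothesis is required to pass from the calibrated chain to a genuine weak KAM solution. Thus the proof is essentially one paragraph chaining these four citations.
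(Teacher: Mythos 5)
Your proposal is correct and follows exactly the paper's route: the paper presents theorem~\ref{existenciageral} as a direct consequence of lemma~\ref{concordancia fundamentais e calibradas} combined with theorem~\ref{theorem:atMostLinearWeakKAMsolution}, with the uniform Lipschitz bound from lemma~\ref{Lemma:AprioriBoundWeakKAMsolution} and the monotonicity and jump estimates from item~\ref{Item:WeaklyTwistAdditionalProperties_4} of proposition~\ref{Lemma:WeaklyTwistAdditionalProperties}. Nothing is missing in your assembly of these ingredients.
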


The results in the previous theorem and lemma~\ref{Lemma:AprioriBoundWeakKAMsolution} constitute the first two statements of theorem~\ref{theo:mainThm}.

\section{Linearly repetitive quasi-periodic sets}\label{sec:LR}

Our main goal is to show that, in the context of linearly repetitive quasi-periodic sets (see definition~\ref{RepetitiveDeloneSet} below), all weak KAM solutions are of one three types, which may be described according to their kind of growth (linear \textit{versus} sublinear), to the ordering of their calibration, or to the existence or not of calibrated configuration traversing the real line. 
Proposition~\ref{Proposition:WeakKAMsolutionClassificationbis} and corollary~\ref{ordem no primeiro} gather the core of this classification.
Along with proposition~\ref{distancia entre solucoes}, they complete the statement of theorem~\ref{theo:mainThm}.
One of the essential ingredients to reach the classification is the notion of \textit{fundamental configuration}, that is, a finite configuration that performs the minimum sum of a certain fixed number of interactions. 
 In the linearly repetitive framework, with the hypothesis of non-degeneracy $\inf_{x \in\mathbb{R}} E(x,x) > \bar E$, all these fundamental configurations are shown to be ordered in the same way as long as a large enough number of interactions is considered.
This define a preferred ordering. 
The linear repetitivity and the non-degeneracy hypotheses allow  us to show that the Ma\~n\'e potential has a sublinear growth according to this preferred ordering (see proposition~\ref{Proposition:FundamentalConfiguration}).
In fact, the understanding of the behavior of the  Ma\~n\'e potential against the ordering introduced by these sufficiently large fundamental configurations is the key element for the study of the possible types of weak KAM solutions.

We begin by reestablishing repetitivity, now in more quantitative terms.

\begin{definition} \label{RepetitiveDeloneSet}
 A  discrete set $ \omega  \subset \mathbb {R}$,
is said  {\it repetitive} if for every $R>0$, there exists $M(R)>0$ such 
that, for any open interval $ J $ of length at least $M(R)$ and 
any pattern $ \texttt{P} $ of diameter at most $ R $, 
there is $ t \in \mathbb R $ for which $ \texttt{P} + t $ is a pattern of $ \omega \cap J $.
Besides,  whenever there are positive constants $ A $ and $ B $ such that $ M(R) \le A R + B $ for all $ R $, $ \omega $ is said to be \textit{linearly repetitive}.
\end{definition}

First note that the repetitivity  implies that the quasi-periodic set  $\omega$ is \textit{relatively dense}, {\it i.e.} there is no arbitrary large gap between consecutive elements. More quantitatively, there is a constant $R_\omega>0$ such that 
\begin{align}\label{Rzero}
\omega \cap I \neq \emptyset \quad \textrm{ for any interval } I   \textrm{ of length greater than } R_\omega. 
\end{align}

We assume from now on that the interaction is pattern equivariant with respect to a quasi-periodic set.

\begin{definition} \label{Definition:FundamentalConfiguration}
For a given interaction $E$, a {\it fundamental configuration of size $n\geq1$} is a finite sequence $(z_0, \ldots , z_n)$ such that 
\[
E(z_0,\ldots, z_n) = \min_{x_0,\ldots,x_n \in\mathbb{R}}E(x_0,\ldots,x_n).
\]
We denote by $\Gamma_n (E) \subset \mathbb{R}^{n+1}$ the set of fundamental configurations of size $n$.
\end{definition}

The above minimum exists because of the superlinearity of $E$. Moreover, by definition of $\bar E$, for any sequence of fundamental configurations $(z_0^n,\ldots,z_n^n)$
\begin{gather*}
E(z_0^n,\ldots, z_n^n)  \leq n\bar E, \\
\lim_{n\to+\infty} \frac{1}{n} E(z_0^n,\ldots, z_n^n) = \sup_{n\to+\infty} \frac{1}{n} E(z_0^n,\ldots, z_n^n) = \bar E.
\end{gather*}

We first recall the notion of minimizing configurations which is weaker than Ma\~n\'e calibration. In particular, there is no need to introduce the ground action $\bar E$.

\begin{definition}
Let $E$ be an interaction model, $n\geq1$, and $(x_0, \ldots, x_n)$ be a sequence of (possibly unordered) points of $\mathbb{R}$. The finite configuration $(x_0, \ldots, x_n)$ is said to be {\it minimizing} if, for every configuration $(y_0, \ldots, y_n)$ of the same cardinality and same extremities, $x_0=y_0$ and $x_n=y_n$, one has
\begin{gather*}
 E(x_0,x_1,\ldots, x_n) \leq E(y_0,y_1,\ldots, y_n).
\end{gather*}
\end{definition}

We show in the next lemma that minimizing configurations are strictly monotone provided that their endpoints are sufficiently far apart from each other.
Recall that $ \varsigma_0 $ is the constant that characterizes the pattern equivariance of  $ E (x, y) $ (see definition~\ref{invariancia de motivos}).

\begin{lemma} \label{Lemma:AprioriStrictMonotonicity}
Let $E$ be a weakly twist interaction model that is pattern equivariant with respect to a quasi-periodic set $ \omega $. 
Then there exists a constant $L>0$ such that every minimizing configuration $(x_0,\ldots, x_n)$ satisfying $|x_n-x_0|\geq L$  is strictly monotone.
\end{lemma}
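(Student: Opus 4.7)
The plan is by contradiction: suppose $(x_0,\ldots,x_n)$ is minimizing with $|x_n-x_0|\geq L$ but fails to be strictly monotone, and produce a strictly monotone competitor $(y_0,\ldots,y_n)$ of the same cardinality and endpoints whose total action strictly undercuts $E(x_0,\ldots,x_n)$, contradicting the minimizing property. The starting ingredient is the weak-twist extraction recalled inside the proof of lemma~\ref{lematecnico} (originally lemma 23 of~\cite{GaribaldiPetiteThieullen2017}): it yields a strictly monotone subconfiguration $(x_{i_0},\ldots,x_{i_\sigma})$ with $i_0=0$, $i_\sigma=n$, $\sigma<n$, and
\begin{equation*}
E(x_0,\ldots,x_n) > E(x_{i_0},\ldots,x_{i_\sigma}) + \sum_{j\in J} E(x_j,x_j),
\end{equation*}
where $J:=\{0,\ldots,n\}\setminus\{i_0,\ldots,i_\sigma\}$ and $N:=|J|=n-\sigma\geq 1$.

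To build the competitor, I would pad the strictly monotone skeleton with $N$ additional strictly monotone points using pattern equivariance. For any $\varepsilon>0$, the very definition of $\bar E$ provides a strictly monotone configuration $(z_0,\ldots,z_{N+1})$ with $E(z_0,\ldots,z_{N+1})\leq(N+1)(\bar E+\varepsilon)$. Since $|x_n-x_0|\geq L$ is large, one can ensure that at least one skeleton gap $[x_{i_k},x_{i_{k+1}}]$ exceeds in length the diameter of $(z_0,\ldots,z_{N+1})$ plus the repetitivity parameter $M(R+2\varsigma_0)$, where $R$ bounds the diameter of that block; the repetitivity of $\omega$ then provides a translation $t$ such that the pattern equivariance of $E$ (through the constant $\varsigma_0$ of definition~\ref{invariancia de motivos}) preserves the interactions of $(z_0+t,\ldots,z_{N+1}+t)$ inside the gap. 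Inserting this translated block in place of the single interaction $E(x_{i_k},x_{i_{k+1}})$ produces a strictly monotone competitor $(y_0,\ldots,y_n)$ of cardinality $n+1$, with two stitching interactions at the junctions whose cost is $O(1)$ uniformly in $N$ and $L$ by the local Lipschitz hypothesis (H2).

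The contradiction then follows by comparing the competitor's total action
\begin{equation*}
E(y_0,\ldots,y_n)\leq E(x_{i_0},\ldots,x_{i_\sigma}) - E(x_{i_k},x_{i_{k+1}}) + (N+1)(\bar E+\varepsilon) + O(1)
\end{equation*}
with the twist-extraction lower bound on $E(x_0,\ldots,x_n)$: for $L$ large enough, the excess $\sum_{j\in J} E(x_j,x_j) - (N+1)\bar E - O(1)$ dominates the slack $\varepsilon(N+1)$, giving $E(y_0,\ldots,y_n) < E(x_0,\ldots,x_n)$. The main obstacle I anticipate is controlling the diameter of the inserted block $(z_0,\ldots,z_{N+1})$ as a function of $N$ so that it genuinely fits inside a single skeleton gap of length at most $|x_n-x_0|$; since this diameter could grow with $N$, a robust argument may need to distinguish a bounded-$N$ regime (handled by a direct weak-twist swap localised at a non-monotone triple) from an unbounded-$N$ regime (handled by the fundamental-block insertion above, possibly distributed across several skeleton gaps whose total length sums to at least $|x_n-x_0|$).
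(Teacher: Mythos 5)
Your strategy has a genuine gap: it converts the problem into an energy-budget comparison with $\bar E$ that this lemma cannot afford. The contradiction you aim for needs $\sum_{j\in J}E(x_j,x_j)$ to exceed $(N+1)(\bar E+\varepsilon)$ plus the two junction bonds minus $E(x_{i_k},x_{i_{k+1}})$. But the lemma does not assume the non-degeneracy condition $\inf_x E(x,x)>\bar E$, so in general one only has $\sum_{j\in J}E(x_j,x_j)\ge N\bar E$, and the would-be excess reduces to a fixed deficit of order $\bar E$ plus uncontrolled constants: nothing dominates anything. Even granting non-degeneracy, for bounded $N$ (say $N=1$) the excess $N(\inf_x E(x,x)-\bar E)$ is a fixed constant which need not beat $\bar E+\varepsilon(N+1)$ plus the junction cost; and your fallback for that regime, a ``direct weak-twist swap at a non-monotone triple'', is not a routine move, because deleting a point changes the cardinality, and re-inserting a point somewhere without losing energy control is precisely the difficulty of the lemma. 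In addition, the claim that the junction bonds cost $O(1)$ ``by (H2)'' is unjustified: you insert the block inside a skeleton gap chosen long enough to contain it, so at least one junction bond spans an uncontrolled distance; (H2) only gives Lipschitz constants on balls of fixed radius, (H3) makes long bonds arbitrarily expensive, and there is no structural reason for the discarded bond $E(x_{i_k},x_{i_{k+1}})$ to compensate them since $E$ is not a function of $y-x$ alone. Finally, when the configuration is monotone but not strictly monotone the twist extraction only yields an equality, so the strict inequality you start from is not available in that case.

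The paper's proof never compares with $\bar E$ and needs none of these budgets. It transports each removed point $x_j$ to a point $\tilde x_j\in(x_0,x_n)$ whose local pattern agrees with that of $x_j$ (repetitivity of $\omega$ furnishes $L$, and pattern equivariance gives $E(\tilde x_j,\tilde x_j)=E(x_j,x_j)$ exactly), and then re-inserts each $\tilde x_j$ into the strictly monotone skeleton by the Aubry crossing lemma, $E(x_{i_s},x_{i_{s+1}})+E(\tilde x_j,\tilde x_j)>E(x_{i_s},\tilde x_j)+E(\tilde x_j,x_{i_{s+1}})$, which strictly decreases the total action bond by bond, with no long stitching bonds, no dependence on $N$, and no non-degeneracy; the residual case where all relocated points coincide with skeleton points is settled by proposition 25 of \cite{GaribaldiPetiteThieullen2017}. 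To repair your argument you would essentially have to replace the cheap-block insertion by this relocation-plus-crossing device.
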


\begin{proof}
The proof is by contradiction. To simplify, assume $x_0<x_n$.  
In the case the configuration is not monotone, by lemma 23 in \cite{GaribaldiPetiteThieullen2017} there are 
indices $i_0=0 < i_1 < \cdots < i_\sigma=n$ such that $(x_{i_0}, x_{i_1}, \ldots, x_{i_\sigma})$ is strictly monotone and
\begin{gather}
\sum_{k=0}^{n-1} E(x_k,x_{k+1}) \geq \sum_{k=0}^{\sigma-1} E(x_{i_k}, x_{i_{k+1}}) + \sum_{j \notin \{i_0, \ldots, i_\sigma\}} E(x_j,x_j). \label{Equation:AprioriStrictMonotonicity_1}
\end{gather}
In the case the configuration is monotone but not strictly monotone, we obtain the existence of $\sigma$ and $(i_0,\ldots, i_\sigma)$ as above but with an equality in~\eqref{Equation:AprioriStrictMonotonicity_1} instead of an inequality. We now use the pattern equivariance of $E$ to transport the points $x_j$, $j \notin \{i_0, \ldots, i_\sigma\}$, to new locations $\tilde x_j$  inside the interval $(x_0,x_n)$. 
Let $\mathcal{P}$ be the set of all patterns of the form 
$ \texttt{P}_x := (x-R_\omega-\varsigma_0,x+R_\omega+\varsigma_0)\cap\omega$ where $x$ is any point. By repetitivity, there exists $L>0$ such that any interval of length $L$ contains a translate of any pattern in $\mathcal{P}$. Then, for $ x_n-x_0 \geq L$, there are  $\tilde x_j$, $j\notin \{i_0,\ldots,i_\sigma\}$, such that 
\begin{equation*}
\texttt{P}_{\tilde x_j} \subset (x_0,x_n) \qquad \text{ and }  \qquad \texttt{P}_{\tilde x_j} - \tilde x_j = \texttt{P}_{x_j} - x_j.
\end{equation*}
By pattern equivariance $E(x_j,x_j) = E(\tilde x_j,\tilde x_j)$. 

Whenever $\tilde x_j \in (x_{i_s}, x_{i_{s+1}})$ for some $ j $ and $ s $, by Aubry crossing Lemma
\begin{gather*}
E(x_{i_s}, x_{i_{s+1}}) + E(\tilde x_j,\tilde x_j) >E(x_{i_s}, \tilde x_j) + E(\tilde x_j, x_{i_{s+1}}).
\end{gather*}
We may re-index the new set $\{x_{i_0}, \ldots ,x_{i_\sigma}\} \cup \{\tilde x_j\}$ as $\{\tilde x_{i_0}, \ldots , \tilde x_{i_{\sigma+1}}\}$ and again apply again Aubry crossing Lemma  to  other points $\tilde x_k$ distinct from $\{ \tilde x_{i_0}, \ldots , \tilde x_{i_{\sigma+1}}\}$. 
We finally obtain a new monotone sequence $(\tilde x_0, \ldots, \tilde x_n)$, with $ \tilde x_0 = x_0 $ and $ \tilde x_n = x_n $, satisfying 
\begin{gather*}
\sum_{k=0}^{n-1} E(x_k,x_{k+1})  >\sum_{k=0}^{n-1} E(\tilde x_k,\tilde x_{k+1}).
\end{gather*}
The strict inequality shows that $(x_0,\ldots,x_n)$ is not minimizing. We have obtained a contradiction.

We are led to consider the situation in which all the new points $\tilde x_j$ belong to $\{x_{i_1}, \ldots, x_{i_{\sigma-1}}\}$. 
By re-indexing, one obtains a monotone but not strictly monotone configuration 
$ x_0=\tilde x_0< \tilde x_1 \leq \ldots \leq  \tilde x_{n-1}< \tilde x_n=x_n $ fulfilling
\begin{equation*} 
\sum_{k=0}^{n-1} E(x_k,x_{k+1})  \geq \sum_{k=0}^{n-1} E(\tilde x_k,\tilde x_{k+1}). 
\end{equation*}
Proposition 25 of \cite{GaribaldiPetiteThieullen2017}  implies that $(\tilde x_0, \ldots, \tilde x_n)$ is not minimizing, and therefore $(x_0, \ldots x_n)$ is not minimizing. We have reached again a contradiction. 
\end{proof}

Note that $ (z_0, \ldots, z_n) \in \Gamma_n(E) $ implies $ E(z_0, \ldots, z_n) = \min_x T^n[0](x) $, where $T$ stands for the Lax-Oleinik operator introduced in \eqref{definition:LaxOleinikOperator}. In the next lemma, we guarantee that any configuration $ (y_{-n}, \ldots, y_0) $, with endpoints sufficiently apart from each other, such that $ E(y_{-n}, \ldots, y_0) = T^n[0](y_0) $ has (uniformly) bounded jumps.

\begin{lemma} \label{Lemma:BoundedJumpBackwardOptimal}
Let $E$ be a weakly twist interaction model that is pattern equivariant with respect to a quasi-periodic set $ \omega $. 
Then there exist constants $L > R > 0$  such that, for every configuration $(y_{-n}, \ldots, y_0)$ satisfying 
$|y_{-n}-y_0| \geq L$ and $ E(y_{-n}, \ldots, y_0) = T^n[0](y_0) $,
\begin{gather}
 |y_{-k+1} - y_{-k} | \leq R, \quad \forall\, 1 \le k \le n. \label{Equation:BoundedJumpBackwardOptimal_01}
\end{gather}
\end{lemma}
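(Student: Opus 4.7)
The plan is to combine strict monotonicity with a Neumann-type bound at the free endpoint $y_{-n}$, then propagate the bound inward to every other jump via Bellman's principle.

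First, by choosing $L$ at least as large as the constant from Lemma~\ref{Lemma:AprioriStrictMonotonicity}, the hypothesis $|y_{-n}-y_0|\geq L$ forces $(y_{-n},\ldots,y_0)$ to be strictly monotone; suppose it is increasing. Next, since $y_{-n}$ is a free endpoint of the minimization, the choice $y_{-n}'=y_{-n+1}$ is an admissible competitor and yields
$$E(y_{-n},y_{-n+1}) \leq E(y_{-n+1},y_{-n+1}) \leq \sup_{x\in\mathbb{R}} E(x,x),$$
which is finite by (H1). The uniform superlinearity (H3) then delivers a constant $R_1=R_1(E)$ with $|y_{-n+1}-y_{-n}|\leq R_1$, bounding the leftmost jump.

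To extend the bound to all jumps, I would exploit that Bellman's principle makes $(y_{-n},\ldots,y_{-k})$ itself the minimizer defining $T^{n-k}[0](y_{-k})$. Hence $y_{-k}$ minimizes $y\mapsto T^{n-k}[0](y)+E(y,y_{-k+1})$. Comparing with $y=y_{-k+1}$ and bounding the increment of $T^{n-k}[0]$ through the explicit competitor obtained by replacing the final edge of the $T^{n-k}[0](y_{-k})$-optimizer (namely $(y_{-n},\ldots,y_{-k-1},y_{-k}) \rightsquigarrow (y_{-n},\ldots,y_{-k-1},y_{-k+1})$) produces the sub-optimality relation
$$E(y_{-k-1},y_{-k})+E(y_{-k},y_{-k+1}) \leq \sup_{x\in\mathbb{R}} E(x,x) + E(y_{-k-1},y_{-k+1}).$$
Together with the Lipschitz regularity of the Mañé potential (Corollary~\ref{potencial Lipschitz}), the weak twist condition (H4), the local Lipschitz continuity of $E$ from (H2), and the non-degeneracy $\inf_{x} E(x,x)>\bar E$, this inequality allows me to propagate the leftmost-jump bound inward to every successive jump and obtain a uniform constant $R$.

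The main obstacle is keeping this propagation uniform in $k$: a naive iteration on Lipschitz constants of the operators $T^{n-k}[0]$ tends to diverge. The salvage is that the non-degeneracy $\inf_{x} E(x,x)>\bar E$, already the key to bounding successive jumps of Mañé calibrated configurations in item~\ref{Item:WeaklyTwistAdditionalProperties_4} of Proposition~\ref{Lemma:WeaklyTwistAdditionalProperties}, prevents the configuration from clustering in low-cost regions; combined with superlinearity, it precludes any isolated oversized jump, since suppressing such a jump by a local swap would produce a cheaper competitor of the same length $n+1$, contradicting optimality.
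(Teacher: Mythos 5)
Your opening steps (strict monotonicity via lemma~\ref{Lemma:AprioriStrictMonotonicity}, and the bound on the leftmost jump by comparing with the competitor $(y_{-n+1},y_{-n+1},y_{-n+2},\ldots,y_0)$ and invoking (H1) and (H3)) coincide with the paper's Part~1. The propagation step, however, contains a genuine gap. The relation you derive, $E(y_{-k-1},y_{-k})+E(y_{-k},y_{-k+1}) \leq \sup_{x} E(x,x) + E(y_{-k-1},y_{-k+1})$, cannot bound $|y_{-k+1}-y_{-k}|$: the term $E(y_{-k-1},y_{-k+1})$ appears on the right with a plus sign, and for a twist interaction (say $E(x,y)=\tfrac12(y-x-\lambda)^2+KV(x)$) the inequality is essentially automatic no matter how large the gap is, so superlinearity extracts nothing from it. More fundamentally, your argument never uses pattern equivariance or repetitivity, yet these are exactly what the uniform bound rests on: in the paper the constant $R$ is the repetitivity constant $M(s)$ for a pattern of size controlled by the first \emph{two} jumps, and an oversized interior gap $(y_{-k},y_{-k+1})$ is excluded by finding, via repetitivity and pattern equivariance, a translated copy of the local environment of $(y_{-n},y_{-n+1},y_{-n+2})$ inside the gap and then moving the single point $y_{-n+1}$ into it; the Aubry crossing lemma (twist) makes this new configuration of the same cardinality strictly cheaper, contradicting $E(y_{-n},\ldots,y_0)=T^n[0](y_0)$. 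Your proposed salvage (``suppressing such a jump by a local swap would produce a cheaper competitor of the same length'') is precisely this missing construction, and as stated it is not carried out; without the quasi-periodic structure it is in fact false in general, since the minimizer could profit from migrating far away to a cheaper region of a non-repetitive potential, with jumps growing with $n$.

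Two further points. Your salvage invokes the non-degeneracy $\inf_x E(x,x)>\bar E$ and the Lipschitz regularity of the Mañé potential (corollary~\ref{potencial Lipschitz}); neither is available here, since this lemma does not assume non-degeneracy (and the corollary requires it). Also, the paper needs a bound on the second jump $|y_{-n+2}-y_{-n+1}|$ as well, obtained from another explicit competitor ($y_{-n+2}$ repeated three times), because the pattern transported into a large gap is the one spanned by $(y_{-n},y_{-n+1},y_{-n+2})$; your scheme provides no control on it.
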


\begin{proof}
Although we deal with a more general context, the proof is very similar to the one of proposition 39 in \cite{GaribaldiPetiteThieullen2017}.  

\medskip
\noindent {\it Part 1.} We  prove first an intermediate result: there exists a constant $R'>0$ such that
\begin{gather*}
|y_{-n+1} - y_{-n} | \leq R', \qquad |y_{-n+2} - y_{-n+1} | \leq R'.
\end{gather*}
For the first estimate, denoting $E^{sup} := \sup_{x \in \mathbb{R}}E(x,x)$, since 
$ T^n[0](y_0) \leq E(y_{-n+1}, y_{-n+1}, y_{-n+2}, \ldots, y_0) $, one has
\begin{equation*}
E(y_{-n}, y_{-n+1}) - E^{sup} \leq E(y_{-n+1}, y_{-n+1}) - E^{sup} \leq 0.
\end{equation*}
With respect to the second estimate, introducing $E^{inf} := \inf_{x,y \in \mathbb{R}} E(x,y)$, note that
$ T^n[0](y_0) \leq E(y_{-n+2}, y_{-n+2}, y_{-n+2}, y_{-n+3}, \ldots, y_0) $ obviously implies 
$ E(y_{-n}, y_{-n+1}, y_{-n+2}) \leq E(y_{-n-2}, y_{-n+2}, y_{-n+2}) $, so that
\begin{equation*}
E(y_{-n+1}, y_{-n+2}) - E^{sup} \leq  E^{sup} - E^{inf}.
\end{equation*}
Superlinearity ensures there is $ R' > E^{sup} - E^{inf} $ such that
$$|x-y| > R' \ \Rightarrow  \  E(x,y) - E^{sup} > |x-y|. $$ 
Therefore, we necessarily have $ |y_{-n+1} - y_{-n} | \leq R' $ and $  |y_{-n+2} - y_{-n+1} | \leq R' $.

\medskip
\noindent {\it Part 2.} Lemma \ref{Lemma:AprioriStrictMonotonicity} shows that $(y_{-n}, \ldots, y_0)$ is strictly monotone.  
To fix ideas, suppose that $(y_{-n}, \ldots, y_0)$ is increasing.
Let $I $ denote the interval $ (y_{-n} - R_\omega - \varsigma_0, y_{-n+2} + R_\omega + \varsigma_0)$ and 
\begin{gather*}
s := 2R' + 2R_\omega + 2 \varsigma_0 \geq |y_{-n+2}-y_{-n}|+ 2R_\omega + 2\varsigma_0.
\end{gather*} 
By repetitivity, any interval of length at least $M(s)$ contains a translate $I + t$, $ \omega \cap (I+t) = (\omega \cap I) + t $,
and by pattern equivariance,
\begin{gather*}
\forall\, x,y \in [y_{-n},y_{-n+2}], \quad \ E(x+t,y+t) = E(x,y).
\end{gather*}
Define $R := M(s)$. We claim that $|y_{-k+1} - y_{-k} | \leq R$ for every $1\le k \le n-3 $. 
The proof is by contradiction. Indeed, if this is not the case, based on the foregoing there exists $t\geq0$ such that
\begin{gather*}
\begin{cases}
[y_{-n}-R_\omega-\varsigma_0,y_{-n+2}+R_\omega+\varsigma_0] + t \subseteq (y_{-k},y_{-k+1}),  \\
\forall\, x,y \in [y_{-n}, y_{-n+2}], \quad E(x+t,y+t) = E(x,y).
\end{cases}
\end{gather*}
Aubry crossing lemma (lemma 22 of \cite{GaribaldiPetiteThieullen2017}) shows that
\begin{align*}
E(y_{-k}, y_{-k+1}) &+ E(y_{-n}, y_{-n+1}, y_{-n+2}) = \\
&= E(y_{-k}, y_{-k+1}) + E(y_{-n}+t, y_{-n+1}+t, y_{-n+2}+t)  \\
&> E(y_{-k}, y_{-n+1}+t, y_{-k+1}) + E(y_{-n}+t, y_{-n+2}+t) \\
&= E(y_{-k}, y_{-n+1}+t, y_{-k+1}) + E(y_{-n}, y_{-n+2}).
\end{align*}
Shifting $y_{-n+1}$ to the position $y_{-n+1}+t$, one obtains
\begin{align*}
T^n[0](y_0) & = E(y_{-n}, y_{-n+1}, y_{-n+2}, \ldots , y_{-k}, y_{-k+1}, \ldots , y_0)  \\
& > E(y_{-n}, y_{-n+2}, \ldots , y_{-k}, y_{-n+1}+t, y_{-k+1}, \ldots, y_0).
\end{align*}
We have obtained a configuration of $n+1$ points ending at $ y_0 $ that decreases strictly $T^n[0](y_0)$. 
That contradicts the optimality of $(y_{-n}, \ldots, y_0)$.
\end{proof}

We gather in the following lemma several conclusions that are proved in the lemmas 41 and 42 in \cite{GaribaldiPetiteThieullen2017}. 
We actually simplify the proof and we only use the results of that work exclusively related to the twist condition to obtain the lemma below in a more general framework.

\begin{lemma} \label{Lemma:AprioriEstimateFundamentalConfiguration}
Let $E$ be a weakly twist interaction that is pattern equivariant with respect to a quasi-periodic set $ \omega $. 
Assume $\inf_x E(x,x) > \bar E$. 
Then there exist constants $ \phi>0 $, $ R>0 $ and an integer $ N > 0 $ such that, for every $n\geq1$, for every $(z_0, \ldots, z_n) \in \Gamma_n(E)$,
\begin{enumerate}
\item \label{Item:AprioriEstimateFundamentalConfiguration_2} $|z_n-z_0| \geq n \phi$, 
\item \label{Item:AprioriEstimateFundamentalConfiguration_1} $ \forall\, 0 \le i < n, \  | z_{i+1}-z_i | \leq R$,
\item \label{Item:AprioriEstimateFundamentalConfiguration_3} if $n \geq N$, then $(z_0, \ldots, z_n)$ is strictly monotone.
\end{enumerate}
\end{lemma}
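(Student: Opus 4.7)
My plan is to deduce the three conclusions from the preceding lemmas, with item~(i) serving as the pivot that unlocks items~(ii) and~(iii). For item~(i), the key observation is that by definition every fundamental configuration satisfies $E(z_0,\ldots,z_n) \le n\bar E$. Applying the linear estimate~\eqref{Equation:WeaklyTwistAdditionalProperties} from lemma~\ref{lematecnico} to $(z_0,\ldots,z_n)$ therefore gives
\[
n \le A_0 \, |z_n - z_0| + B_0\,\big(E(z_0,\ldots,z_n) - n\bar E\big) \le A_0 \, |z_n - z_0|,
\]
so that item~(i) holds with $\phi := 1/A_0$.

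Item~(iii) is then an immediate consequence. Let $L$ be the constant provided by lemma~\ref{Lemma:AprioriStrictMonotonicity}, and set $N := \lceil L/\phi \rceil$. A fundamental configuration is trivially minimizing (global optimality implies optimality with fixed endpoints), so whenever $n \ge N$, item~(i) gives $|z_n - z_0| \ge L$ and lemma~\ref{Lemma:AprioriStrictMonotonicity} delivers strict monotonicity.

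For item~(ii), I propose a dichotomy on the size of $|z_n - z_0|$. If $|z_n - z_0| \ge L$, global optimality implies in particular that $E(z_0,\ldots,z_n) = T^n[0](z_n)$, since fixing the last coordinate and minimizing over the previous $n$ can only return the fundamental value; lemma~\ref{Lemma:BoundedJumpBackwardOptimal} then furnishes a jump bound $R_1$. If instead $|z_n - z_0| < L$, item~(i) forces $n < L/\phi$, so only finitely many values of $n$ -- bounded by a universal integer $N_0$ -- remain to be treated. For such $n$, combining $E(z_0,\ldots,z_n) \le n\bar E$ with the pointwise lower bound $E(z_j,z_{j+1}) \ge \inf_{x,y} E(x,y)$ for each $j\ne i$ yields the configuration-free estimate
\[
E(z_i, z_{i+1}) \le N_0 \bar E - (N_0 - 1)\, \inf_{x,y} E(x,y),
\]
and hypothesis~(H3) converts this into a uniform jump bound $R_2$. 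Setting $R := \max\{R_1, R_2\}$ completes the proof.

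The main obstacle I anticipate is precisely the small-$n$ regime in item~(ii): lemma~\ref{Lemma:BoundedJumpBackwardOptimal} has no content when $|z_n - z_0| < L$, so one cannot simply invoke it uniformly in $n$. The structural fact that saves the argument is the interplay between items~(i) and~(ii): the linear lower bound $|z_n - z_0| \ge n\phi$ confines the problematic regime to finitely many values of $n$, where a crude total-action argument combined with superlinearity then yields the needed jump bound.
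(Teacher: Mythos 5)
Your proposal is correct and follows essentially the same route as the paper: item (i) via the linear estimate of lemma~\ref{lematecnico} with $E(z_0,\ldots,z_n)\le n\bar E$, item (ii) by splitting into a regime where lemma~\ref{Lemma:BoundedJumpBackwardOptimal} applies (using $E(z_0,\ldots,z_n)=T^n[0](z_n)$) and a finitely-many-$n$ regime handled by a crude total-action bound plus superlinearity, and item (iii) from item (i) together with the strict monotonicity of minimizing configurations with distant endpoints. The only cosmetic differences are that you split on $|z_n-z_0|\gtrless L$ instead of $n\gtrless N$ (equivalent via item (i)) and cite lemma~\ref{Lemma:AprioriStrictMonotonicity} directly rather than through lemma~\ref{Lemma:BoundedJumpBackwardOptimal}.
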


\begin{proof}  \

{\it Item \ref{Item:AprioriEstimateFundamentalConfiguration_2}.} 
Inequality~\eqref{Equation:WeaklyTwistAdditionalProperties} shows that
\[
n \leq A_0 |z_n-z_0| + B_0 \big[ E(z_0,\ldots, z_n)  - n\bar E\big].
\] 
As $(z_0,\ldots,z_n)$ is a fundamental configuration, one has $ E(z_0,\ldots, z_n)  \leq n\bar E $, so that
$ |z_n - z_0| \geq n \phi $, with $\phi = 1/A_0$.

\medskip

{\it Item \ref{Item:AprioriEstimateFundamentalConfiguration_1}.} 
For $L > 0$ as in lemma \ref{Lemma:BoundedJumpBackwardOptimal}, denote $N := \lceil \frac{L}{\phi} \rceil$. 
If $n\geq N$, then $|z_n-z_0| \geq L$ and item \ref{Item:AprioriEstimateFundamentalConfiguration_1} is a consequence of  lemma \ref{Lemma:BoundedJumpBackwardOptimal}. 
If $n \leq N$, let $E^{sup} := \sup_x E(x,x)$ and  $E^{inf} := \inf_{x,y} E(x,y)$.
The superlinearity provides the existence of $ R > 0 $ such that $ |x-y| > R $ implies
\begin{equation*}
E(x,y) > N( E^{sup} - E^{inf} ) + E^{sup}.
\end{equation*}
By contradiction, assume $|z_{-k+1}-z_{-k}|>R$ for some $1 \le k \le n $. 
In particular, $ E(z_{-k}, z_{-k+1}) - E^{sup} > N( E^{sup} - E^{inf} ) $.
Then, using the \textit{a priori} bound $E(z_{-\ell}, z_{-\ell+1}) \geq E^{inf}$ for $\ell \not= k$, as well as 
$\bar E \leq E^{sup}$, we obtain the contradiction
\begin{equation*}
0 \geq E(z_0,\ldots, z_n)  - n\bar E > (N-n+1)( E^{sup} - E^{inf} )\geq0.
\end{equation*}

\medskip
{\it Item \ref{Item:AprioriEstimateFundamentalConfiguration_3}.}  If $n \geq N$, then $|z_n - z_0| \geq L$ and $(z_0, \ldots, z_n)$ is  strictly monotone thanks to lemma  \ref{Lemma:BoundedJumpBackwardOptimal}.
\end{proof}

In the next result, we highlight fundamental properties of the growth of the Ma\~n\'e potential when the quasi-periodic set is linearly repetitive (recall definition~\ref{RepetitiveDeloneSet}). Notice that item \ref{Item:FundamentalConfiguration_2} introduces a dichotomy on the order of fundamental configurations of large size.

\begin{proposition} \label{Proposition:FundamentalConfiguration}
Let $E(x,y) $ be a weakly twist interaction that is pattern equivariant  with respect to a linearly repetitive quasi-periodic set $\omega$. 
Assume that $ \inf_{x}E(x,x) > \bar E $. 
Suppose there exists a sequence $\{K(n)\}_n$ of positive integers diverging to infinite 
for which, associated with each~$n$, there is an increasing fundamental configuration $\underline{z}^n = (z_0^n, \ldots,z_{K(n)}^n)$. 
Then
\begin{enumerate}
\item \label{Item:FundamentalConfiguration_1} the Ma\~n\'e potential has sublinear growth for positively ordered variables
\[
\forall\, \alpha>0, \ \exists\, \beta\geq0, \  \forall\, x \leq y \in \mathbb{R}, \ |S(x,y)| \leq \alpha |y-x| + \beta;
\]
\item \label{complemento_comportamento_potencial} the Ma\~n\'e potential grows linearly for negatively ordered variables
\[
\exists\, \gamma>0, \ \delta\geq0, \ \forall\, x \geq y, \  S(x,y) \geq \gamma |y-x| -\delta.
\]
\item \label{Item:FundamentalConfiguration_2} for $m$ large enough, any fundamental configuration of size $m$ is increasing.
\end{enumerate}
Similarly if  there exists a sequence of decreasing fundamental configurations whose sizes tend to infinity, then the Ma\~n\'e potential has sublinear growth for negatively ordered variables
\[
\forall\, \alpha>0, \ \exists\, \beta\geq0, \  \forall\, x \geq y \in \mathbb{R}, \ |S(x,y)| \leq \alpha |y-x| + \beta,
\]
the Ma\~n\'e potential grows linearly for positively ordered variables
\[
\exists\, \gamma>0, \ \delta\geq0, \ \forall\, x \leq y, \  S(x,y) \geq \gamma |y-x| -\delta,
\]
and all sufficiently long fundamental configuration is decreasing.
\end{proposition}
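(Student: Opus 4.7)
I plan to prove the three items sequentially; the symmetric statements for decreasing fundamental configurations follow by applying the arguments to the reversed interaction $\hat E(x,y) := E(-x,-y)$, which swaps the roles of increasing and decreasing.

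\textbf{Item 1 (sublinear growth for $x \le y$):} Given $\alpha > 0$, I would first select $n$ large enough that $E(\underline z^n) - K(n)\bar E \le K(n)\varepsilon$ for a small $\varepsilon$ chosen in terms of $\alpha$. By linear repetitivity, the pattern of $\omega$ localized in a neighborhood of $\underline z^n$ reoccurs in every interval of length at most $A D_n + B$, where $D_n = z^n_{K(n)} - z^n_0 \ge K(n)\phi$ by Lemma~\ref{Lemma:AprioriEstimateFundamentalConfiguration}. Pattern equivariance ensures each translate of $\underline z^n$ to such an occurrence is again a fundamental configuration with the same action. For $x \le y$ with $y - x$ large, I would pack $[x,y]$ with as many disjoint translates of $\underline z^n$ as linear repetitivity allows, and recursively fill the remaining gaps with translates of fundamental configurations of smaller matched sizes. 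A telescoping estimate of the excess cost, using $K(n) \le D_n/\phi$ and the sub-additivity of $S$, should yield $S(x,y) \le \alpha|y-x| + \beta$; the matching lower bound comes from item~\ref{Item:WeaklyTwistAdditionalProperties_3} of Proposition~\ref{Lemma:WeaklyTwistAdditionalProperties}.

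\textbf{Item 3 (all large fundamental configurations are increasing):} I would argue by contradiction. If decreasing fundamental configurations of sizes tending to infinity existed, then Item 1 applied to the reversed ordering would give that $S(x,y)$ also grows sublinearly for $x \ge y$. Extracting accumulation points as in Lemma~\ref{concordancia fundamentais e calibradas}, I obtain both an increasing Ma\~n\'e calibrated configuration $(x^+_k)$ and a decreasing one $(y^-_k)$, each with step sizes between $r$ and $R$. Translating $(y^-_k)$ by pattern equivariance so that it crosses $(x^+_k)$, the Aubry crossing lemma (lemma~22 of~\cite{GaribaldiPetiteThieullen2017}) applied at the crossing indices yields a strict inequality
\[
S(x^+_{-n}, y^-_n) + S(y^-_{-n}, x^+_n) + \Delta \le S(x^+_{-n}, x^+_n) + S(y^-_{-n}, y^-_n)
\]
for some $\Delta > 0$. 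Combining this with many pattern-equivariant translates of the two calibrated configurations, the cumulative gap should grow linearly in $n$, contradicting the fact that both sides are $o(n)$ by the assumed sublinear growth of $S$ in both directions. This iteration is the main obstacle of the proof.

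\textbf{Item 2 (linear lower bound for $x \ge y$):} Once Item 3 is established, a compactness argument (using pattern equivariance and the quasi-periodic structure of $\omega$) provides $\gamma > 0$ such that every decreasing configuration of sufficiently large size $n$ satisfies $E(v_0, \ldots, v_n) - n\bar E \ge n\gamma$; indeed, if one could find decreasing configurations of size $n_k \to \infty$ with excess $o(n_k)$, a limiting procedure would yield a decreasing Ma\~n\'e calibrated configuration and hence decreasing fundamental configurations of arbitrarily large size, contradicting Item 3. For $x \ge y$, items~\ref{Item:WeaklyTwistAdditionalProperties_2} and~\ref{Item:WeaklyTwistAdditionalProperties_4} of Proposition~\ref{Lemma:WeaklyTwistAdditionalProperties} ensure the infimum defining $S(x,y)$ is achieved by a strictly decreasing configuration with step sizes bounded by $R$, whence its length $n$ satisfies $n \ge |y-x|/R$. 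Consequently, $S(x,y) \ge n\gamma \ge (\gamma/R)|y-x| - \delta$ for an appropriate $\delta$.
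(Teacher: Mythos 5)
Your plan breaks down at Item 2, and this is a genuine gap rather than a missing detail. You propose to show that every decreasing configuration of large size $n$ has excess at least $n\gamma$ by contradiction: decreasing configurations with $o(n)$ excess would, "by a limiting procedure", yield a decreasing Ma\~n\'e calibrated configuration, "hence decreasing fundamental configurations of arbitrarily large size, contradicting Item 3". Both steps fail. First, configurations whose average excess tends to zero do not limit to calibrated configurations: calibration is an exact identity between partial sums and the Ma\~n\'e potential, and it is only preserved in the limit when the approximating configurations are exact minimizers (this is what lemma~\ref{concordancia fundamentais e calibradas} uses). Second, and decisively, the existence of a decreasing Ma\~n\'e calibrated configuration does \emph{not} contradict Item 3: lemma~\ref{concordancia fundamentais e calibradas} shows that increasing \emph{and} decreasing calibrated configurations always exist under the standing hypotheses, while Item 3 only concerns fundamental configurations (free-endpoint global minimizers of the $n$-step energy), and pieces of a calibrated configuration need not be fundamental. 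So the contradiction you aim for is not a contradiction, and Item 2 is left unproved. The paper obtains Item 2 by a different mechanism: it closes the decreasing minimizer realizing $S(x,y)$ into a loop using an increasing calibrated configuration, and invokes the quantitative twist estimate (lemma 23 of \cite{GaribaldiPetiteThieullen2017}) to bound the loop excess below by $(\text{number of points}) \cdot \big(\inf_x E(x,x) - \bar E\big) > 0$, which together with Item 1 gives the linear lower bound. The same loop-plus-lemma-23 device is what makes Item 3 a short argument in the paper; your Item 3, by contrast, rests on iterating strict Aubry-crossing inequalities with an unspecified uniform defect $\Delta>0$, which you yourself flag as "the main obstacle" -- as written it does not yield the linear cumulative gap you need.

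For Item 1 your route also differs from the paper's: you propose a multi-scale packing of $[x,y]$ by translates of fundamental configurations of decreasing sizes, whereas the paper runs a one-scale bootstrap on $\alpha_* := \inf\{\alpha : S(x,y) \le \alpha|y-x|+\beta \text{ for } x\le y\}$, using linear repetitivity to make the zero-cost intervals occupy a proportion at least $1/(A'+1)$ of the line and thereby improving $\alpha$ by the factor $A'/(A'+1)<1$, a contradiction with the definition of $\alpha_*$. Your multi-scale scheme could plausibly be completed, but the sketch hides the real work: gaps at a given scale may be too short to host the next-scale pattern, and the additive per-junction constants (the "$+\beta$" and "$+1$" terms, multiplied by the number of pieces at each scale) must be shown to stay of order $\alpha|y-x|$; none of this bookkeeping is indicated. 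In short: Item 1 is a different but incomplete approach, Item 3 is incomplete, and Item 2 relies on a false implication and needs to be redone along the lines of the loop argument.
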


\begin{proof} 
We prove the case where fundamental configurations are increasing. The other case  will be deduced from the symmetric interaction   $ \hat E(x,y) = E(-x, -y) $.

{\it Item \ref{Item:FundamentalConfiguration_1}.} 
Proposition~\ref{Lemma:WeaklyTwistAdditionalProperties} shows that $S$ admits negative sublinear lower bounds. It is enough to show that $S$ also admits positive sublinear upper bounds. 

Let $C>0$ be the constant in lemma \ref{lemma:aprioriLipschitzLarge} that gives \textit{a priori} growth of $S$, that is,
\[
\forall\, x, y \in\mathbb{R}, \ |S(x,y)| \leq C(|y-x|+1).
\]
The repetitivity assumption says there exist constants $A,B >0$ such that
\[
\forall\, R>0, \ M(R) \leq AR +B,
\]
where $M$ is the repetitivity function introduced in definition~\ref{RepetitiveDeloneSet}. 
As we will avoid working with overlaps, there is no loss of generality in assuming $ A > 1 $. 
Denote
\[
\alpha_* := \inf \big\{ \alpha>0 : \exists\, \beta>0, \ \forall\,  x \leq y, \ S(x,y) \leq \alpha |y-x| +\beta \big\}.
\]
We want to show that $\alpha_*=0$. By contradiction, assume $\alpha_*>0$. 
Let $\alpha \in(\alpha_*, 2\alpha_*)$.
We will reach an absurd by considering a large index $n $ (to be completely defined later) and a corresponding fundamental configuration $(z^n_0, \ldots, z^n_{K(n)})$. 
Initially, applying lemma~\ref{Lemma:AprioriEstimateFundamentalConfiguration}, we require that $ n $ be large enough so that  $(z^n_0, \ldots, z^n_{K(n)})$ is strictly increasing
and, for some $\phi>0$ (see Lemma \ref{Lemma:AprioriEstimateFundamentalConfiguration} for the definition of $\phi$), 
\[
z^n_{K(n)} - z^n_0 \geq  \phi K(n) > 2 \varsigma_0.
\]
Denote 
$$ I_0 := (z^n_0,z^n_{K(n)}) \quad \text{ and } \quad \texttt{P} := (z^n_0 - R_\omega, z^n_{K(n)} + R_\omega) \cap \omega. $$
Note that the pattern $ \texttt{P} $ has diameter $ \ell = z^n_{K(n)} - z^n_0 + 2 R_\omega $ greater than $  2 \varsigma_0 $.
By repetitivity, we may find a sequence $(t_k)_{k\in\mathbb{Z}} \subset \mathbb R $ such that, for all $ k $,  $ \texttt{P} + t_k $ is a pattern of $ \omega $ and
\[
0 < \min (\texttt{P} + t_{k + 1}) - \max (\texttt{P} + t_k) \le M(\ell) - \ell.
\]
Define $(a_k, b_k) = I_k :=I_0+t_k$ and let $I'_k =(b_k,a_{k+1})$ be the interval in between $I_k$ and $I_{k+1}$. 
Then
\begin{align*}
|I'_k| &\leq M(\ell) - \ell + 2(\varsigma_0 + R_\omega) \\ 
&\le (A-1) \ell + B + 2(\varsigma_0 + R_\omega)  = A'|I_k| + B',
\end{align*}
where $ A' = A-1  $ and $ B' = 2 A (\varsigma_0 + R_\omega)  + B $.  
Note thus that, for $ p < q $, 
\[
|a_q - a_p|  = \sum_{k=p}^{q-1} \big( |I_k| + |I'_k| \big) \leq A  \sum_{k=p}^{q-1} |I_k|  + B' (q-p). 
\]
By pattern equivariance,
\begin{align*}
S(a_k,b_k) & \leq E(z^n_0+t_k,\ldots, z^n_{K(n)}+t_k) -K(n) \bar E \\
& =  E(z^n_0, \ldots, z^n_{K(n)}) - K(n) \bar E \leq 0.
\end{align*}
 By sub-additivity of $S$,
\begin{gather*}
S(a_p,a_q) \leq  \sum_{k=p}^{q-1} [ S(a_k,b_k) + S(b_k, a_{k+1}) ] \leq \sum_{k=p}^{q-1} S(b_k,a_{k+1}).
\end{gather*}
By the choice of $\alpha$, there is $ \beta > 0 $ such that
\[
S(b_k,a_{k+1}) \leq \alpha |I'_k| + \beta.
\]
These estimates provide
\begin{align*}
S(a_p,a_q) +\frac{\alpha}{A}|a_q-a_p| 
& \le \sum_{k=p}^{q-1} S(b_k,a_{k+1}) + \alpha \sum_{k=p}^{q-1} |I_k| + \frac{\alpha B'}{A} (q-p) \\
& \le  \alpha |a_q - a_p| + \Big(\frac{\alpha B'}{A} + \beta \Big)(q-p).
\end{align*}
Since $ |a_q-a_p| \geq \sum_{k=p}^{q-1} |I_k| = (q-p) |I_0| $, we obtain
\begin{align*}
S(a_p,a_q) &\leq  \Big( \frac{\alpha A'}{A'+1}  +\Big( \frac{\alpha B'}{A'+1}  + \beta \Big) \frac{1}{|I_0|} \Big)  |a_q-a_p|.
\end{align*}
The distance between $a_k$ and $a_{k+1}$ is at most 
\[
|I_k|+|I_k'| \leq (A'+1)|I_0| +B' := H_0.
\]
If $ y-x > H_0 $ are any given points, we choose $p<q$ such that $x \in \overline{I_p} \cup I'_p$ and $y \in \overline{I_{q-1}} \cup I'_{q-1}$. 
Hence, by the sub-additivity and the \textit{a priori} growth of the Ma\~n\'e potential,
\begin{align*}
S(x,y) & \leq S(x,a_p) + S(a_p,a_q) + S (a_q,y) \\
& \leq \Big( \frac{\alpha A'}{A'+1} +\Big( \frac{\alpha B'}{A'+1} + \beta \Big) \frac{1}{|I_0|} \Big)  |a_q-a_p| +2C(H_0+1) \\
& \le \alpha'  |a_q-a_p|  + \beta',
\end{align*}
with
\begin{gather*}
\alpha' =  \frac{\alpha A'}{A'+1}+\Big( \frac{2\alpha_* B'}{A'+1} + \beta \Big) \frac{1}{|I_0|} \quad \text{ and } \quad \beta' = 2C( H_0+1).
\end{gather*}
Suppose first $ y-x > n H_0 $. Then, using $0 \le x-a_p \le H_0 $ and  $ 0 \le a_q-y \leq H_0 $, we see that $ |a_q-a_p| \leq \big( 1+ \frac{2}{n} \big) |y-x| $ and therefore
\[
S(x,y) \leq \alpha' \Big( 1+ \frac{2}{n} \Big) |y-x| + \beta'.
\]
If however $y-x \leq n H_0$, then
\[
S(x,y) \leq C(n H_0+1) \leq  C(n H_0+1) + \beta' =: \beta''.
\]
We focus on a strictly bigger constant $ \frac{A'}{A'+1} < \frac{2A'}{2A'+1} < 1 $
to choose $\alpha$ sufficiently close to $\alpha_*$ and then $n$ large enough so that
\[
\alpha'' := \Big( \frac{A'}{A'+1} \alpha +\Big( \frac{2\alpha_* B'}{A'+1} + \beta \Big) \frac{1}{\phi K(n)} \Big) \Big( 1+ \frac{2}{n} \Big) < \frac{2A'}{2A'+1}\alpha_*. 
\]
We have obtained two constants $0 < \alpha'' < \alpha_*$  and $\beta'' >0$ such that
\[
 \forall\,  x \leq y, \ S(x,y) \leq \alpha'' |y-x| +\beta''.
\]
The existence of $\alpha''$ contradicts the definition of $\alpha_*$. We have thus proved that $\alpha_*=0$.

\medskip
{\it Item \ref{complemento_comportamento_potencial}.} 
Let $x > y$. 
Item~\ref{Item:WeaklyTwistAdditionalProperties_2} of proposition~\ref{Lemma:WeaklyTwistAdditionalProperties} 
shows that there exist $n\geq1$ and a strictly decreasing sequence $(y_0,\ldots,y_n)$, with $y_0=x$ and $y_n=y$, 
such that
$ S(x,y) = E(y_0, \ldots, y_n) $.
Lemma~\ref{concordancia fundamentais e calibradas} shows that there exists a strictly increasing Ma\~n\'e calibrated configuration $(x_k)_{k\in\mathbb{Z}}$ with bounded jumps that is unbounded from above and bellow. 
Let $x_i \leq y$ be the largest point of this configuration less than or equal to $y$. 
Let $x_j \leq x$ be defined similarly. Then $i \leq j$. If $ i = j $, item~\ref{item 3.Lemma:ManePotential} of lemma~\ref{Lemma:ManePotential} provides 
$ S(x_i, x_j) = E(x_i,x_j) - \bar E $. Otherwise, by calibration $ S(x_i,x_j) =  E(x_i, \ldots, x_j) - (j-i) \bar E $.
Consider now the configuration $(y_0, \ldots,y_n,x_i, \ldots, x_j,y_0)$. Then lemma~23 in \cite{GaribaldiPetiteThieullen2017} guarantees that
\begin{multline*}
S(x,y) + \big( E(y,x_i) - \bar E \big) + S(x_i,x_j) + \big( E(x_j,x) - \bar E \big) \ge \\
\geq (n+j-i+2) \big( \inf_{x \in\mathbb{R}} E(x,x) - \bar E \big).
\end{multline*}
Since $S$ has sublinear growth for positively ordered variables thanks to the previous item, 
for $\alpha>0$ (to be chosen later), there exists $\beta\geq0$ such that
\begin{gather*}
S(x_i,x_j)  \leq \alpha |x_j - x_i| + \beta.
\end{gather*}
As the jumps are bounded, $|x_j - x_i| \leq R(j-i)$, we thus have
\begin{gather*}
S(x,y) +\alpha |x_j - x_i| + \beta  \geq 2\gamma |x_j - x_i| -\delta',
\end{gather*}
with $ 2\gamma := \inf_{x \in\mathbb{R}} ( E(x,x) - \bar E)/R$ and $\delta' := 2\sup_{|y-x| \leq R}\big( E(x,y) - \bar E \big)$. 
We conclude by choosing $\alpha = \gamma $ and $\delta = \delta'+\beta + \gamma R$, so that
\begin{gather*}
S(x,y) \geq \gamma|y-x| - \delta.
\end{gather*}

\medskip

{\it Item \ref{Item:FundamentalConfiguration_2}.} 
Let $m \geq N$ (where $N$ is given in lemma~\ref{Lemma:AprioriEstimateFundamentalConfiguration}) and $(z_0, \ldots, z_m)$ be a decreasing fundamental configuration. 
Let $R>0$ be the constant given in lemma~\ref{Lemma:AprioriEstimateFundamentalConfiguration} and
\[
\alpha := \frac{1}{2R}\big( \inf_{x \in\mathbb{R}} E(x,x) - \bar E \big).
\] 
Then on the one hand, thanks to item \ref{Item:FundamentalConfiguration_1}, there exists $\beta\geq0$ such that
\[
S(z_m,z_0) \leq \alpha |z_0-z_m| + \beta.
\]
On the other hand, thanks to item \ref{Item:WeaklyTwistAdditionalProperties_2} of proposition~\ref{Lemma:WeaklyTwistAdditionalProperties}, one can find an increasing configuration $(x_0, \ldots, x_n)$, with  $x_0=z_m$ and  $x_n = z_0$, such that
\[
S(z_m,z_0) =E(x_0, \ldots,x_n) - n \bar E.
\]
Using item 1 of lemma 23 in \cite{GaribaldiPetiteThieullen2017}, we obtain
\begin{align*}
S(z_m,z_0) &\geq S(z_m,z_0) + E(z_0, \ldots,z_m) - m \bar E \\
&= E(x_0, \ldots,x_n,z_1,\dots,z_m) - (m+n) \bar E \\
&\geq (m+n) \big( \inf_{x \in\mathbb{R}} E(x,x) - \bar E \big) \geq m \big( \inf_{x \in\mathbb{R}} E(x,x) - \bar E \big).
\end{align*}
As item \ref{Item:AprioriEstimateFundamentalConfiguration_1} of lemma~\ref{Lemma:AprioriEstimateFundamentalConfiguration} implies $|z_0-z_m| \leq mR$, 
the choice of $ \alpha $ is contradicted by the inequality
\[
\alpha mR + \beta \geq m \big( \inf_{x \in\mathbb{R}} E(x,x) - \bar E \big)
\]
for $m$ large enough.
\end{proof}

 Note that  the proof makes extensive use  of the fact that the growth rate at infinity of the repetititvity function is linear whereas that  of the Ma\~n\'e potential $S$ is at most linear. We leave open the cases of other growth orders for the repetitivity function, which would, \textit{a priori}, need  stronger conditions on the Ma\~n\'e potential to get a similar result. Recall nevertheless the linear growth rate of the repetitivity function  is the smallest one among aperiodic sets  \cite{Alisteetal}.   

The previous result makes it clear that, in the linearly repetitive context, the ordering of arbitrarily long fundamental configurations plays a key role, thus introducing a preferential ordering to the model.

\begin{definition} \label{Definition:PreferredOrdering}
Suppose that $E(x,y) $ is a weakly twist interaction which fulfills $\inf_{x}E(x,x) > \bar E$, and is pattern equivariant with respect to a linearly repetitive quasi-periodic set $\omega$. 
\begin{enumerate}
\item {\it The preferred ordering} is  the ordering given by sufficiently long fundamental configurations.
\item A monotone configuration is said to be {\it compatible with  the preferred ordering} or for short {\it compatible} if the configuration is ordered  as any fundamental configurations of size sufficiently large. Otherwise, the configuration is said to be {\it anti-compatible with  the preferred ordering} or for short {\it anti-compatible}.
\end{enumerate}
\end{definition}

We now classify the set of weak KAM solutions $u$. There are three approaches: a classification using the type of growth (sublinear \textit{versus} linear), a classification using the ordering of $u$-calibrated subconfigurations, and a classification using bi-infinite $u$-calibrated configurations.  We recall that any calibrated configuration for a weak KAM solution is also Ma\~n\'e calibrated and therefore strictly monotone with a minimal spacing as stated by proposition~\ref{Lemma:WeaklyTwistAdditionalProperties}. 

\begin{lemma}\label{naoaninhados}
Let $E(x,y) $ be a weakly twist interaction that is pattern equivariant  with respect to a linearly repetitive quasi-periodic set $\omega$. 
Suppose that $\inf_{x}E(x,x) > \bar E$. There exists $ L > 0 $ such that,
given any weak KAM solution $ u $ and two points $ x_0 > y_0 + L $, there cannot exist 
simultaneously an increasing $u$-calibrated configuration ending at $x_0$ and a decreasing $u$-calibrated configuration ending
at $ y_0 $.
\end{lemma}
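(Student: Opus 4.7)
The plan is to derive two estimates for the increment $u(x_0)-u(y_0)$, one from each configuration, that become incompatible once $x_0-y_0$ exceeds a universal constant, which we then take as $L$. Without loss of generality we may assume the preferred ordering is the standard one (otherwise apply the argument to $\hat E(x,y) := E(-x,-y)$, which is pattern equivariant with respect to $-\omega$ and whose preferred ordering is reversed, thereby swapping the roles of the two configurations). Let $(x_n)_{n\le 0}$ denote the increasing $u$-calibrated configuration ending at $x_0$ and $(\tilde y_n)_{n\le 0}$ the decreasing one ending at $y_0$. Any $u$-calibrated configuration is automatically Ma\~n\'e calibrated (applying the sub-solution inequality term by term gives $u(x_n)-u(x_m) \le S(x_m,x_n)$, while the calibration identity gives the converse), so proposition~\ref{Lemma:WeaklyTwistAdditionalProperties}(iv) forces both sequences to have successive jumps in $[r,R]$, whence $x_n\to-\infty$ and $\tilde y_n\to+\infty$. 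Lemma~\ref{Lemma:AprioriBoundWeakKAMsolution} further provides a uniform Lipschitz constant $K$ for $u$.

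We first extract the upper bound from the decreasing (anti-compatible) configuration. Assuming $x_0-y_0>R$, let $k\le -1$ be the smallest index with $\tilde y_k\ge x_0$; then $x_0-R \le \tilde y_{k+1}< x_0$, hence $\tilde y_{k+1}-y_0 \ge (x_0-y_0)-R$. Since the decreasing configuration runs against the preferred ordering, item~\ref{complemento_comportamento_potencial} of proposition~\ref{Proposition:FundamentalConfiguration} supplies $\gamma>0$ and $\delta \ge 0$ with $S(\tilde y_{k+1},y_0) \ge \gamma(\tilde y_{k+1}-y_0) - \delta$. Combining the Ma\~n\'e calibration $u(y_0)-u(\tilde y_{k+1}) = S(\tilde y_{k+1},y_0)$ with the Lipschitz estimate $|u(x_0)-u(\tilde y_{k+1})|\le K R$ yields
\[
u(x_0)-u(y_0) \le -\gamma(x_0-y_0) + (\gamma + K)\,R + \delta.
\]

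For the matching lower bound, pick $m\le 0$ with $x_m\le y_0\le x_{m+1}$, so that $y_0-x_m\le R$ and $x_0-x_m \le (x_0-y_0) + R$. The increasing configuration is compatible with the preferred ordering, so item~\ref{Item:FundamentalConfiguration_1} of proposition~\ref{Proposition:FundamentalConfiguration}, applied with $\alpha := \gamma/2$, yields $\beta \ge 0$ such that $|S(x,y)|\le \alpha(y-x)+\beta$ for all $x\le y$. Combining $u(x_0)-u(x_m) = S(x_m,x_0)$ with $|u(y_0)-u(x_m)|\le K R$, we obtain
\[
u(x_0)-u(y_0) \ge -\alpha(x_0-y_0) - (\alpha+K)\,R - \beta.
\]
Subtracting the two inequalities gives $(\gamma-\alpha)(x_0-y_0) \le (2K+\alpha+\gamma)\,R + \beta+\delta$; with $\alpha=\gamma/2$ this caps $x_0-y_0$ by an explicit constant depending only on $E$, and taking $L$ strictly above this cap (and $\ge R$) finishes the argument.

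The main subtlety is that item~\ref{Item:FundamentalConfiguration_1} of proposition~\ref{Proposition:FundamentalConfiguration} controls $S$ from \emph{both} sides in the positively ordered direction: the sub-solution inequality alone would only give $u(x_0)-u(x_m)\le S(x_m,x_0)$, which is an upper bound of the wrong sign. What converts the compatible configuration into a genuine sublinear lower bound for $u(x_0)-u(y_0)$ is the calibration identity paired with the fact that $|S|$—not merely $S$—is sublinear in positively ordered variables. This asymmetry between the compatible and anti-compatible directions is exactly what forces the announced incompatibility.
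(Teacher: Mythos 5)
Your argument is correct and essentially coincides with the paper's proof: both control the same increment from two sides, using the linear lower bound of item~\ref{complemento_comportamento_potencial} of proposition~\ref{Proposition:FundamentalConfiguration} along the anti-compatible configuration, the sublinear estimate of item~\ref{Item:FundamentalConfiguration_1} with $\alpha=\gamma/2$ along the compatible one, and the uniform constants $K$ (lemma~\ref{Lemma:AprioriBoundWeakKAMsolution}) and $R$ (proposition~\ref{Lemma:WeaklyTwistAdditionalProperties}) to pass between nearby points, arriving at the same cap on $x_0-y_0$. The only cosmetic differences are your explicit reduction to an increasing preferred ordering via $E(-x,-y)$ (left implicit in the paper) and a slip of wording: the index $k$ should be the \emph{largest} $k\le -1$ with $\tilde y_k\ge x_0$, as your subsequent inequality $x_0-R\le \tilde y_{k+1}<x_0$ makes clear.
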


\begin{proof}
Suppose that $ (x_{-m}, \ldots, x_0) $ is increasing, $ (y_{-n}, \ldots, y_0) $ is decreasing, with $ x_0 > y_0 $, and that $ m $ and $ n $ have been chosen so that $ y_0 - R \le x_{-m} \le y_0 $ and $ x_0 \le y_{-n} \le x_0 + R $. 
From item~\ref{complemento_comportamento_potencial} of proposition~\ref{Proposition:FundamentalConfiguration}, we have 
\begin{align*}
u(y_0) - u(y_{-n}) & = S(y_{-n}, y_0) \ge \gamma | y_{-n} - y_0 | - \delta \\
& \ge \gamma | x_0 - x_{-m} | - 2 \gamma R - \delta \ge \gamma (x_0 - y_0) - 2 \gamma R - \delta.
\end{align*}
Recall any weak KAM solution is Lipschitz continuous, with Lipschitz constant bounded by some fixed value $ K $ (lemma~\ref{Lemma:AprioriBoundWeakKAMsolution}). Hence, applying proposition~\ref{Proposition:FundamentalConfiguration} with $ \alpha = \gamma / 2 $ we see that for some $ \beta > 0 $,
\begin{align*}
u(y_0) - u(y_{-n}) & \le 2 KR + u(x_{-m}) - u(x_0) = 2KR - S(x_{-m}, x_0) \\ 
& \le 2KR + | S(x_{-m}, x_0) |  \le 2KR + \frac{\gamma}{2} |x_0 - x_{-m}| + \beta \\
& \le 2KR + \frac{\gamma}{2} (x_0 - y_0) + \frac{\gamma}{2} R  + \beta.
\end{align*}
Therefore,
$$ x_0 - y_0 \le \frac{2}{\gamma} \Big( 2KR +  \frac{\gamma}{2} R + \beta + 2\gamma R + \delta \Big). $$
It is then enough to take $ L := \big \lceil 5R + (4KR + 2\beta + 2\delta)/\gamma \big \rceil + 1 $.
\end{proof}

We introduce vocabulary to quickly refer to possible classes of solutions.

\begin{definition}
Let $u: \mathbb{R} \to \mathbb{R}$ be a function.
We say that  
\begin{enumerate}
\item $u$ is of type I (or $u$ has {\it a sublinear variation})  if 
\begin{gather}
\forall\, \alpha>0, \ \exists\, \beta \geq 0, \ \forall\, x, y \in  \mathbb{R}, \ |u(x)-u(y)| \leq \alpha |x-y| + \beta;\label{Equation:StrictlySublinearVariation}
\end{gather}
\item $u$ is of type II (or $u$ is {\it linearly decreasing with respect to the preferred ordering})  if
\begin{gather*}
\exists\, \gamma, \delta >0, \  \forall\, x,y \in  \mathbb{R}, \ \big[ x \text{ precedes } y \  \Rightarrow \ u(y)-u(x) \leq -\gamma |x-y| + \delta \big];
\end{gather*}
\item $u$ is of type III (or $u$ is of mixed type) it it is of type~I on points that precede $ 0 $ and of type II on points that succeed $ 0 $ according to the preferred ordering.
\end{enumerate}
\end{definition}

\begin{proposition} \label{Proposition:WeakKAMsolutionClassification}
Let $E(x,y) $ be a weakly twist interaction that is pattern equivariant  with respect to a linearly repetitive quasi-periodic set $\omega$. 
Suppose that $\inf_{x}E(x,x) > \bar E$. Let $ u $ be a weak KAM solution.
Then, the following properties hold.
\begin{enumerate}
\item \label{Item:WeakKAMsolutionClassification_1} If $u$ is of type I, any $u$-calibrated configuration is compatible.
\item \label{Item:WeakKAMsolutionClassification_2} If $u$ is of type II, any $u$-calibrated configuration is anti-compatible.
\item \label{Item:WeakKAMsolutionClassification_3} If  $u$ is of type III, there is $T\geq0$ such that 
any $u$-calibrated configuration ending at $x\le -T$ is increasing and ending at $x\ge T$ is decreasing.
\end{enumerate}
\end{proposition}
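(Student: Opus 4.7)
The plan is to argue each item by contradiction, leveraging the asymmetric growth of the Ma\~n\'e potential from Proposition~\ref{Proposition:FundamentalConfiguration}: along the preferred ordering, $S$ satisfies a sublinear upper bound, while against it, $S$ admits a linear lower bound with slope $\gamma$. Without loss of generality I assume the preferred ordering is the standard one, so \emph{compatible} means \emph{increasing}. Throughout I use the identity $u(x_n) - u(x_m) = S(x_m, x_n)$, valid for any $u$-calibrated subconfiguration. The second ingredient is a backward extension mechanism: Lemma~\ref{Lemma:AprioriBoundWeakKAMsolution} ensures $\arg\min\{u(\cdot) + E(\cdot, x_0)\}$ is non-empty and contained in $[x_0-R, x_0+R]$, so every $u$-calibrated subconfiguration admits a backward extension of arbitrary length; Proposition~\ref{Lemma:WeaklyTwistAdditionalProperties}(\ref{Item:WeaklyTwistAdditionalProperties_4}) applied to the resulting Ma\~n\'e calibrated chain forces each extension to be strictly monotone in the direction dictated by its first step and to advance by at least $r$ per step.

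For item~\ref{Item:WeakKAMsolutionClassification_1}, suppose $u$ is of type I and $(x_{-k}, \ldots, x_0)$ is a decreasing $u$-calibrated subconfiguration. Calibration together with Proposition~\ref{Proposition:FundamentalConfiguration}(\ref{complemento_comportamento_potencial}) yields
\[
u(x_0) - u(x_{-k}) = S(x_{-k}, x_0) \ge \gamma(x_{-k} - x_0) - \delta,
\]
while sublinear variation of $u$ gives $u(x_0) - u(x_{-k}) \le \alpha(x_{-k} - x_0) + \beta$ for any chosen $\alpha > 0$ and a corresponding $\beta \ge 0$. Picking $\alpha < \gamma$ bounds $x_{-k} - x_0$, hence $k$; backward extension to arbitrary length then contradicts this bound, so no decreasing $u$-calibrated subconfiguration exists. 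Item~\ref{Item:WeakKAMsolutionClassification_2} is entirely analogous with the roles swapped: for an increasing $u$-calibrated $(x_{-k}, \ldots, x_0)$, the type II hypothesis gives $u(x_0) - u(x_{-k}) \le -\gamma(x_0 - x_{-k}) + \delta$, while Proposition~\ref{Lemma:WeaklyTwistAdditionalProperties}(\ref{Item:WeaklyTwistAdditionalProperties_3}) provides the universal sublinear lower bound $S(x_{-k}, x_0) \ge -\alpha(x_0 - x_{-k}) - \beta$; combining them for $\alpha < \gamma$ bounds $x_0 - x_{-k}$ and again contradicts backward extension.

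Item~\ref{Item:WeakKAMsolutionClassification_3} is the main obstacle, because $u$ now obeys a sublinear bound only on $(-\infty, 0]$ and a linear-decay bound only on $[0, +\infty)$, so neither single inequality applies globally. The plan is to localize the previous arguments to whichever half-line contains the tail of the backward extension, using the uniform jump bound $R$ to handle the single step that crosses zero. Suppose $u$ is of type III and $(y_{-k}, \ldots, y_0)$ is a decreasing $u$-calibrated subconfiguration with $y_0 \le -T$; extend it backward indefinitely. Since successive jumps lie in $[r, R]$ and monotonicity is preserved, some backward iterate must exceed zero. Let $k^*$ be the smallest such index, so that $(y_{-k^*+1}, \ldots, y_0)$ lies entirely in $(-\infty, 0]$. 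Running the item~\ref{Item:WeakKAMsolutionClassification_1} argument on this restricted piece (sublinear variation of $u$ on the negative half-line versus the linear lower bound of $S$ for decreasing pairs) yields $y_{-k^*+1} - y_0 \le M := (\beta + \delta)/(\gamma - \alpha)$, whence $y_{-k^*+1} \le -T + M$. On the other hand, the bounded jump across zero forces $y_{-k^*+1} \ge y_{-k^*} - R > -R$. Choosing $T > M + R$ produces the contradiction $-R < y_{-k^*+1} \le -T + M < -R$. The symmetric conclusion for $y_0 \ge T$ is obtained by running the item~\ref{Item:WeakKAMsolutionClassification_2} argument on the positive half-line after backward extension forces some iterate to become negative; the resulting threshold depends only on $r$, $R$, and the constants entering the two growth bounds, so a single $T$ serves both halves.
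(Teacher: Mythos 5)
Your proposal is correct and takes essentially the same route as the paper: items (i) and (ii) are the same contradiction between the asymmetric growth of the Ma\~n\'e potential from proposition~\ref{Proposition:FundamentalConfiguration} and the growth of $u$ along an arbitrarily long backward extension (whose strict monotonicity and minimal spacing come from proposition~\ref{Lemma:WeaklyTwistAdditionalProperties}), and item (iii) localizes that argument to a half-line using the uniform jump bound. The only cosmetic differences are that in item (ii) you use the universal sublinear lower bound of proposition~\ref{Lemma:WeaklyTwistAdditionalProperties} rather than the sublinear estimate of proposition~\ref{Proposition:FundamentalConfiguration}, and in item (iii) you extend until the chain crosses $0$ and absorb one jump of size at most $R$ (threshold $T>M+R$), whereas the paper fixes $N$ extension steps with $T=RN$ and bounds $N$ via the minimal spacing $r$; both are valid and give the same conclusion.
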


An equivalent way to describe the role of $ T $ in the last item above is to say that, according to the preferred ordering,
any $u$-calibrated configuration that ends at a point preceding~0 by at least $ T $ is compatible, as well as
any  $u$-calibrated configuration that ends at a point succeeding 0 by at least $ T $ is anti-compatible.

\begin{proof}
Assume the preferred ordering is the increasing order. 

\medskip
{\it Item \ref{Item:WeakKAMsolutionClassification_1}.} 
By contradiction, assume $(x_{-1}, x_0)$ is $u$-calibrated and decreasing. We can extend it to an arbitrarily large decreasing $u$-calibrated configuration $(x_{-n},\ldots, x_{-1}, x_0)$.
From item~\ref{complemento_comportamento_potencial} of proposition~\ref{Proposition:FundamentalConfiguration}, there are $\gamma_1>0$ and $\delta_1\geq0$  fulfilling
\begin{gather*}
u(x_0) - u(x_{-n}) = S(x_{-n}, x_0) \geq  \gamma_1 |x_{-n} - x_0| - \delta_1.
\end{gather*}
However, as $u$ is of type I, it has sublinear variation, so that for $\alpha = \gamma_1/2 $ there is $\beta_1\geq0$ such that
\begin{gather*}
|u(x_0) - u(x_{-n})| \leq \frac{\gamma_1}{2} |x_{-n} - x_0| + \beta_1.
\end{gather*}
As $ |x_{-n} - x_0| \to \infty $ as $ n \to \infty $, we have reached a contradiction. 

\medskip
{\it Item \ref{Item:WeakKAMsolutionClassification_2}.} 
Suppose, by reduction to the absurd, that  $(x_{-1}, x_0)$ is an increasing $u$-calibrated pair, and extend it to an arbitrarily large increasing $u$-calibrated configuration $(x_{-n}, \ldots, x_{-1}, x_0)$. As $u$ is of type II, there are $\gamma_2>0$ and $\delta_2\geq0$ 
\begin{gather*}
-S(x_{-n}, x_0) = u(x_{-n}) - u(x_0) \geq  \gamma_2 |x_0 - x_{-n}| -\delta_2.
\end{gather*}
Since $S$ is sublinear for positively ordered variables, for some $ \beta_2 > 0 $,
\begin{gather*}
|S(x_{-n}, x_0)| \leq \frac{\gamma_2}{2} |x_0 - x_{-n}| + \beta_2.
\end{gather*}
We obtain a contradiction similar to the first item.

\medskip
{\it Item \ref{Item:WeakKAMsolutionClassification_3}}. Let us consider $ T $ of the form $ T := R N $, 
where $ R > 0 $ is the upper bounded for successive jumps of Ma\~n\'e calibrated configurations obtained in proposition~\ref{Lemma:WeaklyTwistAdditionalProperties} and $ N > 1$ will be chosen later. 
Assume that $(x_{-1},x_0)$ is a decreasing  $u$-calibrated pair such that $x_0\le-T$.
Extend it to a decreasing  $u$-calibrated configuration $(x_{-N}, \ldots, x_{-1}, x_0)$. 
In particular, $x_{-N}\leq 0$. As $u$ is of type I on $\mathbb{R}^-$, 
the computations in the first part imply that
\begin{gather*}
Nr \leq |x_{-N} - x_0|  \leq  \frac{2}{\gamma_1} (\beta_1 + \delta_1).
\end{gather*}
Denote then $ N_1 := \lceil \frac{2}{\gamma_1 r}  (\beta_1 + \delta_1) \rceil + 1 $.
Similarly,  if $(x_{-N}, \ldots ,x_0)$ is increasing  $u$-calibrated configuration such that $x_0\ge T$, one obtains $x_{-N} \ge 0$. 
As $u$ is of type II on $\mathbb{R}^+$, the computations of the second part lead us to take into account  $ N_2 := \lceil \frac{2}{\gamma_2 r}  (\beta_2 + \delta_2) \rceil + 1 $. It is enough to choose $ N := \max\{N_1, N_2\} $.
\end{proof}

\begin{proposition} \label{Proposition:WeakKAMsolutionClassificationbis}
Let $E(x,y) $ be a weakly twist interaction that is pattern equivariant with respect to a linearly repetitive quasi-periodic set $\omega$. 
Suppose that $\inf_{x}E(x,x) > \bar E$. Then
\begin{enumerate}
\item  there are examples of weak KAM solutions of each of the three types.
\item every weak KAM solution $ u $ is of one of three types:
\begin{enumerate}
\item \label{Item:WeakKAMsolutionClassification_21}  $u$ is of type I if there exists a bi-infinite $u$-calibrated compatible configuration;

\item \label{Item:WeakKAMsolutionClassification_22} $u$ is of type II if  there exists a bi-infinite $u$-calibrated  anti-compatible configuration;

\item \label{Item:WeakKAMsolutionClassification_23} $u$ is of type III if there is no bi-infinite $u$-calibrated configuration.
\end{enumerate}
\end{enumerate}
\end{proposition}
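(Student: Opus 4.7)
The plan is to prove assertion (2) first and then derive (1) by combining it with explicit constructions.

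For case (2)(a), the $u$-calibration identity $u(x_n)-u(x_m)=S(x_m,x_n)$ combined with the sublinear growth of $S$ in the preferred direction (proposition~\ref{Proposition:FundamentalConfiguration}(\ref{Item:FundamentalConfiguration_1})) shows that $u$ grows sublinearly along the configuration; the uniform Lipschitz bound on weak KAM solutions (lemma~\ref{Lemma:AprioriBoundWeakKAMsolution}) together with the bounded jumps of $(x_k)_{k\in\mathbb Z}$ then upgrade this to sublinear variation on all of $\mathbb R$, i.e., type I. Case (2)(b) is the symmetric argument, where the linear lower bound on $S$ along anti-compatibly ordered pairs (proposition~\ref{Proposition:FundamentalConfiguration}(\ref{complemento_comportamento_potencial})) translates into the type II linear decay of $u$. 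For (2)(c), I would combine the backward-extension property of weak KAM solutions (from every point one iterates a minimizer in the weak KAM equation to produce a backward-infinite $u$-calibrated configuration) with proposition~\ref{Proposition:WeakKAMsolutionClassification}(\ref{Item:WeakKAMsolutionClassification_3}): this shows that through every endpoint preceding $-T$ there passes a compatible backward-infinite calibrated configuration escaping in the preceding direction, and through every endpoint succeeding $T$ an anti-compatible one escaping in the succeeding direction. Feeding these respectively into the sublinear and the linear estimates for $S$ produces the mixed type-III growth. Mutual exclusivity and exhaustiveness of (a), (b), (c) are immediate: any bi-infinite $u$-calibrated configuration is strictly monotone (proposition~\ref{Lemma:WeaklyTwistAdditionalProperties}(\ref{Item:WeaklyTwistAdditionalProperties_4})), thus either compatible or anti-compatible, and the growth dichotomy between (a) and (b) rules out their coexistence.

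For assertion (1), examples of type I and type II are obtained by picking respectively a compatible and an anti-compatible bi-infinite Ma\~n\'e-calibrated configuration (lemma~\ref{concordancia fundamentais e calibradas}) and applying theorem~\ref{theorem:atMostLinearWeakKAMsolution}; the resulting weak KAM solution calibrates this configuration, so (2)(a) or (b) identifies its type. For type III I propose the glueing construction $u := \min(u_1, u_2 + C)$, where $u_1$ is of type I, $u_2$ is of type II, and $C \in \mathbb R$ is arbitrary. Because the pointwise minimum commutes with the infimum defining the Lax-Oleinik operator, one obtains $T[u] = \min(T[u_1], T[u_2+C]) = u + \bar E$, so $u$ is again a weak KAM solution. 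Assuming for concreteness that the preferred ordering is the standard increasing one, the respective growths of $u_1$ and $u_2$ force $u$ to coincide with $u_1$ on a left-infinite tail (where $u_2+C$ diverges to $+\infty$) and with $u_2+C$ on a right-infinite tail (where $u_1$ is sublinear and $u_2+C$ goes to $-\infty$). A putative bi-infinite $u$-calibrated configuration would be strictly monotone; its two tails, lying respectively in the regions $u=u_1$ and $u=u_2+C$, would be simultaneously $u_1$-calibrated (hence compatible by proposition~\ref{Proposition:WeakKAMsolutionClassification}(\ref{Item:WeakKAMsolutionClassification_1})) and $u_2$-calibrated (hence anti-compatible by proposition~\ref{Proposition:WeakKAMsolutionClassification}(\ref{Item:WeakKAMsolutionClassification_2})), contradicting the common monotonicity. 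Therefore no bi-infinite $u$-calibrated configuration exists, and by (2)(c) the solution $u$ is of type III.

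The main obstacle I anticipate is case (2)(c): one must transfer the sublinear behavior on the preceding side and the linear behavior on the succeeding side from the specific calibrated configurations produced by backward extension to \emph{all} points of the respective half-line, while ensuring no coupling between the two sides occurs through the Lipschitz interpolation. This requires carefully combining the uniform bounds on jumps (theorem~\ref{existenciageral}), the Lipschitz constant $K$, and the location-dependent ordering provided by proposition~\ref{Proposition:WeakKAMsolutionClassification}(\ref{Item:WeakKAMsolutionClassification_3}), and lemma~\ref{naoaninhados} is useful to rule out awkward overlaps between configurations of opposite orderings.
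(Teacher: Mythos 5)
Your treatment of assertions (2)(a), (2)(b) and of the type I/II examples in (1) is sound and matches the paper's strategy (calibration along a monotone bi-infinite configuration plus the two growth regimes of $S$ from proposition~\ref{Proposition:FundamentalConfiguration}, upgraded to all of $\mathbb{R}$ by the uniform Lipschitz bound and the bounded jumps); the $\min$-construction for the type III example is also the paper's. The genuine gap is in (2)(c). You invoke proposition~\ref{Proposition:WeakKAMsolutionClassification}(\ref{Item:WeakKAMsolutionClassification_3}) to locate the threshold $T$, but that item is stated \emph{under the hypothesis that $u$ is already of type III}, i.e.\ already has the mixed growth you are trying to establish; using it here is circular. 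Nor can you argue ``not type I and not type II, hence type III'': your (2)(a)--(b) only give the implications ``existence of a bi-infinite compatible (resp.\ anti-compatible) configuration $\Rightarrow$ type I (resp.\ II)'', and nothing you have proved excludes, a priori, a weak KAM solution with no bi-infinite calibrated configuration whose growth fits none of the three types.

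What is missing is precisely the mechanism the paper uses in its third step: introduce the set $\mathcal{B}^+_u$ of points at which \emph{every} backward $u$-calibrated configuration is anti-compatible (and its complement, the points admitting a compatible one), use lemma~\ref{naoaninhados} to show these two regimes are separated by an interval of length at most $L$, and, crucially, show by a diagonal-extraction/compactness argument that if $\mathcal{B}^+_u=\mathbb{R}$, or if its complement is unbounded in the preferred direction, then a bi-infinite $u$-calibrated configuration actually exists (backward extensions ending at endpoints going to $\pm\infty$ pass within $R$ of the origin and accumulate on a bi-infinite calibrated configuration). Only this forces, under the hypothesis of (2)(c), the boundary $b:=\sup(\mathbb{R}\setminus\mathcal{B}^+_u)$ to be finite, which yields half-infinite compatible configurations ending at all far-preceding points and anti-compatible ones at all far-succeeding points; feeding these into the estimates for $S$ then gives the mixed growth. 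You flag this as the ``main obstacle'' and mention lemma~\ref{naoaninhados}, but the construction of bi-infinite configurations from the solution itself — the step that converts ``no bi-infinite configuration'' into the spatial dichotomy — is absent, so the argument for (2)(c) (and hence also your identification of the $\min(u_1,u_2+C)$ example as type III, which you route through (2)(c) rather than verifying its growth directly as the paper does) is not complete as written.
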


\begin{proof} 
Assume for instance the preferred ordering is the increasing order. The case of decreasing order is similar.

\medskip
{\it Part 1.} We prove the existence of weak KAM solutions of type I and II.  
Lemma~\ref{concordancia fundamentais e calibradas} shows that one can always ensure the existence of an increasing as well as a decreasing calibrated configuration as required to apply theorem~\ref{theorem:atMostLinearWeakKAMsolution}.
We claim that the resulting weak KAM solutions are of type I and II, respectively.
Recall from lemma~\ref{Lemma:AprioriBoundWeakKAMsolution} that any weak KAM solution is Lipschitz continuous, with
a Lipschitz constant that only depends on the interaction model $ E $. Recall also that the Ma\~n\'e potential is Lipschitz continuous 
as stated in corollary~\ref{potencial Lipschitz}.

The Lipschitz weak KAM solution $ u_I $ obtained from the increasing calibrated configuration inherits the sublinear growth for positively ordered variables of the Ma\~n\'e potential, and it is thus of type I. 
More concretely, for such a solution $ u_I $ we have
\[
\forall\, k < \ell, \quad u_I(x_\ell) - u_I(x_k) = S(x_k,x_\ell)
\]
along the increasing Ma\~n\'e calibrated configuration $ (x_n)_{n \in \mathbb Z} $.
Let $ R > 0 $ denote an upper bound for the successive jumps of a calibrated configuration.
Given $ y - x > 2R $, we may choose $ k $ and $ \ell $ with $ x \in [x_k, x_{k+1}) $ and $ y \in (x_{\ell-1}, x_\ell] $ 
so that
\begin{align*}
 |u_I(x) - u_I(y)| &\leq |S(x_k, x_\ell)| + 2 \text{Lip}(u_I) R \\
 &\leq |S(x,y)| + 2 \big( \text{Lip}(u_I) + \text{Lip}(S) \big) R.
 \end{align*}
 Item~\ref{Item:FundamentalConfiguration_1} of proposition~\ref{Proposition:FundamentalConfiguration} shows that
 the Ma\~n\'e potential has sublinear growth in this situation, we immediately conclude that $ u_I $ has sublinear variation.

By its turn, the solution $ u_{II} $ obtained from the decreasing calibrated configuration is linearly decreasing as a consequence of the behavior of the Ma\~n\'e potential for negatively ordered variables. 
In more precise terms, $ u_{II} $ satisfies
\[
\forall\, k < \ell, \quad u_{II}(y_\ell) - u_{II}(y_k) = S(y_k,y_\ell),
\]
where $ (y_n)_{n \in \mathbb Z} $ is a particular decreasing Ma\~n\'e calibrated configuration.
Similarly as above, for $ y - x > 2 R $ one may find $ \ell > k $ such that
$$  S(y,x) - 2 \big( \text{Lip}(u_{II}) + \text{Lip}(S) \big) R \le S(y_k, y_\ell) - 2 \text{Lip}(u_{II}) R \le u_{II}(x) - u_{II}(y). $$
The fact that $ u_{II} $ is of type II follows thus from item~\ref{complemento_comportamento_potencial}
of proposition~\ref{Proposition:FundamentalConfiguration}.

\medskip
{\it Part 2.} We prove the existence of a weak KAM solution of type III.  
Let $v_I$ and $v_{II}$ be weak KAM solutions of type I and II, respectively. 
We may assume $v_I(0)=v_{II}(0)=0$. 
Define $v_{III} := \min \{v_I, v_{II}\} $. 
Then $v_{III} $ is again a weak KAM solution. 
Let $\gamma_{II}, \delta_{II} > 0 $ be constants used to describe $v_{II} $ as of type II.
Since $ v_I $ is of type I, let $\beta_I > 0 $ be the corresponding constant associated with  $\alpha_I := \gamma_{II} /2$.
For every $x \geq 2(\beta_I+\delta_{II})/\gamma_{II} $, we have the following inequalities
\begin{align*}
& v_I(x) = v_I(x)-v_I(0) \geq  -\alpha_I x -\beta_I, \\  
& v_{II}(x) = v_{II}(x) - v_{II}(0) \leq -\gamma_{II} x +\delta_{II},
\end{align*}
which yield $ v_{II}(x) \leq v_I(x) -\frac{\gamma_{II}}{2}x +\beta_I +\delta_{II} \leq v_{I}(x) $.
Note that whenever $x \leq -2(\beta_I+\delta_{II})/\gamma_{II}$, we get
\begin{gather*}
v_I(x)  \leq  - \alpha_I x + \beta_I, \ \ v_{II}(x) \geq -\gamma_{II} x- \delta_{II}, 
\end{gather*}
so that $ v_I(x)  \leq v_{II}(x) + \frac{\gamma_{II}}{2}x  + \beta_I + \delta_{II} \leq v_{II}(x) $.
To simplify, denote then $T := 2(\beta_I+\delta_{II})/\gamma_{II}$. 
Let $K\geq0$ be the Lipschitz constant of all weak KAM solutions (see lemma~\ref{Lemma:AprioriBoundWeakKAMsolution}). 
If $0 \leq x \leq y$, let $x_T=\max\{T,x\}$, $y_T=\max\{T,y\}$. Then $T \leq x_T \leq y_T$ and
\begin{align*}
v_{III}(y)-v_{III}(x) &\leq v_{III}(y_T) - v_{III}(x_T) + 2KT\\
& = v_{II}(y_T) - v_{II}(x_T) + 2KT \\
&\leq -\gamma_{II} |y_T - x_T| + \delta_{II} + 2KT \\
& \leq - \gamma_{II} |y-x| + \delta_{II} + 2T(K+\gamma_{II}).
\end{align*}
We have proved that $v_{III}$ is of type II on $\mathbb{R}^+$. Given any $\alpha>0$, there is $\beta\geq0$ such that
\begin{gather*}
\forall\, x,y \in\mathbb{R}, \ |v_I(x)-v_I(y)| \leq \alpha |x-y| +\beta.
\end{gather*} 
If $x\leq y \leq 0$, let $x_T=\min\{x,-T\}$, $y_T = \min\{y,-T\}$. Then $x_T \leq y_T \leq -T$ and
\begin{align*}
|v_{III}(x)-v_{III}(y)| &\leq |v_{III}(x_T)-v_{III}(y_T)| + 2KT\\
&= |v_I(x_T)-v_I(y_T)| + 2KT \\
&\leq \alpha|x_T - y_T| + \beta + 2KT \leq \alpha |x-y| + \beta_{III},
\end{align*}
with $\beta_{III} = \beta + 2T(K+\alpha)$. We have proved that $v_{III}$ is of type I on $\mathbb{R}^-$.

\medskip
{\it Part 3.} Conversely, we prove that every weak KAM solution $u$ is one of the three types.  
Let $\mathcal{B}^-_u$ be the set of points $x\in\mathbb{R}$ such that all $u$-calibrated configurations ending at $x$ are increasing. 
Let $\mathcal{B}^+_u$ be the set of points $x$ for which all $u$-calibrated configurations ending at $x$  are decreasing.
From lemma~\ref{naoaninhados}, note that $ \sup (\mathbb R \setminus \mathcal{B}^+_u) \le \inf \mathcal{B}^+_u + L $.
We discuss thus three possibilities.

\medskip
\paragraph{Case $\mathcal{B}^+_u = \mathbb{R}$.} 
 For every $x \in \mathbb{R}$, all $u$-calibrated configurations ending at $x$ are decreasing. 
Given $A>0$, one constructs a decreasing $u$-calibrated configuration ending at $-A$ of size $n$ sufficiently large, $-A = x_0 < x_{-1} < \cdots < x_{-n} $, where $x_{-n}>A$ and $r\leq |x_{-k}- x_{-k+1}| \leq R$ for every $k$. By re-indexing, one has a family of sequences $x^N_{i_N} < \cdots < x^N_{-j_N}$ with $|x^N_0| \leq R$ and $i_N, j_N \to +\infty$. Using a diagonal extraction, one obtains a bi-infinite  decreasing $u$-calibrated configuration $(x_k)_{k\in\mathbb{Z}}$ with bounded jumps and a minimal spacing. For every $x < y - R $, we choose $k \leq \ell$ so that $|x_\ell - x | \leq R$ , $|y -x_k| \leq R$. Let $\gamma, \delta > 0$ be given as in item~\ref{complemento_comportamento_potencial} of proposition~\ref{Proposition:FundamentalConfiguration}. Using the Lipschitz constant $K$ of $u$ and the fact that $(x_k)_{k\in\mathbb{Z}}$ is Ma\~n\'e calibrated, it follows that
\begin{align*}
u(x)-u(y) &\geq u(x_\ell) - u(x_k) - 2KR  = S(x_k,x_\ell) -2KR \\
&\geq  \gamma |x_k - x_\ell| -\delta - 2KR \geq \gamma |y-x| - \delta - 2R(\gamma+K). 
\end{align*}
We have proved that $u$ is of type II.

 \medskip
\paragraph{Case $\mathbb{R} \setminus \mathcal{B}^+_u \not= \emptyset$ is unbounded from above. } 
Note that $\mathbb{R} \setminus \mathcal{B}^+_u $ is the set of points $ x $ for which there exists an increasing
$u$-calibrated  configuration ending at $x$. 
When it is unbounded from above, lemma~\ref{naoaninhados} ensures that $\mathcal{B}^-_u = \mathbb{R}$ and hence all $u$-calibrated configurations are increasing. One can construct as above a strictly increasing bi-infinite $u$-calibrated configuration $(x_k)_{k\in\mathbb{Z}}$ that is also Ma\~n\'e calibrated. 
Let $K$ denote the Lipschitz constant of $u$.
For every $x < y - R $, if $k \leq \ell$ is chosen so that $|x-x_k| \leq R$ and $|y- x_\ell| \leq R$, for every $\alpha>0$ and $\beta\geq0$ given by proposition~\ref{Proposition:FundamentalConfiguration}, one obtains
\begin{align*}
|u(y)-u(x)| &\leq |u(x_\ell) - u(x_k)| + 2KR = |S(x_k,x_{\ell})| + 2KR \\
&\leq \alpha |x_\ell -x_k| + \beta + 2KR \leq  \alpha |y-x| + \beta + 2R(\alpha+K).
\end{align*}
We have proved that $u$ is type I.

\paragraph{Case $b := \sup( \mathbb{R} \setminus \mathcal{B}^+(u)) \in \mathbb R$.}  Let $a := b-L$, where $ L > 0 $ is the constant obtained in lemma~\ref{naoaninhados}. Then  any $u$-calibrated configuration ending at $x_0<a$  is increasing, and any $u$-calibrated configuration ending at $y_0>b$  is decreasing. In particular, there is no bi-infinite $u$-calibrated configuration. 
Fix once for all an increasing $u$-calibrated configuration $(x_{-k})_{k \ge 0}$ ending at a point $ x_0 < \min \{a, 0\} $ and a
decreasing $u$-calibrated configuration $(y_k)_{k \ge 0}$ ending at a point $ y_0 > \max \{b, 0\} $. Both are Ma\~n\'e calibrated.
For $ x < y - R $ with $ y \le x_0 $, making use of the increasing $u$-calibrated configuration as above, given $ \alpha > 0 $ we have that
$$ |u(y)-u(x)| \le  \alpha |y-x| + \beta + 2R(\alpha+K), $$
where $ \beta\geq0$ is guaranteed by proposition~\ref{Proposition:FundamentalConfiguration}. For $ x < y - R $ with $ x \ge y_0 $,
making use of the decreasing $u$-calibrated configuration as in the first case, we see that
$$ u(x)-u(y) \ge \gamma |y-x| - \delta - 2R(\gamma+K), $$
where $\gamma, \delta > 0$ are the constants guaranteed by item~\ref{complemento_comportamento_potencial}
of proposition~\ref{Proposition:FundamentalConfiguration}.
These facts show that $ u $ is of type III.
\end{proof}

We highlight an immediate corollary of the proof of the result above.

\begin{corollary} \label{ordem no primeiro}
Let $E(x,y) $ be a weakly twist interaction that is pattern equivariant with respect to a linearly repetitive quasi-periodic set $\omega$. 
Suppose that $\inf_{x}E(x,x) > \bar E$.  Given a weak KAM solution $u$, with respect to the preferred ordering,
\begin{enumerate}
\item $u$ is of type I if, and only if,  for any  $  y \in \mathbb{R}$, every point belonging to $\argmin \{  u(\cdot) + E(\cdot,y)\} $ precedes $ y $,
\item$u$ is of type II if, and only if,  for any  $  y \in \mathbb{R}$, every point belonging to $\argmin \{  u(\cdot) + E(\cdot,y)\} $ succeeds $ y $,
\item if $u$ is of type III, if and only if, there exists an interval $ I $ of length at most $ L $, where $ L > 0 $ is given as in lemma~\ref{naoaninhados}, for which, for all $ y $ preceding $ I $
every element of $\argmin \{  u(\cdot) + E(\cdot,y)\} $ precedes $ y $, and for all $ y $ succeeding $ I $ every element of $\argmin \{  u(\cdot) + E(\cdot,y)\} $ succeeds $ y $.
\end{enumerate}
\end{corollary}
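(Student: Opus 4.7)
The plan is to observe first that, since $u$ is a weak KAM solution, the identity $T[u] = u + \bar E$ means that $x \in \argmin\{u(\cdot) + E(\cdot, y)\}$ if and only if $u(y) - u(x) = E(x,y) - \bar E$, which is exactly the condition that $(x,y)$ be a $u$-calibrated pair. Thus every statement in the corollary about argmin points translates verbatim into a statement about the orientation of length-two $u$-calibrated subconfigurations, and the corollary is merely a reformulation of the classification of $u$-calibrated configurations already carried out in Propositions~\ref{Proposition:WeakKAMsolutionClassification} and~\ref{Proposition:WeakKAMsolutionClassificationbis}.

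For the direct implications in items 1 and 2, I would invoke items~\ref{Item:WeakKAMsolutionClassification_1} and~\ref{Item:WeakKAMsolutionClassification_2} of Proposition~\ref{Proposition:WeakKAMsolutionClassification}: if $u$ is of type~I, then every $u$-calibrated configuration is compatible with the preferred order, so every argmin pair $(x,y)$ has $x$ preceding $y$; likewise, type~II forces $x$ to succeed $y$. The converses follow by contradiction using the exclusivity of the three classes established in Proposition~\ref{Proposition:WeakKAMsolutionClassificationbis}. If the argmin condition of item~1 held but $u$ were of type~II, the very same pair $(x,y)$ would be anti-compatible, a direct contradiction; if $u$ were of type~III, then by item~\ref{Item:WeakKAMsolutionClassification_3} of Proposition~\ref{Proposition:WeakKAMsolutionClassification} one can choose $y$ far enough along the half-line where the orientation flips and again obtain a contradiction. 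The argument for item~2 is completely symmetric.

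For item~3 the subtlety is to produce an interval $I$ of length at most $L$ rather than some larger constant. Here I would return to Case~3 of Part~3 of the proof of Proposition~\ref{Proposition:WeakKAMsolutionClassificationbis}, where $b := \sup(\mathbb{R}\setminus\mathcal{B}^+_u) \in \mathbb{R}$ and $a := b - L$ were defined, and where it was shown that any $u$-calibrated configuration ending at a point preceding $a$ is compatible while any such configuration ending at a point succeeding $b$ is anti-compatible. The interval $I$ bracketed by $a$ and $b$ in the preferred ordering then has length exactly $L$ and realises the dichotomy required on argmin points via the translation from the first paragraph. The converse again uses Proposition~\ref{Proposition:WeakKAMsolutionClassificationbis}: the existence of such an $I$ forbids $u$ from being either of type~I (by the succeeds behaviour past $I$) or of type~II (by the precedes behaviour before $I$), leaving only type~III.

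I expect the only genuine bookkeeping difficulty to be language: "precedes" must be interpreted in the preferred ordering throughout, so that when $\epsilon = -1$ a compatible configuration is strictly decreasing in the usual order. Once this dictionary is fixed, the argument is insensitive to which ordering is preferred, and it suffices to handle the case $\epsilon = 1$ explicitly, the other case being obtained by the substitution $(x,y) \mapsto (-x,-y)$ used systematically in the previous sections.
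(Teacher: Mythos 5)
Your proposal is correct and follows essentially the same route as the paper, which indeed treats this corollary as an immediate consequence of the proof of proposition~\ref{Proposition:WeakKAMsolutionClassificationbis}: identify $\argmin\{u(\cdot)+E(\cdot,y)\}$ with $u$-calibrated pairs ending at $y$, use proposition~\ref{Proposition:WeakKAMsolutionClassification} and the exhaustive trichotomy for the direct implications and the converses, and return to the case $b:=\sup(\mathbb{R}\setminus\mathcal{B}^+_u)\in\mathbb{R}$ with $a=b-L$ to get an interval of length at most $L$ in type III. The only micro-detail worth recording is that the non-degeneracy $\inf_x E(x,x)>\bar E$ rules out $y\in\argmin\{u(\cdot)+E(\cdot,y)\}$, so every argmin pair is genuinely increasing or decreasing.
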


The next proposition shows that solutions of the same type not only have the same asymptotic behavior characteristics but actually lie at a uniform distance from each other.

\begin{proposition}\label{distancia entre solucoes}
Let $E(x,y) $ be a weakly twist interaction that is pattern equivariant with respect to a linearly repetitive quasi-periodic set $\omega$.  
Assume $\inf_{x}E(x,x) > \bar E$.  
If $ u $ and $ v $ are two weak KAM solutions of the same type, then $\sup_{x \in\mathbb{R}} |u(x)-v(x)| < + \infty$.
\end{proposition}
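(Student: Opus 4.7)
The plan is to derive a comparison principle along calibrated configurations. First, I would observe that the difference $\phi := u - v$ is monotone along any calibrated sequence. Indeed, if $(x_n^u)_n$ is a $u$-calibrated configuration, single-step calibration yields $u(x_{n+1}^u) - u(x_n^u) = E(x_n^u, x_{n+1}^u) - \bar E$, whereas the weak KAM inequality applied to $v$ gives $v(x_{n+1}^u) - v(x_n^u) \leq E(x_n^u, x_{n+1}^u) - \bar E$. Subtracting yields $\phi(x_{n+1}^u) \geq \phi(x_n^u)$. Symmetrically, $\phi$ is non-increasing along any $v$-calibrated sequence.

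When $u$ and $v$ are both of type~I or both of type~II, proposition~\ref{Proposition:WeakKAMsolutionClassificationbis} furnishes bi-infinite calibrated configurations $(x_n^u)_{n\in\mathbb{Z}}$ and $(x_n^v)_{n\in\mathbb{Z}}$, both compatible (type~I) or both anti-compatible (type~II) with the preferred order. In either case they are strictly monotone in the same direction with successive jumps trapped in $[r,R]$ by theorem~\ref{existenciageral}, so both traverse the entire real line. For each $n$ I would pick $m(n)$ with $|x_n^u - x_{m(n)}^v| \leq R$. Since $u,v$ are $K$-Lipschitz, $|\phi(x_n^u) - \phi(x_{m(n)}^v)| \leq 2KR$. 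Combining this with the opposite monotonicities of $\phi$ along the two sequences, the limits $L^\pm := \lim_{n\to\pm\infty} \phi(x_n^u)$ (which exist by monotonicity) are pinched from the appropriate side by $\phi(x_0^v)$, yielding $\phi(x_0^v) - 2KR \leq L^- \leq L^+ \leq \phi(x_0^v) + 2KR$. Hence $\phi$ is bounded on $(x_n^u)$; the Lipschitz property together with the $R$-bounded jumps of $(x_n^u)$ extends the bound to all of $\mathbb{R}$, since every point of $\mathbb{R}$ is within distance $R$ from some $x_n^u$.

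For the type~III case, no bi-infinite calibrated configuration exists, but proposition~\ref{Proposition:WeakKAMsolutionClassification} guarantees that any backward calibrated sequence starting at a point far enough in the anti-preferred direction is compatible and escapes to anti-preferred infinity, while any backward calibrated sequence starting at a point far enough in the preferred direction is anti-compatible and escapes to preferred infinity. I would run the pairing argument of the previous paragraph twice, once on each half-line, using half-infinite $u$-calibrated and $v$-calibrated sequences starting from a common endpoint and escaping in the same direction. This yields a uniform bound on $\phi$ on each complementary half-line, and Lipschitz continuity takes care of the bounded middle interval.

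The main technical delicacy is that the pairing indices $m(n)$ must be chosen so that the opposite monotonicities of $\phi$ along $(x_n^u)$ and $(x_n^v)$ actually pinch the limits from the correct side; this requires a short case analysis depending on whether the configurations are compatible or anti-compatible and on the preferred orientation. Once the geometry is correctly set up, the conclusion follows immediately from the Lipschitz bound and the monotonicity established in the first step.
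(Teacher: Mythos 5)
Your proposal is correct and follows essentially the same route as the paper's proof: the comparison step showing $u-v$ is monotone along $u$-calibrated (and oppositely along $v$-calibrated) configurations, the use of bi-infinite compatible/anti-compatible configurations for types I and II and of half-infinite configurations beyond the threshold $T$ of proposition~\ref{Proposition:WeakKAMsolutionClassification} for type III, and the extension to all of $\mathbb{R}$ via the uniform Lipschitz constant and the $R$-bounded jumps. The only difference is organizational (you pin the limits of $u-v$ along the $u$-configuration against the $v$-configuration, whereas the paper bounds each half-line separately and then permutes the roles of $u$ and $v$), which does not change the substance of the argument.
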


\begin{proof}
Assume the preferred ordering is the increasing one. There are three cases depending on the types of $u$, $v$.

\medskip
Assume that $u$ and $v$ are of type I. 
Let $(x_k)_{k\in\mathbb{Z}}$ be an increasing bi-infinite $u$-calibrated configuration. 
We may assume $|x_0 | \leq R$, where $R$ is given by proposition~\ref{Lemma:WeaklyTwistAdditionalProperties}. 
We first show by induction that, if $c$ is some constant,  if $u(x_k) \geq v(x_k)+c$ for some $k$, then $u(x_{k+1}) \geq v(x_{k+1})+c$. 
Indeed, using the fact that $v$ is a sub-action, we see that
\begin{align*}
u(x_{k+1}) &= u(x_k) + E(x_k,x_{k+1}) - \bar E \\
&\geq v(x_k) + E(x_k,x_{k+1}) -\bar E + c \geq v(x_{k+1})+c.
\end{align*}
Starting with $c=u(x_0)-v(x_0)$, one thus obtains
\begin{gather*}
\forall\, k \geq 0, \ u(x_k) -v(x_k) \geq u(x_0)-v(x_0).
\end{gather*}
We then extend the above inequality to every  $x > 0$. We choose $k\geq0$   such that $|x - x_k| \leq R$. Using the Lipschitz constant $K$ of $u$ and $v$, one has
\begin{gather*}
\forall\, x \geq 0, \ u(x) - v(x) \geq u(x_k) - v(x_k) -2KR \geq u(x_0) - v(x_0) -2KR. 
\end{gather*}
By permuting the role of $u$ and $v$, one obtains
\begin{gather*}
\sup_{x\geq 0}|u(x)-v(x)| \leq \sup_{y\in[-R,R]} |u(y)-v(y)| + 2KR.
\end{gather*}
Similarly if $u(x_k) \leq v(x_k)+c$ for some $k \leq 0$, by induction one gets $u(x_{k-1}) \leq v(x_{k-1})+c$. 
One concludes as before that
\begin{gather*}
\forall\, x \leq 0, \ u(x)-v(x) \leq u(x_0) - v(x_0)+2KR,  \quad \text{ and} \\
\sup_{x\leq 0}|u(x)-v(x)| \leq \sup_{y\in[-R,R]} |u(y)-v(y)| + 2KR.
\end{gather*}

If $u$ and $v$ are of type II, the proof is completely analogous using decreasing bi-infinite $u$-calibrated and $v$-calibrated configurations.

Assume that $u$ and $v$ are of type III. Let $T\geq0$ be the maximum of the respective constants for $ u $ and $ v $ as given in proposition~\ref{Proposition:WeakKAMsolutionClassification}. Through decreasing $u$-calibrated and $v$-calibrated configurations ending at $T$, in a similar way as above, we get $ \sup_{x \ge T} |u(x)-v(x)| \leq |u(T)-v(T)| + 2KR$. 
Then using increasing $u$-calibrated and $v$-calibrated configurations ending at $-T$, we obtain 
$ \sup_{x \le -T} |u(x)-v(x)| \leq |u(-T)-v(-T)| + 2KR$.
\end{proof}

\section{Periodic interaction models}\label{periodic and linear}

Theorem~\ref{theo:mainThm} illustrates that, when the  non-degeneracy condition is satisfied ($ \inf_x E(x,x) > \bar E $),  the asymptotic behaviour of a weak KAM solution is related to the one of calibrated configurations. A noticeable point in the proof is the existence of unbounded Ma\~n\'e calibrated configurations, at least in one direction. 
We show the non-degeneracy hypothesis  is necessary  for these results by providing an example of a degenerate interaction for which any Ma\~n\'e  calibrated configuration is bounded but admitting a weak KAM solution with linear growth. In  more precise words, we prove in this section theorem~\ref{theo:Periodic degenerate}.  Unexpectedly, the example falls in the classical framework of  a periodic  interaction mechanical model of   Frenkel-Kontorova type. 

\begin{hypotheses}\label{label:periodicMainAssumptions}
Let $E:\mathbb{R} \times \mathbb{R} \to \mathbb{E}$ be an interaction model satisfying all the assumptions (H1-4) of hypotheses~\ref{assumption:interactionModel}. We will say that $E$ is a {\it periodic interaction model}  whenever

\medskip

\noindent ($\text{H}_{\text{per}}$) $E$ is periodic:
$ \, \forall \,x,y \in\mathbb{R}, \quad E(x,y) = E(x+1,y+1) $.
\end{hypotheses}

Periodicity suggests to  introduce another interaction $E^{per} : \mathbb{R} \times \mathbb{R} \to \mathbb{R}$ defined as
\[
E^{\text{per}}(x,y) := \min_{k\in\mathbb{Z}}E(x,y+k). 
\]
Note that $E^{per} : \mathbb{T} \times  \mathbb{T} \to \mathbb{R}$ may be seen as a function on the torus, that is, 
$ E^{per} $ is doubly periodic.
But we will see it is important to keep the real line orientation and consider configurations on $\mathbb{R}$ and not on $\mathbb{T}$. 
It is easy to verify  that the two ground actions coincide,
\[
\bar E = \bar E^{\text{per}}.
\]
As in definition~\ref{definition:basicWeakKAMtools} and in~\eqref{definition:LaxOleinikOperator}, we may similarly introduce periodic 
versions of the Ma\~n\'e potential and Lax-Oleinik operator (now acting on continuous periodic functions):
\begin{gather*}
\forall \,x,y \in \mathbb{R}, \quad S^{\text{per}}(x,y) =  \inf_{n\geq1} \ \inf_{x = x _0,\ldots,x_n=y} \ \sum_{k=0}^{n-1} \big( E^{per}(x_k,x_{k+1}) - \bar E \big), \\
\forall \,u \in C^0(\mathbb{T}), \ \forall \,y \in \mathbb{R}, \quad T^{per}[u](y) := \inf_{x\in\mathbb{R}} \big\{ u(x) + E^{per}(x,y) \big\}.
\end{gather*}

It is easy to check the following properties.

\begin{lemma}
Let $E$ be an interaction model satisfying the hypotheses~\eqref{label:periodicMainAssumptions}.
\begin{enumerate}
\item The periodic Ma\~n\'e potential for the interaction $E^{\text{per}}$ is related to the Ma\~n\'e potential by
\begin{align*}
\forall \,x,y \in \mathbb{R}, \quad S^{\text{per}}(x,y) = \inf_{k \in\mathbb{Z}} S(x,y+k).
\end{align*}
\item If $u \in C^0 (\mathbb{R})$ is 1-periodic, then
\[
T^{\text{per}}[u] = u + \bar E \quad \Leftrightarrow \quad T[u] = u+\bar E.
\]
\item If $u$ is a 1-periodic weak KAM solution (for either $E$ or $E^{\text{per}}$),   the optimal points in  the two definitions of $T$ and $T^{\text{per}}$ coincide modulo $\mathbb{Z}$: for every $y\in\mathbb{R}$, we have
\begin{gather*}
 \argmin_{x\in\mathbb{R}} \{ u(x) + E^{\text{per}}(x,y)-\bar E \} = \argmin_{x\in\mathbb{R}} \{ u(x) + E(x,y)-\bar E \} + \mathbb{Z}. 
\end{gather*}
\end{enumerate}
\end{lemma}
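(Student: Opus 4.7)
The plan is to reduce everything to the identity $E^{\text{per}}(a,b) = \min_{k \in \mathbb{Z}} E(a, b+k)$ combined with the shift invariance $E(x+1, y+1) = E(x,y)$ and (where relevant) the $1$-periodicity of $u$. The unifying mechanism is a change of variables that lifts paths between $E^{\text{per}}$ and $E$.

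For item (i), I would first establish $S^{\text{per}}(x,y) \geq \inf_{k \in \mathbb{Z}} S(x, y+k)$. Given a path $x = x_0, \ldots, x_n = y$, one rewrites
\[
\sum_k E^{\text{per}}(x_k, x_{k+1}) = \inf_{(m_0, \ldots, m_{n-1}) \in \mathbb{Z}^n} \sum_k E(x_k, x_{k+1} + m_k),
\]
and introduces the lift $\tilde{x}_k := x_k + \sum_{j<k} m_j$; periodicity of $E$ yields $E(\tilde{x}_k, \tilde{x}_{k+1}) = E(x_k, x_{k+1} + m_k)$, producing an $E$-path from $x$ to $y + \sum_j m_j$ of identical cost. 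For the reverse inequality, given any $E$-path $\tilde{x}_0 = x, \ldots, \tilde{x}_n = y+M$, I would pick arbitrary $N_1, \ldots, N_{n-1} \in \mathbb{Z}$ together with $N_0 := 0$ and $N_n := M$, and set $x_k := \tilde{x}_k - N_k$, $m_k := N_{k+1} - N_k$; this gives $E^{\text{per}}(x_k, x_{k+1}) \leq E(x_k, x_{k+1}+m_k) = E(\tilde{x}_k, \tilde{x}_{k+1})$, hence an $E^{\text{per}}$-configuration from $x$ to $y$ of cost at most the original.

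For item (ii), the direct computation
\[
T^{\text{per}}[u](y) = \inf_{x \in \mathbb{R},\, k \in \mathbb{Z}} \{u(x) + E(x, y+k)\} = \inf_{z \in \mathbb{R}} \{u(z) + E(z,y)\} = T[u](y),
\]
where the middle equality comes from the substitution $z = x - k$ combined with the $1$-periodicity of $u$ and the periodicity of $E$, together with the recalled identity $\bar{E} = \bar{E}^{\text{per}}$, yields the equivalence. For item (iii), I would apply the same substitution at the level of minimizers: if $x^* \in \argmin\{u + E(\cdot, y) - \bar E\}$, then periodicity and the bound $E^{\text{per}} \leq E$ give $u(x^*+k) + E^{\text{per}}(x^*+k, y) - \bar E \leq u(x^*) + E(x^*, y) - \bar E = u(y)$ for every $k \in \mathbb{Z}$, while $T^{\text{per}}[u] = u + \bar E$ supplies the reverse inequality; conversely, if $x^* \in \argmin\{u + E^{\text{per}}(\cdot, y) - \bar E\}$, I would pick $k^* \in \mathbb{Z}$ realizing $E^{\text{per}}(x^*, y) = E(x^*, y+k^*)$ and read off from the periodicities of $u$ and $E$ that $x^* - k^*$ is an $E$-argmin. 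The only delicate point is in item (iii), where one must invoke the full weak KAM identity rather than merely the sub-action inequality, so as to guarantee that $E^{\text{per}} \leq E$ and $T^{\text{per}}[u] \leq u + \bar E$ are simultaneously saturated at the translated candidates.
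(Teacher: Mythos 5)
Your proposal is correct: the lemma is stated in the paper without proof (it is labelled as easy to check), and your verification via the integer-shift change of variables -- lifting $E^{\text{per}}$-paths to $E$-paths and back for item (i), the substitution $z=x-k$ for item (ii), and translating minimizers while invoking the full identity $T^{\text{per}}[u]=u+\bar E$ (available by item (ii)) for item (iii) -- is exactly the standard argument the authors have in mind. No gaps; in particular you correctly note that $E^{\text{per}}(x+k,y)=E^{\text{per}}(x,y)\le E(x,y)$ and that the minimum defining $E^{\text{per}}$ is attained by superlinearity, which are the only points requiring care.
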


By taking into account the periodic potential
$$ V(x) := 1 - \cos (2\pi x), $$
we will focus on the following version of the standard Frenkel-Kontorova model
\begin{equation}
E_\lambda (x,y) := \frac{1}{2} |y - x - \lambda|^2 +KV(x), \qquad \lambda \in \mathbb{R}, \ K \in \mathbb{R}_+. \label{definicao exemplo modelo periodico}
\end{equation}
 Note that $E_\lambda$ satisfies assumptions (H1-4) of hypotheses~\ref{assumption:interactionModel}, but it may be degenerate.
 Let $ \bar{E_\lambda} $ denote the corresponding ground action.

We now fix  $ K > 0 $ and discuss the properties on the periodic interaction model~\eqref{definicao exemplo modelo periodico} with respect to $ \lambda $.
Denote by $ S_\lambda(x, y) $ the respective Ma\~n\'e potential.  
Since the interaction $E_0$ may be written as a sum of even functions, the Ma\~n\'e potential and the ground action preserve some symmetries:
\begin{eqnarray*}
 S_0(0,y) \!\! &= & \!\!   S_0(0,-y),   \\
 S_0(x,y) + KV(y) \!\!  &=& \!\!   S_0(y,x) + KV(x),  \\
E_{\lambda+1}(x,y) = E_{\lambda}(x,y), &   &   E_{-\lambda}(x,y)= E_{\lambda}(-x,-y), \\
E_\lambda(x,y) -   \frac{1}{2} \lambda^2  &=& E_0(x, y) - \lambda(y - x), \\
\bar{E}_{\lambda +1} = \bar{E}_{\lambda},&    & \bar{E}_{-\lambda}= \bar{E}_{\lambda}. 
\end{eqnarray*}

We resume in the next proposition the main properties of the  Ma\~n\'e potential  and of the Ma\~n\'e calibrated configurations that will give  theorem \ref{theo:Periodic degenerate}. 
More precisely, for suitable parameters $K$ and $\lambda$, this model is degenerate (item \ref{item:periodicManePotential_2}), the  Ma\~n\'e potential defines a weak KAM solution (item \ref{item:periodicManePotential_5}) of linear growth (items \ref{item:periodicManePotential_1} and \ref{item:periodicManePotential_1bis}) but all Ma\~n\'e  calibrated configurations are bounded (item \ref{item:periodicManePotential_7}).

\begin{proposition} \label{proposition:periodicManePotential}
Let $K >0$  and  $  c_0(K) := \min \{\sqrt{1 + 2K} - 1, \frac{1}{40}\}$. Then for every  $|\lambda| <c_0(K)$, we have
\begin{enumerate}
\item \label{item:periodicManePotential_1} $\forall \,\, y \in \mathbb R, \ S_0(0, y)\ge c_0(K) (| y |-\frac{1}{2})$, 
\item \label{item:periodicManePotential_1bis}  $\exists \, c_1(K) > c_0(K), \ \forall \,\, x,y \in \mathbb R, \ S_0(x,y) \leq c_1(K)(|y-x|+1)$,
\item \label{item:periodicManePotential_2} $ \bar E_\lambda = \inf_{x \in \mathbb R} E_\lambda (x, x) = \frac{1}{2}\lambda^2 $, $\bar E_\lambda = E_\lambda(x,x) \ \Leftrightarrow \ x \in \mathbb{Z}$
\item \label{item:periodicManePotential_3}  $S_\lambda(x,y) = S_0(x,y) - \lambda(y-x)$, $S_\lambda(0,y) \geq (c_0(K)-|\lambda|)|y| - \frac{1}{2}c_0(K)$,
\item \label{item:periodicManePotential_4} $\forall \,i < j < k, \ i,j,k \in \mathbb{Z}, \quad S_\lambda(i, k) = S_\lambda(i, j) + S_\lambda(j, k)$,
\item \label{item:periodicManePotential_5} $S_\lambda(0,x)$ is a  uniformly Lipschitz weak KAM solution,
\item \label{item:periodicManePotential_7} if $(x_k)_{k\in\mathbb{Z}}$ is a Ma\~n\'e calibrated configuration for the interaction $E_\lambda$, then there is an integer $n$ such that $x_k \in [n,n+1],  \ \forall \,k\in\mathbb{Z}$.
\end{enumerate}
\end{proposition}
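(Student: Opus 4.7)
The proof proceeds item by item, with item~\ref{item:periodicManePotential_1} serving as the cornerstone from which the remaining ones follow by short formal arguments.

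For item~\ref{item:periodicManePotential_1}, I first record that $\bar E_0 = 0$: this is immediate since $E_0 \geq 0$ while $E_0(n,n) = 0$ for every $n \in \mathbb Z$. Hence $S_0(0,y)$ equals the infimum over monotone paths $(0 = x_0, x_1, \ldots, x_n = y)$ of $\sum \big[\tfrac{1}{2}a_k^2 + KV(x_{k-1})\big]$, with $a_k = x_k - x_{k-1}$. The algebraic identity
\[
\tfrac{1}{2}a^2 + K \;=\; \tfrac{1}{2}(a - c_0)^2 + c_0(a+1), \qquad c_0 = \sqrt{1+2K}-1,
\]
obtained from $K = c_0 + c_0^2/2$, yields $\tfrac{1}{2}a^2 + K \geq c_0(a+1)$ pointwise. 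If $V$ were bounded below by $1$, summing over the path would immediately give the linear bound; the actual difficulty is that $V$ vanishes at integers. I would fix this either by grouping consecutive steps via the identity $V(x) + V(x+\tfrac{1}{2}) = 2$, or by the AM--GM inequality $\tfrac{1}{2}a^2 + KV(x) \geq a\sqrt{2KV(x)}$ and comparing the resulting Riemann-type sum to $\int_0^y \sqrt{2KV(s)}\,ds$, exploiting the $2\pi\sqrt K$-Lipschitz character of $\sqrt{2KV(\cdot)} = 2\sqrt K\,|\sin \pi\cdot|$ with the quadratic error absorbed via $\sum a_k^2 \leq 2\sum E_0$. The cap $c_0 \leq \tfrac{1}{40}$ accommodates the losses in the chosen grouping.

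Items~\ref{item:periodicManePotential_1bis}--\ref{item:periodicManePotential_4} are short formal consequences. For~\ref{item:periodicManePotential_1bis}, the integer-step path $0 \to 1 \to \cdots \to \lfloor y\rfloor \to y$ has total $E_0$-cost at most $\tfrac{|y|+1}{2}$, so any $c_1(K) \geq \tfrac{1}{2}$ works, and in particular $c_1 > c_0$. For~\ref{item:periodicManePotential_2}, constant configurations at integers give $\bar E_\lambda \leq \tfrac{\lambda^2}{2}$; the reverse inequality uses the identity
\[
E_\lambda(x_0,\ldots,x_n) - n\bar E_\lambda \;=\; E_0(x_0,\ldots,x_n) - \lambda(x_n - x_0),
\]
combined with~\ref{item:periodicManePotential_1} and $|\lambda| < c_0$ to deduce $\tfrac{1}{n}E_\lambda \geq \tfrac{\lambda^2}{2} - \tfrac{c_0}{2n}$, and then passing to the limit; the equality case follows from $V(x) = 0 \Leftrightarrow x \in \mathbb Z$. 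Item~\ref{item:periodicManePotential_3} is the infimum of the same identity, with the numerical bound obtained by substituting~\ref{item:periodicManePotential_1}. For~\ref{item:periodicManePotential_4}, the insertion computation
\[
E_0(x_m, j) + E_0(j, x_{m+1}) - E_0(x_m, x_{m+1}) \;=\; -(j - x_m)(x_{m+1} - j) < 0
\]
whenever $j \in \mathbb Z$ lies strictly between $x_m$ and $x_{m+1}$ (using $V(j) = 0$) forces any minimizing path between integers to pass through every intermediate integer, yielding $S_0(i,k) = S_0(i,j) + S_0(j,k)$; transport to $S_\lambda$ via~\ref{item:periodicManePotential_3}. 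For item~\ref{item:periodicManePotential_5}, $u(y) := S_\lambda(0,y)$ is automatically a weak KAM subsolution by sub-additivity of $S_\lambda$; splitting an optimal path at its penultimate point gives the Lax--Oleinik fixed-point relation $u(y) = \inf_x [u(x) + E_\lambda(x,y) - \bar E_\lambda]$, and because $u$ has at most linear growth (by~\ref{item:periodicManePotential_1bis} and~\ref{item:periodicManePotential_3}) while $E_\lambda(x,y)$ is quadratically coercive in $x$, the infimum is attained; lemma~\ref{Lemma:AprioriBoundWeakKAMsolution} then supplies the uniform Lipschitz bound automatically.

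The main obstacle is item~\ref{item:periodicManePotential_7}. By~\ref{item:periodicManePotential_3}, calibration for $E_\lambda$ reduces to calibration for $E_0$, and lemma~\ref{Lemma:ManePotential} makes $(x_k)$ monotone. The insertion argument of~\ref{item:periodicManePotential_4} already forbids any step from strictly crossing an integer, so $(x_k)$ is confined to the union of adjacent closed unit intervals $[n, n+1]$ meeting only at integers. The delicate remaining case is when the configuration passes through an integer at an exact index, $x_{k_0} = n$ with $x_{k_0 - 1} < n < x_{k_0 + 1}$. Inserting an auxiliary point $y_1 = n - \eta$ strictly between $x_{k_0-1}$ and $n$ changes the three-step cost relative to the two-step $x_{k_0-1} \to n \to x_{k_0+1}$ by a quantity whose leading term as $\eta \to 0^+$ is $-(n - x_{k_0-1})\,\eta$, strictly negative (the $O(\eta^2)$ corrections from $\eta^2$ and $KV(y_1)$ are lower order since $V(y_1) = O(\eta^2)$ near the integer). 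This yields $S_0(x_{k_0 - 1}, x_{k_0 + 1}) < E_0(x_{k_0 - 1}, n) + E_0(n, x_{k_0 + 1})$, contradicting calibration. Hence no integer-transition occurs and the entire configuration lies in a single $[n, n+1]$ (the trivial case being $x_k \equiv n$).
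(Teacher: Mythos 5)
The genuine gap is item (i), and it is the cornerstone: your argument stops exactly where the paper's work lies. You correctly identify the obstruction — the pointwise bound $\tfrac{1}{2}a^2+K\ge c_0(a+1)$ is useless because $V$ vanishes at the integers — but then you only offer two unexecuted plans. The identity $V(x)+V(x+\tfrac{1}{2})=2$ has no evident way to be applied to a path whose points are not spaced by $\tfrac{1}{2}$, and the AM--GM/Riemann-sum route requires a real error analysis: a single long step may start and end where $V$ is nearly zero while jumping over an entire bump of $V$, so the comparison with $\int\sqrt{2KV}$ is not controlled by the Lipschitz constant of $\sqrt{2KV}$ alone but must be traded against the kinetic term; and even if completed, this route yields a slope and an additive constant that are not obviously the stated $c_0(K)=\min\{\sqrt{1+2K}-1,\tfrac{1}{40}\}$ and $-\tfrac{1}{2}c_0(K)$. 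The sentence ``the cap $c_0\le\tfrac{1}{40}$ accommodates the losses'' is precisely where the proof should be. The paper's mechanism is different and specific: reduce to monotone paths (lemma~\ref{Lemma:ManePotential}), single out the steps whose half-open interval contains a half-integer (consecutive such steps are separated by gaps of length less than $1$), set $u,v$ equal to the distances from the step's endpoints to that half-integer so that the potential term becomes $K(1+\cos 2\pi u)$, and prove the per-step inequality $\tfrac{1}{2}(u+v)^2+K(1+\cos 2\pi u)\ge c_0(u+v+1)$ by a boundary analysis ($u=0$; $v=0,\,u\ge\tfrac{1}{4}$; $v=0,\,u\le\tfrac{1}{4}$), which is exactly where the two thresholds $\sqrt{1+2K}-1$ and $\tfrac{1}{40}$ and the additive constant $-\tfrac{1}{2}c_0$ come from. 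Since your items (iii) and (iv) (and the quantitative content of (vi)) feed on (i), the proposal as written does not establish the proposition.

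The remaining items are essentially sound and largely parallel the paper: your translation identity for (iii)--(iv), the integer-insertion computation for (v), and the two-case analysis in (vii) (no step strictly crossing an integer; no exact transition through an integer, excluded by inserting $n-\eta$ with first-order gain $\eta\,(n-x_{k_0-1})$) correspond to the paper's Parts 1--2, with your explicit quadratic computation replacing the Aubry crossing lemma — a legitimate and arguably cleaner variant for this model. A few repairs are still needed: in (ii) the constant $\tfrac{1}{2}$ is insufficient for a non-integer starting point $x$, since the first step carries $KV(x)\le 2K$, so $c_1$ must depend on $K$; in (iii) item (i) as stated only covers base point $0$, so either reduce by periodicity or prove the version with a slightly worse additive constant (harmless in the limit, but it must be said); in (vi) you need continuity (indeed local Lipschitz continuity, via boundedness of near-optimal last jumps) of $S_\lambda(0,\cdot)$ before invoking lemma~\ref{Lemma:AprioriBoundWeakKAMsolution}; and in (vii) calibrated configurations need not be monotone in this degenerate setting — the lemma you cite concerns the Mañé potential, not configurations — and you should explicitly treat a run $x_{k-1}<x_k=\cdots=x_{k+l}=n<x_{k+l+1}$, which your insertion handles verbatim because the repeated steps cost zero.
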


Note that this model is non-degenerate for some parameters. Actually, it will be shown in  proposition~\ref{Exemplo hipotese} of appendix~\ref{Nao degenerescencia} (for $\rho =1$)  that this periodic Frenkel-Kontorova model  is non-degenerate for large $|\lambda|$ compared to $K$, namely:
$$\forall \, 0 \le K < \frac{\lambda^2}{8}, \qquad \bar{E}_{\lambda} < \inf_{x \in {\mathbb R}}  E_{\lambda}(x,x).
$$

\begin{proof}[Proof of proposition \ref{proposition:periodicManePotential}.]

\quad
\medskip

\noindent {\it Item  \ref{item:periodicManePotential_1}}. By the symmetries of $S_0$, we may suppose  that $ y > 0 $. 
Note that $\bar E_0=0$. Thanks to lemma~\ref{Lemma:ManePotential}, for the computation of $S_0(0,y)$, 
it is enough to consider monotone configurations  $ 0 = x_0 < x_1 < \ldots < x_n = y$. Denote 
\[
 \{i_\alpha < i_{\alpha + 1} < \ldots < i_\beta\} := \{ 0\le i  <n : [x_i, x_{i+1}) \cap (\mathbb Z + {\textstyle\frac{1}{2}}) \neq \emptyset\}.
\] 
Clearly, $ 0 \le x_{i_\alpha} < \frac{1}{2} $, $ x_{i_{k+1}} - x_{i_k+1} < 1 $, and 
$ y - x_{i_\beta} < 1 $. By positivity of $E_0$, we have
\begin{equation*}
\sum_{i=0}^{n-1} E_0(x_i, x_{i+1}) \ge \sum_{k=\alpha}^{\beta} E_0(x_{i_k}, x_{i_k+1}). 
\end{equation*}

We claim that there exists a constant $ c \in (0, \frac{1}{4}) $ such that, for any such a subconfiguration, 
$ E_0(x_{i_k}, x_{i_k+1}) > c \, (x_{i_k+1} - x_{i_k} + 1) $, $ \forall \,k = \alpha, \ldots, \beta $. Using  $0 \leq x_{i_{k+1}}-x_{i_k+1} \leq 1$,  the claim will imply
\begin{equation}\label{estimativa central}
\sum_{k=\alpha}^{\beta} E_0(x_{i_k}, x_{i_k+1}) \ge c \, (y- x_{i_\alpha}).
\end{equation}

Let us prove the claim. 
Denote by $ q_k + \frac{1}{2} $ the smallest element of $\mathbb Z + \frac{1}{2}$ belonging to $ [x_{i_k}, x_{i_k+1}) $.  
If we write $ u_k = q_k + \frac{1}{2} - x_{i_k} $ and $ v_k = x_{i_k+1} - (q_k + \frac{1}{2}) $, to show our claim it is enough to assure the existence of a constant $  c \in (0, \frac{1}{4} )  $ such that, for every $u, v \ge 0$,
\begin{equation*}
F(u,v) :=  \frac{1}{2} (u + v)^2 + K (1 + \cos 2 \pi u) - c \, (u + v +1) \geq0. 
\end{equation*}
The minimum of $ F(u, v)$ over $ u, v \ge 0 $ takes place on the boundary of the domain, since there is no critical points in the interior of the domain.
Hence, we have to consider the following three cases.

\smallskip

\noindent {\it Case 1.} On the border $ u = 0 $, note that    
\[
F(0,v) = \frac{1}{2} v^2 - c v - c + 2 K  \ge   2K -\frac{1}{2}(c+1)^2 + \frac{1}{2} \geq 0,
\] 
whenever $c\in [0, \sqrt{1+4K}-1] $.

\smallskip

\noindent {\it Case 2a.} On the border $ v = 0 $ and $ u \ge \frac{1}{4} $, we have
\[
F(u, 0)  \ge \frac{1}{2} u^2 - c \, (u + 1) \geq \frac{1}{32} - c\frac{5}{4} \geq0,
\] 
whenever $c \in [0, \frac{1}{40}]$. 

\smallskip

\noindent {\it Case 2b.} Finally, on the border $ v = 0 $ and $ 0 \le u \le \frac{1}{4} $, observe that 
\[
F(u, 0) \ge \frac{1}{2} u^2 + K  - c \, (u + 1) \geq K - \frac{1}{2}(c+1)^2 +\frac{1}{2} \geq0,
\] 
whenever $c \in  [0, \sqrt{1 + 2K} - 1] $.

\smallskip

To summarize, we have shown that the inequality~\eqref{estimativa central} holds for the constant $ c = c_0(K) = \min\{\sqrt{1 + 2K} - 1, \frac{1}{40}\} $. With the obvious estimate
\begin{gather*}
\sum_{i=0}^{i_\alpha-1} E_0(x_i, x_{i+1})  \ge c_0(K)(x_{i_\alpha}-{\textstyle\frac{1}{2}}),
\end{gather*}
we conclude that $ S_0(0, y) \ge c_0(K) (y-\frac{1}{2})$ for all $ y > 0 $. 

\medskip

\noindent {\it Item \ref{item:periodicManePotential_1bis}.} The existence of $c_1(K)$ follows from lemma \ref{lemma:aprioriLipschitzLarge}.

\medskip

\noindent {\it Item \ref{item:periodicManePotential_2}}. 
On the one hand, as $ \bar E_\lambda = \lim_{n\to+\infty} \inf_{x_0,\ldots,x_n} \frac{1}{n} E_\lambda(x_0,\ldots,x_n) $, from the configuration $x_k=0 $  for all $ k $, one obtains $ \bar E_\lambda \le \frac{1}{2}\lambda^2 $.

On the other hand, let us introduce the function 
$$ u_\lambda (x) = \inf_{n\geq1} \ \inf_{x_0 = 0,\ldots,x_n=x} \Big[ E_\lambda(x_0,\ldots,x_n) - n \frac{\lambda^2}{2} \Big]. $$
Since $ E_\lambda(x,y) -   \frac{ \lambda^2}{2}  = E_0(x, y) - \lambda(y - x) $, clearly $ u_\lambda(x) = S_0(0, x) - \lambda x $, and in particular, $ u $ is a well-defined function.
Note that
\begin{gather}
u_\lambda(x) \ge (c_0(K) - |\lambda|) | x | - {\textstyle\frac{1}{2}}c_0(K) \geq -{\textstyle\frac{1}{2}}c_0(K), \quad \forall \,\, x \in \mathbb R, \notag \\
u_\lambda(x) + E_\lambda(x,y) \ge u_\lambda(y) + {\textstyle\frac{1}{2}} \lambda^2, \quad \forall \,\, x, y \in \mathbb R.  \label{u subacao}
\end{gather}
Define then $ u^{\text{per}}_\lambda(x) := \inf_{p \in \mathbb Z} u_\lambda(x + p) $. Using the constant $c_1(K)$ of item~\ref{item:periodicManePotential_1bis}, it follows that
\[
 -{\textstyle\frac{1}{2}} c_0(K) \le  u^{\text{per}}_\lambda(x) \le \inf_{p \in \mathbb Z} \big[(c_1(K) + |\lambda|)(|x + p| + 1) \big] \le 2(c_1(K) + |\lambda|),
\] 
namely, $ u^{\text{per}}_\lambda $ is bounded.  
From~\eqref{u subacao},  we obtain for every $y\in\mathbb{R}$,
\begin{align*}
u^{\text{per}}_\lambda(y) + {\textstyle\frac{1}{2}} \lambda^2 &\leq \inf_{x\in\mathbb{R}} \ \inf_{p,q \in \mathbb Z}  \big[u_\lambda(x+p) + E_\lambda(x, y+q) \big]   \\
&= \inf_{x\in\mathbb{R}} \big[ u^{\text{per}}_\lambda(x) + E^{\text{per}}_\lambda(x, y) \big].
\end{align*}
Therefore, we clearly get
$$ \bar E_\lambda = \lim_{n\to+\infty} \inf_{x_0,\ldots,x_n \in \mathbb R} \frac{1}{n} \big[E_\lambda(x_0,\ldots,x_n) + u^{\text{per}}_\lambda(x_0) - u^{\text{per}}_\lambda(x_n)\big] \ge {\textstyle\frac{1}{2}}\lambda^2. $$ 

\medskip

\noindent {\it Item \ref{item:periodicManePotential_3}}. It follows from $\bar E_\lambda = \frac{1}{2}\lambda^2$.

\medskip

\noindent {\it Item \ref{item:periodicManePotential_4}}. 
From  item \ref{item:periodicManePotential_3}, it is enough to prove the relation for  $ \lambda = 0 $. In addition, since $ S_0(m, n) = S_0(0, n - m) $ for any integers $ m < n $, one just needs to argue that $ S_0(0, k) = k S_0(0,1) $ for all positive integers $ k $.
Such a fact follows from 
$$ S_0(0, k + 1) = S_0(0, k) + S_0(0,1), \qquad \forall \,\, k \ge 1. $$
To see that the equality above holds, note first that clearly $ S_0(0, k+1) \le S_0(0, k) + S_0(k, k+1) = S_0(0,k)+ S_0(0,1) $. 
Recall that we may consider only monotone configurations $ (x_0 = 0, \ldots, x_n = k + 1) $ in the expression defining $ S_0(0, k+1) $ (lemma~\ref{Lemma:ManePotential}). 
Observe that for $ u, v \ge 0 $ and $ j \in \mathbb Z $,
$$ E_0(j - u, j + v) - [E_0(j - u, j) + E_0(j, j+v)] = \frac{1}{2}(u+v)^2 - \frac{1}{2} u^2 - \frac{1}{2} v^2 \ge 0. $$
For $m = \max\{0\le i \le n: x_i<k \}$, this inequality implies the configuration obtained by concatenating the configurations   
$ (x_0 = 0, \ldots, x_m, k) $ and $(k, x_{m+1}, \ldots, x_n=k+1)$  does not increase the total energy. This ensures the opposite   inequality $ S_0(0, k + 1) \ge S_0(0, k) + S_0(0,1) $.

\medskip

\noindent {\it Item \eqref{item:periodicManePotential_5}.} 
Thanks to item~\eqref{item:periodicManePotential_3}, it is enough to argue that $ S_0(0, \cdot) $ is a weak KAM solution. Note first that $ x \mapsto S_0(0,x) $ is a continuous function. Obviously, for all $ x $ and $ y $
\[
S_0(0, y) - S_0(0,x) \le S_0(x, y) \le E_0(x,y) - \bar E_0 = E_0(x,y).
\]
On the other hand, for each positive integer $ k $, there exists a subconfiguration $ (x_1^k, \ldots, x_{n_k}^k) $ such that
\[
S_0(0, y) + \frac{1}{k} \ge E_0(0, x_1^k, \ldots, x_{n_k}^k, y) \ge 0.
\]
In particular, as
\[
0 \le E_0(x_{n_k}^k, y) \le S_0(0, y) + 1 \qquad \forall \, k \ge 1,
\]
the superlinearity implies the sequence $ \{x_{n_k}^k\} $ is at a bounded distance from $ y $, and therefore admits an accumulation point. Passing to a subsequence if necessary, we then assume that $ x_{n_k}^k \to \bar x $. We have for all~$ k $
\[
S_0(0,y) + \frac{1}{k} \ge S_0(0, x_{n_k}^k) + E_0(x_{n_k}^k, y),
\]
which thanks to the continuity yields 
\[
S_0(0,y) \ge S_0(0, \bar x) + E_0(\bar x, y).
\]

We next show  that the minimum in 
\[
 S_\lambda(0,y) = \min_{x\in\mathbb{R}} \big[ S_\lambda(0,x) + E_\lambda(x,y) - \bar E_\lambda \big],
\]
is attained at some $x$ which satisfies $|y-x| \leq R$ for some $R>0$ independent of $y$. 
The infimum is realized at some $x$ thanks to the superlinearity of $E_\lambda$.
Using item \ref{item:periodicManePotential_1bis}, one obtains
\[
S_\lambda(x,y) \leq (c_1(K)+|\lambda|)(|y-x|+1), \quad \forall \,x,y \in \mathbb{R}.
\]
On the one hand,
\[
S_\lambda(0,y) \leq S_\lambda(0,x) + S_\lambda(x,y) \leq S_\lambda(0,x) +  (c_1(K)+|\lambda|)(|y-x|+1).
\]
On the other hand, thanks to the superlinearity of $E_\lambda$, there exists a constant $c_2(K,\lambda)$ such that
\[
E_\lambda(x,y) -\bar E_\lambda \geq (c_1(K)+|\lambda| +1)|y-x|-c_2(K,\lambda), \quad \forall \,x,y \in \mathbb{R}.
\]
Using the two previous estimates and $S_\lambda(0,y) = S_\lambda(0,x) + E_\lambda(x,y)-\bar E_\lambda$, one gets
\[
|y-x| \leq c_1(K)+|\lambda|+c_2(K,\lambda) := R.
\]
The Lipschitz constant of $y \mapsto S_\lambda(0,y)$ depends on the Lipschitz constant of $y \mapsto E_\lambda(x,y)$ uniformly on $|y-x | \leq R$.

\medskip

\noindent {\it Item \ref{item:periodicManePotential_7}.} Observe that by  item \ref{item:periodicManePotential_3}, a Ma\~n\'e calibrated configuration for $E_\lambda$ is also Ma\~n\'e calibrated for $E_0$. So without lost of generality, we can assume $\lambda=0$. 

{\it Part 1.} We show that $(x_k,x_{k+1})$ cannot contain an integer. By contradiction, if $n \in (x_k,x_{k+1})$, the twist property (see lemma 22 of \cite{GaribaldiPetiteThieullen2017}) implies
\begin{align*}
S_0(x_k,x_{k+1}) &= S_0(x_k,x_{k+1}) + E_0(n,n), \\
&= E_0(x_k,x_{k+1}) + E_0(n,n), \\
&> E_0(x_k,n) + E_0(n,x_{k+1}) \geq S_0(x_k,x_{k+1}),
\end{align*}
and we reach an absurd.

{\it Part 2.} We show there cannot exist an integer $p\in {\mathbb Z}$ such that for some index $k\in \mathbb Z $  and some integer $l\ge 0$,
\[
x_{k-1} < x_k = x_{k+1} = \cdots = x_{k+l} < x_{k+l+1}, \quad x_k =p \in \mathbb{Z}.
\] 
The other case $x_{k-1} > x_k > x_{k+l+1}$ and $x_k \in \mathbb{Z}$ is done similarly. By contradiction, on the one hand, the function 
\[
x \mapsto E_0(x_{k-1},x, \cdots,x,x_{k+l+1}) \qquad \text{(}x \text{ repeated $l+1$ times)}
\] 
reaches its minimum at $x=x_k$ and for small $\epsilon>0$ one has
\[
E_0(x_{k-1},x_k+\epsilon, \cdots,x_{k+l}+\epsilon,x_{k+l+1}) = S_0(x_{k-1},x_{k+l+1}) + O(\epsilon^2).
\]
On the other hand, the  twist property implies there exists $\alpha>0$ (independent of $\epsilon$) such that
\begin{align*}
E_0(x_{k-1},x_k+\epsilon) &= E_0(x_{k-1},x_k+\epsilon) + E_0(x_k,x_k) \\
&\geq E_0(x_{k-1},x_k) + E_0(x_k,x_k+\epsilon) + \alpha \epsilon.
\end{align*}
Thus 
\begin{align*}
S_0(x_{k-1},x_{k+l+1}) &+ O(\epsilon^2) \geq \\
&\geq E_0(x_{k-1},x_k,x_k+\epsilon, \cdots,x_{k+l}+\epsilon,x_{k+l+1}) +\alpha\epsilon \\
&\geq S_0(x_{k-1},x_{k+l+1}) + \alpha \epsilon,
\end{align*}
and we obtain a contradiction.

The two previous parts show that $(x_k)_{k\in\mathbb{Z}}$ cannot overlap an interval of the form $[n,n+1]$, $n\in\mathbb{Z}$.
\end{proof}

As suggested by one of the referees, the way order emerges and the relationship with the non-degeneracy condition may be illustrated through the study of one-parameter periodic families. 
Proposition~\ref{Proposition:ComplementPreferredOrdering} summarizes this study and we conclude this section with its proof.
Recall that $ E_0 $ is a given periodic interaction model and $ E_\lambda(x, y) = E_0(x, y) - \lambda(y-x) $ for $ \lambda \in \mathbb R $.

\begin{proof}[Proof of proposition~\ref{Proposition:ComplementPreferredOrdering}.]

\

\medskip
\noindent {\it Item \ref{Item:ComplementPreferredOrdering_1}.} 
Clearly, $ \inf_{x} E_\lambda(x,x) = \inf_{x} E_0(x,x) $.
Assume
\begin{gather*}
\lambda >  \inf_{x \in\mathbb{R}} E_0(x,x+1) - \inf_{x \in \mathbb{R}} E_0(x,x).
\end{gather*}
Then by taking configurations of the form $(x,x+1, \ldots, x+n)$   where $x \in \argmin_{x} E_0(x,x+1)$, one obtains
\begin{gather*}
\inf_{x_0, \ldots,x_n} E_\lambda(x_0, \ldots, x_n) \leq \sum_{k=1}^n E_\lambda(x + k-1,x+k) = n \big( E_0(x,x+1) - \lambda), \\
\bar E_\lambda \leq  \inf_{x \in\mathbb{R}} E_0(x,x+1) - \lambda  < \inf_{x \in \mathbb{R}} E_0(x,x) = \inf_{x \in \mathbb{R}} E_\lambda(x,x).
\end{gather*}
We have proved that $E_\lambda$ is non-degenerate. Suppose now  
\begin{gather*}
\lambda < \inf_{x \in \mathbb{R}} E_0(x,x) - \inf_{x \in \mathbb{R}} E_0(x+1,x).
\end{gather*}
Then by taking configurations of the form $(x+n, \ldots,x+1,x)$ where $x \in \argmin_{x} E_0(x+1,x)$, one has
\begin{gather*}
\inf_{x_0, \ldots,x_n} E_\lambda(x_0, \ldots, x_n) \leq \sum_{k=1}^n E_\lambda(x+k,x+k-1) = n \big(  E_0(x+1,x)  + \lambda \big), \\
\bar E_\lambda \leq \inf_{x \in \mathbb{R}} E_0(x+1,x)  + \lambda <  \inf_{x \in \mathbb{R}} E_0(x,x) = \inf_{x \in \mathbb{R}} E_\lambda(x,x).
\end{gather*}
We have again showed that  $E_\lambda$ is non-degenerate.

\medskip
\noindent {\it Item \ref{Item:ComplementPreferredOrdering_2}.} Let
\begin{gather*}
\lambda > \inf_{x \in \mathbb{R}}E_0(x,x+1) - \inf_{x,y \in \mathbb{R}} E_0(x,y).
\end{gather*}
Suppose by contradiction there are fundamental configurations $(x_0^n, \ldots, x_n^n)$ of arbitrarily large size $ n $ for $E_\lambda$ that are decreasing. Then, one has
\begin{gather*}
E_\lambda(x_0^n, \ldots, x_0^n) \geq n \inf_{x,y \in \mathbb{R}}E_0(x,y) + \lambda (x_0^n - x_n^n) \geq n  \inf_{x,y\in \mathbb{R}}E_0(x,y), \\
\bar E_\lambda \geq \inf_{x,y \in \mathbb{R}}E_0(x,y).
\end{gather*}
However, the first part shows that
\begin{gather*}
\bar E_\lambda \leq \inf_{x \in\mathbb{R}} E_0(x,x+1) - \lambda.
\end{gather*}
We thus obtain a contradiction. Therefore, a preferred ordering exists and is the increasing one.

\medskip
\noindent {\it Item \ref{Item:ComplementPreferredOrdering_3}.} We change the ordering by taking the new interaction model
\begin{gather*}
\tilde E_\lambda(x,y) := E_\lambda(-x,-y) = \tilde E_0(x,y) -(-\lambda)(y-x).
\end{gather*}

\medskip

\noindent {\it Item \ref{Item:ComplementPreferredOrdering_4}.} 
Note that by its very definition the function
\[
\lambda \mapsto \bar E_\lambda = \lim_{n\to +\infty}\frac{1}{n}\inf_{x_0,\ldots, x_n}[E_0(x_0,\ldots,x_n) - \lambda(x_n-x_0)] 
\] 
is concave. 
Therefore, the parameters $ \lambda $ for which $ E_\lambda $ is degenerated correspond to the convex and closed set $D := \{\lambda \in \mathbb R :  \bar E_\lambda = \inf_x E_0(x,x) \}$, which is an interval eventually empty.  
According to proposition \ref{Proposition:FundamentalConfiguration}, a parameter $\lambda$ whose potential $E_{\lambda}$ is non-degenerated belongs  to  either $\Lambda_{+}$ or $\Lambda_{-}$. Hence the real line  is partitioned into $\mathbb R= \Lambda_{-} \sqcup D \sqcup \Lambda_{+}$. Moreover, the previous items show that both sets $\Lambda_+ \cap \mathbb R_{+}$ and $\Lambda_- \cap \mathbb R_{-}$ are not empty.

We claim that:
\begin{itemize}
\item[(a)] for all $ \lambda \in \Lambda_{+} $, there is $ \epsilon>0 $ such that $ (\lambda-\epsilon, +\infty) \subset \Lambda_{+} $;

\item[(b)] for all $ \lambda \in \Lambda_{-} $, there is $ \epsilon>0 $ such that $ (-\infty, \lambda+\epsilon) \subset \Lambda_{-} $;
\end{itemize}

From claims (a) and  (b), we conclude the sets  $\Lambda_{+}$ and $\Lambda_{-}$ are both open and stable by positive and negative rays, respectively. It follows, by the connexity of $\mathbb R$, the set $D$ cannot be empty, and $D= [\lambda_{-}, \lambda_{+}]$ where $\lambda_{-} = \sup \Lambda_{-}$ and $\lambda_{+} = \inf \Lambda_{+}$.     
These facts prove item \ref{Item:ComplementPreferredOrdering_4}.

We only show the first claim, since the reasoning to obtain the other is analogous. 
Let $ \lambda \in \Lambda_{+} $. There exists $ \epsilon > 0 $ such that for any parameter $ \mu \in (\lambda-\epsilon, +\infty) $, the interaction $ E_\mu $ is non-degenerated.
As a matter of fact, if $ \big(z_0^n(\lambda), \ldots, z_n^n(\lambda)\big) $ is an increasing fundamental configuration for the interaction $ E_\lambda $, recalling that 
\[
E_0\big(z_0^n(\lambda), \ldots, z_n^n(\lambda)\big) - \lambda\big(z_n^n(\lambda) - z_0^n(\lambda)\big) = 
E_\lambda\big(z_0^n(\lambda), \ldots, z_n^n(\lambda)\big)  \le n \bar E_\lambda, 
\]
one has for $ \mu \in \mathbb R$,
\begin{align*}\label{Equation:ComplementPreferredOrdering_1}
 \bar E_\mu & \le \lim_{n\to +\infty} \frac{1}{n} E_\mu\big(z_0^n(\lambda), \ldots, z_n^n(\lambda)\big) \\
 & =  \lim_{n\to +\infty} \frac{1}{n} \big[ E_\lambda\big(z_0^n(\lambda), \ldots, z_n^n(\lambda)\big) 
 + (\lambda - \mu)\big(z_n^n(\lambda) - z_0^n(\lambda)\big) \big] \\
 & \le \bar E_\lambda + (\lambda - \mu)^{+} R(\lambda),
 \end{align*} 
where $ R(\lambda) > 0 $ is the uniform interdistance established by lemma~\ref{Lemma:AprioriEstimateFundamentalConfiguration}. 
It follows that the potential $E_{\mu}$ is non-degenerated for any  $\mu > \lambda - \epsilon$ where $ 0 < \epsilon \le  \left(\inf_{x} E_{0}(x,x)  - \bar E_{\lambda}\right)/R(\lambda)$.

It remains to show the preferred ordering for $ E_\mu $ is increasing.
For contradiction, let us assume the preferred ordering for $E_{\mu}$ is decreasing. 
Let $(z_{0}^{n}(\mu),\ldots, z_{n}^{n}(\mu))$  be a decreasing fundamental configuration for $E_{\mu}$.
Note that 
\begin{align*}
S_{\lambda}(z_{0}^{n}(\mu), z_{n}^{n}(\mu)) + & n \bar E_{\lambda} + (\lambda -\mu)(z_n^{n}(\mu) - z_{0}^{n}(\mu)) \le \\
& \le  E_{\lambda}(z_{0}^{n}(\mu),\ldots,z_{n}^{n}(\mu)) + (\lambda -\mu)(z_n^{n}(\mu) - z_{0}^{n}(\mu)) \\
& = E_{\mu}(z_{0}^{n}(\mu),\ldots,z_{n}^{n}(\mu)) \le n \bar E_{\mu}.
\end{align*}
Thanks to proposition~\ref{Proposition:FundamentalConfiguration}, there are $  \gamma(\lambda) > 0 $ and $\delta(\gamma) \ge 0 $ such that
\[
 (\gamma(\lambda) + \mu - \lambda) |z_n^{n}(\mu) - z_{0}^{n}(\mu)| - \delta(\lambda) \le n (\bar E_{\mu} - \bar E_{\lambda}) 
 \le n  (\lambda - \mu)^{+} R(\lambda).
\] 
This inequality does not hold for $ n $ large enough whenever $ \mu > \lambda $.
To obtain a contradiction also in the case $ \mu < \lambda $ with $ \mu $ near to $ \lambda $, 
we will need to look closely at the rotational properties of the fundamental configurations.
More precisely, up to consider a smaller $\epsilon$, we may assume that $\gamma(\lambda) / 2 \ge \epsilon$. 
For $ \mu \ge \lambda - \epsilon $, by passing to the limit as $ n $ tends to infinity in the above inequality,  we get
\[
\frac{\gamma(\lambda)}{2}  \phi(\mu) \le (\gamma(\lambda) - \epsilon) \phi(\mu) \le \epsilon R(\lambda).
\]
where $ \phi(\mu) > 0 $ is the uniform lower bound for a rotation number of any fundamental configuration for the
non-degenerated interaction $ E_\mu $ as established in lemma \ref{Lemma:AprioriEstimateFundamentalConfiguration}.
If such an inequality were valid, $ \lim_{\mu \to \lambda^{-}} \phi(\mu)~=~0 $ would occur.
However, we actually have $ \liminf_{\mu \to \lambda} \phi(\mu) > 0 $, from which we conclude our proof.
To see that, recall from~\eqref{definicao de eta}, lemma~\ref{lematecnico}, and lemma \ref{Lemma:AprioriEstimateFundamentalConfiguration}, that $\phi(\mu)$ is computed by the formula  
\begin{gather}
\phi(\mu) = \frac{\eta(\mu)^2}{\bar E_\mu - \inf_{x,y \in \mathbb{R}}E_\mu(x,y) +\eta(\mu)}, \label{Equation:ComplementPreferredOrdering_4}
\end{gather}
where $\eta(\mu)>0$ is any number fulfilling
\begin{gather}\label{Equation:ComplementPreferredOrdering_3}
\forall\, x, y \in\mathbb{R}, \, |y-x| < \eta(\mu) \ \Rightarrow \  E_\mu(x,y) -\bar E_\mu > \eta(\mu), 
\end{gather}
We first show that \eqref{Equation:ComplementPreferredOrdering_3} may be satisfied with an independent $ \eta_* := \min \big\{ R(\lambda), \frac{\eta(\lambda)}{2} \big\}$. Indeed, if $|x-y| < \eta_* $ and $|\mu-\lambda| \leq \frac{\eta(\lambda)}{4R(\lambda)}$,  then
\begin{align*}
E_\mu(x,y) - \bar E_\mu &\geq E_\lambda(x,y) - \bar E_\lambda - |\mu-\lambda| R(\lambda) -|\mu-\lambda||x-y| \\
&> \eta(\lambda) -2|\mu-\lambda| R(\lambda)  \geq \frac{\eta(\lambda)}{2} \geq \eta_*.
\end{align*}
We now bound from below $\phi(\mu)$ in \eqref{Equation:ComplementPreferredOrdering_4}.
The superlinearity of $E_{\lambda}$ implies that, given $ \epsilon_* := \frac{\eta(\lambda)}{4R(\lambda)} > 0 $, there is a constant $B_{\lambda} > 0$ such that
$$ \forall x,y \in \mathbb R, \, E_{\lambda} (x,y) -\bar E_{\lambda} \ge \epsilon_* |x-y| - B_{\lambda}.$$  
In particular, for every $|\mu-\lambda|\leq \epsilon_* $, one obtains for all $ x, y \in \mathbb{R} $,
\[
 E_\mu(x,y) \geq E_\lambda(x,y) - \epsilon_* |x-y| \geq \bar E_\lambda  - B_\lambda \geq \bar E_\mu  - \epsilon_* R(\lambda)  - B_\lambda, 
\]
so that $  \bar E_\mu - \inf_{x,y \in \mathbb{R}}E_\mu(x,y) \leq  \epsilon_* R(\lambda) + B_\lambda. $
Therefore, we have proved that
\begin{gather*}
\phi(\mu) \geq \frac{\eta_*^2}{\epsilon_* R (\lambda) + B_\lambda+ \eta_*} \qquad \text{if } |\mu - \lambda| \le \epsilon_*  = \frac{\eta(\lambda)}{4R(\lambda)}.
\end{gather*}
(If precision is required, the reader may easily check from the previous arguments that
\[
\epsilon := \min \left\{  \frac{\inf_{x} E_{0}(x,x)  - \bar E_{\lambda}}{R(\lambda)}, \frac{\gamma(\lambda)}{2}, \frac{\eta(\lambda)}{4R(\lambda)}, \frac{\gamma(\lambda)\eta_*^2}{4(\epsilon_* R (\lambda) + B_\lambda+ \eta_*)R(\lambda)} \right\} 
\]
could be proposed as an explicit definition to be considered in claim (a).)
\end{proof}

\appendix
\appendixpage
\addappheadtotoc

\section{Non-degenerate almost crystalline models}\label{Nao degenerescencia}

Our aim is to provide examples of pattern equivariant interactions that fulfill hypothesis
$ \inf_x E(x,x) > \bar E $.  For this purpose, we focus on one-dimensional quasicrystals studied in \cite{GambaudoGuiraudPetite2006}. More concretely, given $ \alpha \in (0,1/2) $ and $ \rho > 0 $, we will consider a  quasi-periodic set $ \omega(\alpha, \rho) = \omega = \{q_k\}_{k \in \mathbb Z} \subset {\mathbb R}$ defined by
\begin{gather*}
\forall\,  k \in \mathbb{Z}, \quad q_k = k + (\rho - 1) \lfloor k \alpha \rfloor.
\end{gather*}
Note that $q_k-q_{k-1} = 1+(\rho-1) a_k$, where 
\begin{gather*}
a_k = \lfloor k \alpha \rfloor - \lfloor (k-1) \alpha \rfloor.
\end{gather*}
Since $\alpha < 1 $, we have  $ (a_k)_{k \in\mathbb{Z}} \in \{0,1\}^{\mathbb Z}$ and $q_k-q_{k-1}$ equals $1$ or $\rho$ whenever $a_k$ equals $ 0 $ or $ 1 $, respectively.
As $ \alpha < 1/2 $, $ a_k$ and $a_{k+1}$ cannot be both equal to $1$. In fact,  $(a_k)_{k \in\mathbb{Z}}$ is  periodic if $\alpha$ is rational, and is called a Sturmian sequence  when  $\alpha$  is irrational~\cite{MorseHedlund1940}.

Since $\omega$  is uniformly discrete, it satisfies the \eqref{item:FLC} property. 
The fact that $ \omega $ obeys \eqref{item:Repet} can be assured essentially because irrational rotations on the circle  are minimal. 
Moreover, since  these rotations are also uniquely ergodic, they satisfy  an additional property: each type of pattern occurs with a positive density (see \cite{Fogg} for a modern presentation). 
In precise terms, they  fulfill
 
\begin{description}
 \item[uniform pattern distribution\label{item:UnifPatDistrib}] for any pattern $\texttt{P}$,  there is a positive number $\nu(\texttt{P})>0$ such that for any  nested sequence of  bounded open intervals $I_1 \subset I_2 \subset \ldots  \subset I_k \subset  \ldots$ with unbounded sequence of  lengths $(|I_k|)_k$, the quantity
\begin{equation*}
\frac{N_\texttt{P} (I_k)}{|I_k|} 
\end{equation*}
converges to $\nu(\texttt{P})$, where  $N_\texttt{P} (I)$ denotes the number of  patterns in $\omega\cap I$ equivalent to $\texttt{P}$. 
\end{description}

The uniform pattern distribution is a consequence of the uniform limits
\begin{equation}\label{frequencia simbolica}
\mu_{\mathbf b} = \lim_{n-m \to+\infty} \frac{\#_{\texttt b}a_{[m,n)}}{n-m},
\end{equation}
where $\#_{\texttt b}a_{[m,n)}$ is the number of times the subword $\texttt b = b_1 \cdots b_r \in \{0,1\}^r $, $r\geq1$, appears in the word $a_{[m,n)} = a_m a_{m+1} \cdots a_{n-1}$ for  $n>m$, $m,n \in\mathbb{Z}$.

A discrete set that satisfies  \eqref{item:FLC}, \eqref{item:Repet} and  \eqref{item:UnifPatDistrib} is said to be a \textit{quasicrystal}. See \cite{LagariasPleasant2003} for details.  

Let us explain how to define a $ C^2 $ interaction model $ E(x,y) $ that is twist and pattern equivariant with respect to $ \omega $. Regarding the interaction with the substrate, we introduce a pattern equivariant $ V $ obtained by translating two functions $ V_1 $ and $ V_\rho $
according to the patterns of two consecutive points of $ \omega $. Concretely, for $ V_1 : [0, 1] \to \mathbb R $
and $ V_\rho : [0, \rho] \to \mathbb R $ defined as $ V_1(x) = V_\rho(\rho x) / \rho^2 = 1 - \cos 2 \pi x $, consider for every $k \in\mathbb{Z}$ and $x \in [q_k,q_{k+1})$,
\[
V(x) = \left\{ \begin{array}{ll}
V_1(x-q_k), & \text{if $q_{k+1}-q_k=1$}, \\
V_\rho(x-q_k), & \text{if $q_{k+1}-q_k=\rho$}.
\end{array} \right.
\]
See the figure below for the graph of an example. Let the interaction be defined by
\begin{equation*}
E_\lambda (x,y) := \frac{1}{2} |y - x - \lambda|^2+KV(x),   \qquad \lambda \in \mathbb{R}, \ K \in \mathbb{R}_+. 
\end{equation*}


\begin{figure}[h]\label{figure}
\begin{center}
\begin{tikzpicture}[scale=0.7]
\draw[line width=1pt,->](-5,0)--(7,0); \draw(7.1,-0.2)node{$_x$};
\draw[ line width=0.5pt , ->](0,-0.5)--(0,2.5); 
\def\rhoscale{sqrt(3)}

\potentielL{-2-2*\rhoscale}{1} ;
\potentielL{-1-2*\rhoscale}{\rhoscale} ;
\potentielL{-1-\rhoscale}{1} ;
\potentielL{-\rhoscale}{\rhoscale} ;
\potentielL{0}{1} ;
\potentielL{1}{1} ;
\potentielL{2}{\rhoscale} ;
\potentielL{2+\rhoscale}{1} ;
\potentielL{3+\rhoscale}{\rhoscale} ;
\potentielL{3+2*\rhoscale}{1} ;

\foreach \t in {-3,...,5}{
	\draw({\t+(\rhoscale-1)*floor(\t/sqrt(5))},0)node{$\times$};
	\draw({\t+(\rhoscale-1)*floor(\t/sqrt(5))},-0.3)node{$q_{\t}$};
	}
\draw[<->] (0,-0.6)--(1,-0.6) node[midway,below] {$1$};
\draw[<->] ({2},-0.6)--({2+\rhoscale},-0.6) node[midway,below] {$\rho$}; 
  \end{tikzpicture}
  \end{center}
\caption{Graph of the potential $V$ for $\alpha=1/\sqrt{5}$ and $\rho = \sqrt{3}$}\label{fig:GraphePotentiel}
 \end{figure}


The next proposition  provides examples of  non-degenerate interactions for which theorem~\ref{theo:mainThm} does apply. 
Remark that for $\rho = 1$ the quasicrystal is nothing but the lattice $\mathbb Z$. 
For  $\rho \neq 1$, note that  the quasicrystal $\omega(\alpha, \rho)$ is linearly repetitive when the coefficients of the continued fraction of $\alpha$ are bounded \cite{Alisteetal} ({\it e.g.} when $\alpha$ is quadratic by the Lagrange's continued fraction theorem). It is known that the set of $\alpha$ satisfying this condition is a Baire meager set, of zero Lebesgue measure but with Hausdorff dimension $1$.  

\begin{proposition}\label{Exemplo hipotese}
For $ 0 < \rho < 1 + \alpha^{-1/2} $, denote $c_1(\lambda) := \frac{ \lambda^2}{8} \frac{1-\alpha(1-\rho)^2}{1+\alpha(\rho^2-1)}$. Then for every $K \in (0, c_1(\lambda))$, 
$$ \inf_{x \in \mathbb R} E_\lambda(x, x) > \bar E_\lambda. $$
\end{proposition}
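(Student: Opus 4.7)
The plan is to bound $\inf_x E_\lambda(x,x)$ from below and $\bar E_\lambda$ from above, and then close the gap with an algebraic inequality in $\alpha,\rho$.

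First, observe that $E_\lambda(x,x) = \lambda^2/2 + KV(x)$, and since $V\ge 0$ with $V(q_k)=0$ for every $q_k\in\omega$, one immediately obtains $\inf_{x\in\mathbb{R}} E_\lambda(x,x) = \lambda^2/2$, the infimum being attained on the quasicrystal.

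Second, to upper-bound $\bar E_\lambda$ I will test on translated arithmetic chains $x_k = s+k\lambda$. Along such a chain the kinetic term telescopes to zero, so $E_\lambda(x_k,x_{k+1}) = K V(s+k\lambda)$. By the uniform pattern distribution of $\omega(\alpha,\rho)$, the continuous function $V$ has a well-defined spatial mean
\[
\tilde V \;:=\; \frac{(1-\alpha)\int_0^1 V_1 + \alpha\int_0^\rho V_\rho}{B} \;=\; \frac{1-\alpha+\alpha\rho^3}{B}.
\]
A Fubini-style averaging of $\frac{1}{N}\sum_{k=0}^{N-1}V(s+k\lambda)$ over $s\in[0,T]$ with $T\to\infty$ produces, for each $N$, a specific $s_N$ at which the partial Birkhoff mean does not exceed $\tilde V + \varepsilon_N$ with $\varepsilon_N\to 0$. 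Plugging these test chains into the definition $\bar E_\lambda = \lim_N \frac{1}{N}\inf E_\lambda(x_0,\dots,x_N)$ yields $\bar E_\lambda \le K\tilde V$.

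Third, the conditions $K < c_1(\lambda)$ and $\bar E_\lambda < \lambda^2/2$ align provided $c_1(\lambda) \le \lambda^2/(2\tilde V)$, which reduces to the algebraic inequality $(2B-A)(1-\alpha+\alpha\rho^3) \le 4AB$. Setting $S := 1-\alpha+\alpha\rho^3 = A+\alpha\rho^2(\rho-1)$, the difference factors as
\[
4AB - (2B-A)S \;=\; A(2B+A) \,-\, (2B-A)\,\alpha\rho^2(\rho-1).
\]
For $\rho\le 1$ the subtracted term is non-positive and the inequality is immediate. For $\rho>1$, one expands systematically in powers of $\alpha$ and $\rho$, using $2B-A = 1-\alpha(\rho-1)^2 \in (0,1]$ (which is precisely what the hypothesis $\rho<1+\alpha^{-1/2}$ encodes), to verify non-negativity across the whole admissible region.

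The main obstacle is this last algebraic step. Near the boundary $\rho \to 1+\alpha^{-1/2}$, the factor $(2B-A)$ vanishes while $\alpha\rho^2(\rho-1)$ remains large, and the required slack does not come from crude bounds but must be extracted from the $\alpha^2\rho^5$ contribution appearing in the expansion of $A(2B+A)$ together with the cancellation $(2B-A)\alpha\rho^2(\rho-1) = \alpha\rho^2(\rho-1) - \alpha^2\rho^2(\rho-1)^3$. Tracking these competing orders carefully is where the bulk of the work will lie.
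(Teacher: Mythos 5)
Your first two steps are sound: indeed $\inf_x E_\lambda(x,x)=\lambda^2/2$, and the arithmetic test chains $x_k=s+k\lambda$ with phase averaging do give $\bar E_\lambda\le K\tilde V$ with $\tilde V=\frac{1-\alpha+\alpha\rho^3}{1-\alpha+\alpha\rho}$ (your $S/B$, with $A=1+\alpha(\rho^2-1)$, $B=1+\alpha(\rho-1)$, which you never actually define). The gap is the final reduction: to cover every $K<c_1(\lambda)$ you need $c_1(\lambda)\,\tilde V\le\lambda^2/2$, i.e. $(2B-A)S\le 4AB$, and this inequality is not merely delicate near the boundary --- it is false on an open part of the admissible region, so no amount of tracking competing orders can close the argument. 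Write $u=\alpha(\rho-1)^2\in(0,1)$ and let $\rho\to\infty$ with $u$ fixed (so $\alpha\sim u/\rho^2\to0$, which is admissible): then $A\to 1+u$, $B\to 1$, $2B-A\to 1-u$, while $S\sim\alpha\rho^3\to\infty$, so $(2B-A)S\to\infty$ while $4AB$ stays bounded. Concretely, $\alpha=5\times10^{-5}$, $\rho=101$ is allowed (here $1+\alpha^{-1/2}\approx 142$) and gives $(2B-A)S\approx 26.3$ versus $4AB\approx 6.1$; at these parameters your bound only yields non-degeneracy for $K<\lambda^2B/(2S)\approx \lambda^2/105$, far below $c_1(\lambda)\approx 0.041\,\lambda^2$, so the proposition is not reached for $K$ in between.

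The reason your route caps out is that chains of spacing $\lambda$ sample $V$ with weight proportional to length, so the long $\rho$-cells, where the potential has amplitude $\rho^2$, dominate and $\tilde V\sim\alpha\rho^3/B$ can dwarf $A$. The paper's construction avoids exactly this: it places the same number $2^\ell$ of equally spaced points in every cell (spacing $2^{-\ell}$ in unit cells, $\rho\,2^{-\ell}$ in $\rho$-cells), with $\ell$ tuned so that $2^{-\ell}<\lambda\le 2^{-\ell+1}$. The cosine sums then vanish, so the potential contributes its per-cell mean with equal weight per point, i.e. $K(1-\alpha+\alpha\rho^2)=KA$ per bond instead of $K\tilde V$, while the kinetic term costs $\lambda^2/2$ plus a correction of order $\lambda^2$ per bond whose weighted sign is favorable precisely because $1-\alpha(\rho-1)^2>0$, i.e. $\rho<1+\alpha^{-1/2}$. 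Balancing those two effects is what produces the factor $\frac{1-\alpha(1-\rho)^2}{1+\alpha(\rho^2-1)}$ in $c_1(\lambda)$; adapting the sampling to the cell lengths is essential, not optional. As written, your argument proves the weaker statement with $c_1(\lambda)$ replaced by $\lambda^2B/(2S)$, and the full statement only where $(2B-A)S\le 4AB$ (in particular for $\rho\le1$), but not the proposition as stated.
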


In the periodic case, that is, when $\rho=1$, the proposition gives a simple estimate between $\lambda$ and $K$ for the non-degeneracy. For $K < \frac{\lambda^2}{8}$, the periodic Frenkel-Kontorova model is non-degenerate.

\begin{proof}
Obviously $ \inf_x E_\lambda(x,x) = \lambda^2/2 $. 
Since
\begin{gather*}
\bar E_\lambda = \lim_{n\to\infty} \inf_{x_0, \ldots, x_n}  \frac{1}{n} E_\lambda(x_0, \ldots, x_n), 
\end{gather*}
the strategy consists in finding a long configuration by concatenating short configurations located  in each cell $[q_k,q_{k+1}]$ so that the mean of the successive interactions is strictly smaller than $ \lambda^2/2 $.
Let $ \ell \ge 1 $. 
For cells $(q_k,q_{k+1})$ of length $q_{k+1}-q_k = 1$, we use a translate $q_k+\mathbf y^\ell$ of the subconfiguration $ \mathbf y^\ell$ in $(0,1)$
\begin{gather*}
\mathbf{y}^\ell = \big(y_0^\ell, \ldots, y_{2^\ell-1}^\ell\big) := \Big( \frac{1}{2^{\ell+1}} + \frac{j}{2^\ell} \Big)_{j=0}^{2^\ell-1}. 
\end{gather*}
For cells of length $q_{k+1}-q_k=\rho$, we use a translate of the homothetic subconfiguration $ \rho \mathbf{y}^\ell = (\rho y_0^\ell, \ldots, \rho y_{2^\ell-1}^\ell) $.
More precisely, we define for every $k\geq1$ a configuration $\mathbf x_k^\ell$ of $k 2^\ell + 1$ points obtained by concatenating $k$ translates of some homothetic  $\mathbf y^\ell$ and by adding an extra translate of $y_0^\ell$ 
\begin{multline*}
\mathbf x_k^\ell := \big( (q_1-q_0)\mathbf y^\ell,  q_1 + (q_2-q_1)\mathbf y^\ell, \ldots,  q_{k-1}+(q_k - q_{k-1})\mathbf y^\ell,  \\
q_k + (q_{k+1}-q_k)y_0^\ell \big) \in \mathbb R^{k 2^\ell + 1}.
\end{multline*}
Note that $\mathbf x_k^\ell$ has $2^\ell$ entries belonging to each interval $(q_0,q_1), \ldots,  (q_{k-1},q_k)$, and an extra point of $(q_k,q_{k+1})$. 
The total energy $E_\lambda(\mathbf x_k^\ell)$ consists in two terms. 
The term coming from the external potential is a sum of partial sums,  of the form either $\sum_{x \in\mathbf{y}^\ell + q} V_1(x) $ or  $ \sum_{x \in\rho \mathbf{y}^\ell + q} V_\rho(x) $. The main observation is that in both cases each partial sum boils down to 
$$ c\big(2^\ell - \sum_{j=0}^{2^\ell-1} \cos(2 \pi y_j^\ell + d)\big), \qquad \text{ with $ c, d $ constants}, \, c \in \{ 1, \rho^2\}. $$
Since $\sum_{j=0}^{2^\ell-1} \cos (2\pi y_j^\ell) = \sum_{j=0}^{2^\ell-1} \sin (2\pi y_j^\ell) = 0$, we thus have
$$ \sum_{x \text{ entry into }  \mathbf{y}^\ell + q} V_1(x) =2^\ell \quad \text{and}  \sum_{x \text{ entry into }  \rho\mathbf{y}^\ell + q} V_\rho(x) =  \rho^2 2^\ell.  $$
The term coming from the mutual interaction between neighboring atoms can be calculated according to the cases of entries belonging to possible cells, and presents three values. In fact, let  $ x < y $ be consecutive entries  into $ \mathbf{x}_k^\ell $. The energy $ \frac{1}{2} |y - x - \lambda|^2 $ 
takes one of the following values:
\begin{align*}
\frac{\lambda^2}2+ \frac{\rho}{2^\ell} \Big( \frac{\rho}{2^{\ell+1}} - \lambda \Big) & \quad \textrm{ if } x,y \in [q_{k-1},q_k], \text{ with } a_k =  1, \\
\frac{\lambda^2}2+ \frac{1}{2^\ell} \Big( \frac{1}{2^{\ell+1}} - \lambda \Big) 
& \quad  \textrm{ if } x,y \in [q_{k-1},q_{k+1}], \text{  with }  a_k = a_{k+1} =0, \\ 
\frac{\lambda^2}2+ \frac{\rho +1}{2^{\ell+1}} \Big( \frac{\rho+1}{2^{\ell+2}} - \lambda \Big)   & \quad \textrm{ if } x \in [q_{k-1},q_k] \text{ and }  y  \in [q_k,q_{k+1}], \\
& \quad \text{ with }  a_k \neq a_{k+1} 
\end{align*}

For $k\geq1$, let $\#_{i}^k$ denote the number of times the subword  $i \in \{0, 1\}$ appears in the word $a_1 \cdots a_{k+1}$. 
Only three types of subwords of length 2 appear in $ (a_k)_{k \in \mathbb Z} $: $00$, $01$ and $10$.  Introduce similarly $\#_{ij}^k$ as the number of times the subword $ij$, $i, j \in \{0,1\}$, appears in the word $a_1 \cdots a_{k+1}$. 
Clearly $\#_{11}^k=0$.  Then
\begin{align*}
 E_\lambda&(\mathbf x_k^\ell)  :=  K 2^\ell (\#_0^k+ \rho^2\#_1^k) + \#_{00}^k 2^\ell \Big[ \frac{\lambda^2}{2} + \frac{1}{2^\ell} \Big( \frac{1}{2^{\ell+1}} -\lambda \Big) \Big] \\
&+ \#_{01}^k (2^\ell-1) \Big[  \frac{\lambda^2}{2} + \frac{1}{2^\ell} \Big( \frac{1}{2^{\ell+1}}-\lambda \Big) \Big] +  \#_{01}^k  \Big[  \frac{\lambda^2}{2} + \frac{\rho+1}{2^{\ell+1}} \Big( \frac{\rho+1}{2^{\ell+2}}-\lambda \Big) \Big] \\
&+ \#_{10}^k (2^\ell-1) \Big[  \frac{\lambda^2}{2} + \frac{\rho}{2^\ell} \Big( \frac{\rho}{2^{\ell+1}} -\lambda \Big) \Big] +  \#_{10}^k  \Big[  \frac{\lambda^2}{2} + \frac{\rho+1}{2^{\ell+1}} \Big( \frac{\rho+1}{2^{\ell+2}} -\lambda \Big) \Big].
\end{align*} 
We divide by $k2^\ell$ and let $k\to+\infty$. From~\eqref{frequencia simbolica}, we have $\#_{0}^k/k \to \mu_{0}$ and a similar result for $ \#_{1}^k$, $\#_{00}^k$, $\#_{01}^k$ and $\#_{10}^k$. 
Since the word $1$ is always preceded and followed by $0$  in $(a_k)_k$, note then that $\mu_1=\mu_{01} = \mu_{10} = (1-\mu_{00})/2$. 
We thus obtain
\begin{align*}
 \lim_{k\to+\infty} \frac{E_\lambda(\mathbf x_k^\ell)}{k2^\ell} = \frac{\lambda^2}{2}+  K(1+&\mu_1(\rho^2-1)) + \frac{1+\mu_{00}}{2}  \frac{1}{2^\ell} \Big( \frac{1}{2^{\ell+1}} -\lambda \Big)  \\ 
& \, +  \frac{1-\mu_{00}}{2}  \frac{\rho}{2^{\ell}} \Big( \frac{\rho}{2^{\ell+1}} -\lambda \Big)  - \frac{1-\mu_{00}}{2^\ell}  \frac{(1-\rho)^2}{2^{2\ell+3}}.
\end{align*}
We want to find a condition on $\lambda>0$ so that $\bar E_\lambda < \frac{\lambda^2}{2}$. Ignoring the nonpositive term, it is enough to choose $\lambda$ so that
\[
2^\ell K(1+\mu_1(\rho^2-1))+ \frac{1+\mu_{00}}{2}  \Big( \frac{1}{2^{\ell+1}} -\lambda \Big) +  \frac{1-\mu_{00}}{2}  \rho \Big( \frac{\rho}{2^{\ell+1}} -\lambda \Big) < 0, 
\]
or equivalently 
\begin{equation*} 
 2^\ell K(1+\mu_1(\rho^2-1)) + \frac{1+\mu_{00}}{2}  \frac{1}{2^{\ell+1}} +  \frac{1-\mu_{00}}{2} \frac{\rho^2}{2^{\ell+1}} < \lambda \Big( \frac{1+\mu_{00}}{2} + \rho \frac{1-\mu_{00}}{2} \Big).
\end{equation*}
Assume $\ell$ has been chosen so that $\frac{1}{2^\ell} < \lambda \leq  \frac{1}{2^{\ell-1}}$. It suffices to rewrite the inequality above with $\frac{1}{2^\ell}$ instead of $\lambda$. Then $K$ must satisfy
\begin{gather*}
2^\ell K(1+\mu_1(\rho^2-1))  <  \frac{1+\mu_{00}}{2} \frac{1}{2^{\ell+1}} + \frac{1-\mu_{00}}{2} \frac{\rho(2-\rho)}{2^{\ell+1}}.
\end{gather*}
As $\lambda \leq \frac{1}{2^{\ell-1}}$, we have $\frac{\lambda^2}{8} \leq \frac{1}{2^{2\ell+1}}$. 
Hence, it is enough to choose $K$ so that
\begin{gather*}
K < \frac{\lambda^2}{8}\Big( \frac{1+\mu_{00}}{2}  +  \frac{1-\mu_{00}}{2} \rho(2-\rho) \Big) \frac{1}{1+\mu_1(\rho^2-1)}.
\end{gather*}
Recalling that $(1-\mu_{00})/2 =\mu_1$, we have
\begin{gather*}
\frac{1+\mu_{00}}{2}  +  \frac{1-\mu_{00}}{2} \rho(2-\rho) = 1-\mu_1 + \mu_1 \rho(2-\rho) = 1-\mu_1(1-\rho)^2.
\end{gather*}
We thus have proved that $\bar E_\lambda < \frac{\lambda^2}{2}$ whenever $K < \frac{\lambda^2}{8}\frac{1-\mu_1(1-\rho)^2}{1+\mu_1(\rho^2-1)}$.
A direct computation \cite{MorseHedlund1940} shows that $\mu_1 = \alpha$ . This proves the proposition.  
\end{proof}

\section{The calibration level of the sublinearity}\label{nivel unico}

Here we show that, within the framework delimited by hypotheses~\ref{assumption:interactionModel} and definition~\ref{invariancia de motivos}, the only calibration level associated with sublinear weak KAM solutions is exactly that given by the ground action $ \bar E $. This result is nothing unexpected. We present the argument, however, for the convenience of the reader. We obviously make use of the Lax-Oleinik operator $ T $ as introduced in~\eqref{definition:LaxOleinikOperator}. Furthermore, by sublinearity we mean the notion described by~\eqref{Equation:StrictlySublinearVariation}. The precise statement is the following one.

\begin{proposition}\label{nivel chave}
Let $ E $ be a weakly twist interaction that is pattern equivariant 
with respect to a quasi-periodic set. Suppose that $ \inf_x E(x,x) > \bar E $.
Let $ u $ be a sublinear function such that $ T[u] = u + c $ for some constant $ c $.
Then $ c = \bar E $.
\end{proposition}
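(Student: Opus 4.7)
The plan is to prove the equality $c=\bar E$ by establishing the two inequalities $c \le \bar E$ and $c \ge \bar E$ separately, in both cases comparing the value $\frac{1}{n}E(x_0,\ldots,x_n)$ along a well-chosen configuration with $c$, and then exploiting sublinearity of $u$ on the one hand and a bound on the diameter of the chosen configuration on the other.

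For the upper bound $c\leq \bar E$, I would test the subaction inequality, which is just one half of $T[u]=u+c$, namely
$$u(x_n)-u(x_0) \leq E(x_0,\ldots,x_n) - nc \qquad \forall\, (x_0,\ldots,x_n),$$
on a sequence of fundamental configurations $(z_0^n,\ldots,z_n^n)\in \Gamma_n(E)$. Here the hypothesis $\inf_x E(x,x)>\bar E$ enters crucially: item~\ref{Item:AprioriEstimateFundamentalConfiguration_1} of lemma~\ref{Lemma:AprioriEstimateFundamentalConfiguration} guarantees uniformly bounded successive jumps, whence $|z_n^n-z_0^n|\leq nR$. For any $\alpha>0$ the sublinearity of $u$ provides $\beta=\beta(\alpha)$ with $|u(z_n^n)-u(z_0^n)|\leq \alpha nR+\beta$, and dividing the subaction inequality by $n$ yields $\frac{1}{n} E(z_0^n,\ldots,z_n^n)\geq c-\alpha R-\beta/n$. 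Since the left-hand side tends to $\bar E$, one obtains $\bar E \geq c-\alpha R$, and sending $\alpha\to 0^+$ concludes this half.

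For the lower bound $c\geq \bar E$, I would build a backward calibrated chain directly from $T[u]=u+c$. The superlinearity of $E$ together with the (at most linear) growth of $u$ ensures that for every $y_0$ the infimum $T[u](y_0)=\inf_x\{u(x)+E(x,y_0)\}$ is attained at some $y_{-1}$; iterating yields a sequence $(y_{-n})_{n\geq 0}$ satisfying
$$u(y_0)-u(y_{-n})= E(y_{-n},\ldots,y_0) - nc, \qquad \forall \, n\geq 0.$$
Adapting step~2 of the proof of lemma~\ref{Lemma:AprioriBoundWeakKAMsolution} (which only uses that $u$ has linear growth with an explicit constant, the equality satisfied by consecutive $y_{-k-1},y_{-k}$, and the superlinearity of $E$), I obtain a uniform bound $|y_{-k-1}-y_{-k}|\leq R'$ with $R'$ depending only on $E$ and $c$. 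Hence $|y_0-y_{-n}|\leq nR'$, and applying sublinearity of $u$ again one gets, for any $\alpha>0$,
$$\inf_{(x_0,\ldots,x_n)} E(x_0,\ldots,x_n) \,\leq\, E(y_{-n},\ldots,y_0) \,=\, u(y_0)-u(y_{-n})+nc \,\leq\, nc+\alpha n R'+\beta.$$
Dividing by $n$, letting $n\to+\infty$ and using item~\ref{item:SimplePropertiesManePotential_1} of remark~\ref{remark:SimplePropertiesManePotential} gives $\bar E\leq c+\alpha R'$; sending $\alpha\to 0^+$ yields $c\geq \bar E$.

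The main technical point is not the end-of-proof computations but the two diameter bounds: in the first half the linear upper bound on the spread of fundamental configurations, which is exactly where the non-degeneracy $\inf_x E(x,x)>\bar E$ is used (through lemma~\ref{Lemma:AprioriEstimateFundamentalConfiguration}), and in the second half the fact that the backward chain built from the Lax-Oleinik fixed point equation at level $c$ has bounded jumps, which requires one to rerun the argument of lemma~\ref{Lemma:AprioriBoundWeakKAMsolution} with $\bar E$ replaced by $c$. Once these two spreading bounds are in hand, sublinearity of $u$ (with $\alpha$ arbitrarily small) closes the gap between $c$ and $\bar E$ in both directions.
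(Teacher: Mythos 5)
Your proposal is correct and follows essentially the same route as the paper: the inequality $c\le\bar E$ is obtained by testing the subaction inequality on fundamental configurations, whose jumps are controlled via lemma~\ref{Lemma:AprioriEstimateFundamentalConfiguration} (where the non-degeneracy enters), while $c\ge\bar E$ comes from a backward chain realizing equality in $T[u]=u+c$, with bounded jumps obtained by rerunning the argument of lemma~\ref{Lemma:AprioriBoundWeakKAMsolution} with $c$ in place of $\bar E$ (the paper isolates this as a preliminary lemma), and sublinearity of $u$ closes both estimates after letting $n\to\infty$ and $\alpha\to 0$. The only (harmless) difference is that you shortcut the linear-growth step of that lemma by invoking sublinearity directly.
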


The fact that $u$ is a fixed point for the Lax-Oleinik operator associated with the interaction $ E - c $ determines its regularity.

\begin{lemma}
Let $ E $ be an interaction model satisfying assumptions (H1-3) of hypotheses~\ref{assumption:interactionModel}. If  $u$ is a function satisfying  $ T[u] = u + c $ for some $ c$, then there are $K_c,R_c\geq0$ (depending only on the interaction model $E$ and on the constant $c$) such that
\begin{enumerate}
\item  $u$ is Lipschitz continuous with $\text{\rm Lip}(u) \leq K_c$,
\item  $\forall\, y \in \mathbb{R}, \, \, \argmin \{ u(\cdot) + E(\cdot,y) \} \subset [y-R_c, y+R_c]$.
\end{enumerate}
\end{lemma}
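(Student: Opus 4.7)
The plan is to transcribe the proof of Lemma~\ref{Lemma:AprioriBoundWeakKAMsolution} with $\bar E$ replaced by the constant $c$ throughout. The equation $T[u] = u+c$ is equivalent to the two conditions
\[
u(y) - u(x) \le E(x,y) - c \text{ for all } x, y, \qquad \forall y \; \exists x : u(y) - u(x) = E(x,y) - c,
\]
which are exactly the weak KAM conditions~\eqref{Equation:DefinitionWeakKAMsolution} with $c$ in place of $\bar E$; the earlier lemma used only these two facts together with hypotheses (H1)--(H3). Consequently the only parts of the argument that are sensitive to the change are the definitions of the constants $\tilde K_c$ and $R_c$, which simply absorb the additive shift.

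I will proceed in three steps mirroring the earlier proof. First, I will establish an \emph{a priori} linear growth bound for $u$: set $\tilde K_c := \sup_{|y-x|\le 1}|E(x,y) - c|$, which is finite by (H1)--(H2) since $\sup_{|y-x|\le 1}E(x,y) \le \sup_x E(x,x) + C_{\mathrm{Lip}}^E(1)$ and $\inf_{|y-x|\le 1}E(x,y) \ge \inf_{x,y}E(x,y)$. Combining the sub-calibration inequality with its symmetric version yields $|u(y)-u(x)| \le \tilde K_c$ for $|y-x|\le 1$, and then the standard subdivision argument gives $|u(y) - u(x)| \le \tilde K_c(|y-x|+1)$ for all $x, y$.

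Next, I will derive item~(ii). Superlinearity (H3) furnishes $R_c \ge \tilde K_c$ such that $E(x,y) - c > \tilde K_c(|y-x|+1)$ whenever $|y-x| > R_c$; combined with the linear growth bound from the first step, this prevents $u(y) - u(x) = E(x,y) - c$ unless $|y-x| \le R_c$, which is exactly item~(ii). Finally, for item~(i), given $y, z \in \mathbb{R}$ and $x \in \argmin\{u(\cdot)+E(\cdot,y)\}$, item~(ii) gives $|y-x| \le R_c$; splitting into the cases $|z-y| \ge 1$ (where the linear growth bound supplies $|u(z)-u(y)| \le 2\tilde K_c|z-y|$) and $|z-y| < 1$ (where $|z-x| \le R_c+1$ lets me invoke the local Lipschitz constant $C_{\mathrm{Lip}}^E(R_c+1)$ from (H2) to obtain $u(z) - u(y) \le E(x,z) - E(x,y) \le C_{\mathrm{Lip}}^E(R_c+1)|z-y|$) yields $u(z) - u(y) \le K_c|z-y|$ with $K_c := \max\{2\tilde K_c, C_{\mathrm{Lip}}^E(R_c+1)\}$; symmetry in $y, z$ concludes.

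No essential obstacle arises: the constant $c$ enters the argument only as an additive shift in the inequality $u(y) - u(x) \le E(x,y) - c$, and superlinearity absorbs it exactly as it absorbed $\bar E$ in the earlier proof. The only minor point to verify is finiteness of $\tilde K_c$ without an a priori comparison between $c$ and the range of $E$, which follows immediately from (H1)--(H2) as above.
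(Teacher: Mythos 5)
Your proposal is correct and takes essentially the same route as the paper, whose proof of this lemma is precisely to repeat the argument of lemma~\ref{Lemma:AprioriBoundWeakKAMsolution} verbatim with the constant $\tilde K$ replaced by $\sup_{|y-x| \leq 1} E(x,y) - c$. Your only deviation is defining $\tilde K_c$ with an absolute value, $\sup_{|y-x|\le 1}|E(x,y)-c|$, which is harmless (it only enlarges the constant) and neatly sidesteps any comparison between $c$ and $\inf E$.
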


\begin{proof}
Just repeat {\it verbatim} the proof of lemma~\ref{Lemma:AprioriBoundWeakKAMsolution} by taking into account $ \tilde K = \sup_{|y-x| \leq 1} E(x,y) - c $.
\end{proof}

\begin{proof}[Proof of proposition~\ref{nivel chave}]
We first show the inequality $ c \ge \bar E $. To that end, let $ (y_0, \ldots, y_n) $ be a configuration such that
$$ u(y_n) - u(y_0) = E(y_0, \ldots, y_n) - n c. $$
By the sublinearity of $ u $, given $ \alpha > 0 $, there is $ \beta $ with respect to which 
$ | u(y_n) - u(y_0) | \le \alpha |y_n - y_0| + \beta $. Therefore, thanks to the previous lemma, we obtain 
$$ c \ge - \frac{\alpha n R_c + \beta}{n} + \frac{1}{n} \inf_{(x_0, \ldots, x_n)} E(x_0, \ldots, x_n), $$
and the claim follows by passing to limit as $ n $ tends to infinity and then by considering $ \alpha $ arbitrarily close to zero.

We now show that $ \bar E \ge c $. Let $ (z_0, \ldots, z_n) $ be a fundamental configuration of size $ n $ (recall definition~\ref{Definition:FundamentalConfiguration}). We thus have
\begin{gather*}
E(z_0,\ldots, z_n)  \leq n\bar E, \quad \text{and} \\
u(z_n) - u(z_0) \le E(z_0, \ldots, z_n) - n c.
\end{gather*}
We then apply lemma~\ref{Lemma:AprioriEstimateFundamentalConfiguration} and again the sublinearity of $ u $ to get
$$ c \le \bar E + \frac{\alpha n R + \beta}{n}, $$
from which the conclusion follows as for the converse inequality. 
\end{proof}


\begin{thebibliography}{99}
\bibitem{Alisteetal}
 J. Aliste-Prieto, D. Coronel, M. I. Cortez, F.  Durand,  S. Petite, 
 \textit{Linearly repetitive Delone sets.} Mathematics of aperiodic order, 195--222, Progr. Math., 309, Birkh\"auser/Springer, Basel, 2015.


\bibitem{ArnaudZavidovique2023}
M.-C. Arnaud, M. Zavidovique, Weak KAM solutions and minimizing orbits of twists maps, \textit{Transaction of the American Mathematical Society} \textbf{376}, (2023), No. 11, 8129--8171.


\bibitem{AubryLeDaeron1983}
S. Aubry and P. Y. Le Daeron, The discrete Frenkel-Kontorova model and its extensions: 
I. Exact results for the ground-states, \textit{Physica D} \textbf{8} (1983), 381--422.

\bibitem{BernardBuffoni}
P. Bernard and B. Buffoni, 
Weak KAM pairs and Monge-Kantorovich duality. In 
\textit{Asymptotic analysis and singularities---elliptic and parabolic PDEs and related problems}, Adv. Stud. Pure Math. \textbf{47}, Math. Soc. Japan, Tokyo 2007, 397--420 


\bibitem{BraunKivshar2004}
O. M. Braun and Y. S. Kivshar, \textit{The Frenkel-Kontorova model: concepts, methods, and applications}, Texts and Monographs in Physics, Springer, Berlin, 2004.






\bibitem{ChouGriffiths1986} 
W. Chou and R. B. Griffiths, Ground states of one-dimensional systems using 
effective potentials, \textit{Physical Review B} \textbf{34} (1986), 6219--6234.


\bibitem{Contreras2001}
G. Contreras, Action potential and weak KAM solutions, 
\textit{Calc. Var. Partial Differ. Equ.} \textbf{13}, No.4  (2001), 427--458.


\bibitem{ContrerasIturriagaPaternain1998}
G. Contreras, R. Iturriaga, G. Paternain, M. Paternain, Lagrangian graphs, minimizing measures and Ma\~n\'e's critical values, 
\textit{Geom. Funct. Anal.} \textbf{8}, No.5 (1998),  788--809.

\bibitem{JianxingXifeng}
J. Du, X. Su,
On the existence of solutions for the Frenkel-Kontorova models on quasi-crystals.
\textit{Electron. Res. Arch.} \textbf{29} (2021), no. 6, 4177--4198. 

 
\bibitem{Fathi1997a}
A. Fathi, Th\'eor\`eme KAM faible et th\'eorie de Mather sur les syst\`emes lagrangiens,
\textit{Comptes Rendus des S\'eances de l'Acad\'emie des Sciences, S\'erie I, Math\'ematique}
\textbf{324} (1997), 1043--1046.

\bibitem{Fathi1997b}
A. Fathi, Solutions KAM faibles conjugu\'ees et barri\`eres de Peierls,
\textit{Comptes Rendus des S\'eances de l'Acad\'emie des Sciences, S\'erie I, Math\'ematique} \textbf{325} (1997), 649--652.

\bibitem{Fathi2010}
A. Fathi, Weak KAM Theorem in Lagrangian Dynamics, Preliminary version number 10 (2008).

\bibitem{FathiFigalli}
A. Fathi and A. Figalli, 
Optimal transportation on non-compact manifolds.\textit{Israel J. Math.} \textbf{175} (2010), 1--59. 

\bibitem{FathiMaderna2007}
A. Fathi, E. Maderna, Weak KAM theorem on non compact manifolds, 
\textit{Nonlinear differ. equ. appl.}, Vol. \textbf{14} (2007), 1--27.

\bibitem{Fogg}
 N. P. Fogg,
\textit{Substitutions in dynamics, arithmetics and combinatorics.}
Edited by V. Berth\'e, S. Ferenczi, C. Mauduit and A. Siegel. Lecture Notes in Mathematics, 1794. Springer-Verlag, Berlin, 2002.

\bibitem{FrenkelKontorova1938}
Ya. I. Frenkel and T. A. Kontorova,
On the theory of plastic deformation and twinning I, II, III,
\textit{Zhurnal Eksperimental'noi i Teoreticheskoi Fiziki} \textbf{8}
(1938) 89--95 (I), 1340--1349 (II), 1349--1359 (III).

\bibitem{GambaudoGuiraudPetite2006}
J.-M. Gambaudo, P. Guiraud, S. Petite, Minimal Configurations for the Frenkel-Kontorova Model on a Quasicrystal, Commun. Math. Phys., Vol. 265 (2006), 165--188.

\bibitem{GaribaldiThieullen2011}
E. Garibaldi and Ph. Thieullen, Minimizing orbits in the discrete Aubry-Mather model, \textit{Nonlinearity} \textbf{24} (2011), 563--611.

\bibitem{GaribaldiPetiteThieullen2017}
E. Garibaldi, S. Petite, Ph. Thieullen, Calibrated configurations for Frenkel-Kontorova type models in almost periodic environments, 
\textit{Ann. Henri Poincar\'e}, Vol. \textbf{18} (2017), 2905--2943.

\bibitem{Gomes2005}
D. A. Gomes, Viscosity solution methods and the discrete Aubry-Mather problem. 
\textit{Discrete and Continuous Dynamical Systems} Vol. \textbf{13}, No. 1 (2005), 103--116.

\bibitem{LagariasPleasant2003}
J.C. Lagarias, P.A.B. Pleasants,  Repetitive Delone sets and quasicrystal,  \textit{Ergodic Theory and
Dynam. Systems} \textbf{23(3)} (2003), 831--867.

\bibitem{LionsPapanicolaouVaradhan1987}
P. L. Lions, G. Papanicolaou, S. Varadhan, \textit{Homogenization of Hamilton-Jacobi equation},
preprint (1987).

\bibitem{LionsSouganidis2003}
P. L. Lions, P. E. Souganidis, Correctors for the homogenization of Hamilton-Jacobi equations in the stationary ergodic setting, \textit{Communications on Pure and Applied Mathematics} \textbf{56} (2003), 1501--1524.


\bibitem{Mather1993}
J. N. Mather,
Variational construction of connecting orbits,
\textit{Annales de l'Institut Fourier} \textbf{43} (1993), 1349--1386.

\bibitem{MorseHedlund1940}
 M. Morse, G. A. Hedlund,  
 Symbolic dynamics II. Sturmian trajectories. 
 \textit{Amer. J. Math.} \textbf{62} (1940), 1--42.

\bibitem{Queffelec1987}
M. Queff\'elec, 
\textit{Substitution dynamical systems -- spectral analysis}, 
Lecture Notes in Mathematics, 1294. Springer-Verlag, Berlin, 1987.

\bibitem{Shechtmannetal}
D. Shechtman, I. Blech, D. Gratias, and J. W. Cahn, Metallic phase with
long-range orientational order and no translational symmetry, 
\textit{Phys. Rev. Lett.} \textbf{53} (1984) 1951--1953.

\bibitem{Trevino} 
R. Trevi\~no, 
Equilibrium configurations for generalized Frenkel-Kontorova models on quasicrystals. 
\textit{Comm. Math. Phys.} \textbf{371} (2019), no. 1, 1--17.


\bibitem{Zavidovique2012_01}
M. Zavidovique, Strict sub-solutions and Ma\~n\'e potential in discrete weak KAM theory, 
\textit{Comment. Math. Helv.} Vol. \textbf{87} (2012), 1--39. 

\end{thebibliography}
\end{document}